\newtheorem{theorem}{Theorem}
\newtheorem{axiom}[theorem]{Axiom}
\newtheorem{conjecture}[theorem]{Conjecture}
\newtheorem{corollary}[theorem]{Corollary}
\newtheorem{definition}[theorem]{Definition}
\newtheorem{example}[theorem]{Example}
\newtheorem{exercise}[theorem]{Exercise}
\newtheorem{lemma}[theorem]{Lemma}
\newtheorem{problem}[theorem]{Problem}
\newtheorem{proposition}[theorem]{Proposition}
\newtheorem{remark}[theorem]{Remark}
\newtheorem{partial solution}[theorem]{Partial Solution}
\newenvironment{proof}[1][Proof]{\textbf{#1.} }{\ \rule{0.5em}{0.5em}}
\chardef\@x10\chardef\@xv60
\def\tcitime{
\def\@time{%
  \@minute\time\@hour\@minute\divide\@hour\@xv
  \ifnum\@hour<\@x 0\fi\the\@hour:%
  \multiply\@hour\@xv\advance\@minute-\@hour
  \ifnum\@minute<\@x 0\fi\the\@minute
  }}%
\def\QCTOpt[#1]#2{%
  \def\QCTOptB{#1}
  \def\QCTOptA{#2}
}
\def\QCTNOpt#1{%
  \def\QCTOptA{#1}
  \let\QCTOptB\empty
}
\def\Qct{%
  \@ifnextchar[{%
    \QCTOpt}{\QCTNOpt}
}
\def\QCBOpt[#1]#2{%
  \def\QCBOptB{#1}
  \def\QCBOptA{#2}
}
\def\QCBNOpt#1{%
  \def\QCBOptA{#1}
  \let\QCBOptB\empty
}
\def\Qcb{%
  \@ifnextchar[{%
    \QCBOpt}{\QCBNOpt}
}
\def\PrepCapArgs{%
  \ifx\QCBOptA\empty
    \ifx\QCTOptA\empty
      {}%
    \else
      \ifx\QCTOptB\empty
        {\QCTOptA}%
      \else
        [\QCTOptB]{\QCTOptA}%
      \fi
    \fi
  \else
    \ifx\QCBOptA\empty
      {}%
    \else
      \ifx\QCBOptB\empty
        {\QCBOptA}%
      \else
        [\QCBOptB]{\QCBOptA}%
      \fi
    \fi
  \fi
}
\def\GRAPHICSPS#1{%
 \ifcase\GRAPHICSTYPE
   \special{ps: #1}%
 \or
   \special{language "PS", include "#1"}%
 \fi
}%
\def\graffile#1#2#3#4{%
    \bgroup
    \leavevmode
    \@ifundefined{bbl@deactivate}{\def~{\string~}}{\activesoff}
    \raise -#4 \BOXTHEFRAME{%
        \hbox to #2{\raise #3\hbox to #2{\null #1\hfil}}}%
    \egroup
}%
\def\draftbox#1#2#3#4{%
 \leavevmode\raise -#4 \hbox{%
  \frame{\rlap{\protect\tiny #1}\hbox to #2%
   {\vrule height#3 width\z@ depth\z@\hfil}%
  }%
 }%
}%
\newif\ifwasdraft
\def\GRAPHIC#1#2#3#4#5{%
 \ifnum\draft=\@ne\draftbox{#2}{#3}{#4}{#5}%
  \else\graffile{#1}{#3}{#4}{#5}%
  \fi
 }%
\def\addtoLaTeXparams#1{%
    \edef\LaTeXparams{\LaTeXparams #1}}%
\newif\ifBoxFrame \BoxFramefalse
\newif\ifOverFrame \OverFramefalse
\newif\ifUnderFrame \UnderFramefalse
\def\BOXTHEFRAME#1{%
   \hbox{%
      \ifBoxFrame
         \frame{#1}%
      \else
         {#1}%
      \fi
   }%
}
\def\doFRAMEparams#1{\BoxFramefalse\OverFramefalse\UnderFramefalse\readFRAMEparams#1\end}%
\def\readFRAMEparams#1{%
 \ifx#1\end%
  \let\next=\relax
  \else
  \ifx#1i\dispkind=\z@\fi
  \ifx#1d\dispkind=\@ne\fi
  \ifx#1f\dispkind=\tw@\fi
  \ifx#1t\addtoLaTeXparams{t}\fi
  \ifx#1b\addtoLaTeXparams{b}\fi
  \ifx#1p\addtoLaTeXparams{p}\fi
  \ifx#1h\addtoLaTeXparams{h}\fi
  \ifx#1X\BoxFrametrue\fi
  \ifx#1O\OverFrametrue\fi
  \ifx#1U\UnderFrametrue\fi
  \ifx#1w
    \ifnum\draft=1\wasdrafttrue\else\wasdraftfalse\fi
    \draft=\@ne
  \fi
  \let\next=\readFRAMEparams
  \fi
 \next
 }%
\def\IFRAME#1#2#3#4#5#6{%
      \bgroup
      \let\QCTOptA\empty
      \let\QCTOptB\empty
      \let\QCBOptA\empty
      \let\QCBOptB\empty
      #6%
      \parindent=0pt%
      \leftskip=0pt
      \rightskip=0pt
      \setbox0 = \hbox{\QCBOptA}%
      \@tempdima = #1\relax
      \ifOverFrame
          \typeout{This is not implemented yet}%
          \show\HELP
      \else
         \ifdim\wd0>\@tempdima
            \advance\@tempdima by \@tempdima
            \ifdim\wd0 >\@tempdima
               \textwidth=\@tempdima
               \setbox1 =\vbox{%
                  \noindent\hbox to \@tempdima{\hfill\GRAPHIC{#5}{#4}{#1}{#2}{#3}\hfill}\\%
                  \noindent\hbox to \@tempdima{\parbox[b]{\@tempdima}{\QCBOptA}}%
               }%
               \wd1=\@tempdima
            \else
               \textwidth=\wd0
               \setbox1 =\vbox{%
                 \noindent\hbox to \wd0{\hfill\GRAPHIC{#5}{#4}{#1}{#2}{#3}\hfill}\\%
                 \noindent\hbox{\QCBOptA}%
               }%
               \wd1=\wd0
            \fi
         \else
            \ifdim\wd0>0pt
              \hsize=\@tempdima
              \setbox1 =\vbox{%
                \unskip\GRAPHIC{#5}{#4}{#1}{#2}{0pt}%
                \break
                \unskip\hbox to \@tempdima{\hfill \QCBOptA\hfill}%
              }%
              \wd1=\@tempdima
           \else
              \hsize=\@tempdima
              \setbox1 =\vbox{%
                \unskip\GRAPHIC{#5}{#4}{#1}{#2}{0pt}%
              }%
              \wd1=\@tempdima
           \fi
         \fi
         \@tempdimb=\ht1
         \advance\@tempdimb by \dp1
         \advance\@tempdimb by -#2%
         \advance\@tempdimb by #3%
         \leavevmode
         \raise -\@tempdimb \hbox{\box1}%
      \fi
      \egroup%
}%
\def\DFRAME#1#2#3#4#5{%
 \begin{center}
     \let\QCTOptA\empty
     \let\QCTOptB\empty
     \let\QCBOptA\empty
     \let\QCBOptB\empty
     \ifOverFrame 
        #5\QCTOptA\par
     \fi
     \GRAPHIC{#4}{#3}{#1}{#2}{\z@}
     \ifUnderFrame 
        \nobreak\par\nobreak#5\QCBOptA
     \fi
 \end{center}%
 }%
\def\FFRAME#1#2#3#4#5#6#7{%
 \begin{figure}[#1]%
  \let\QCTOptA\empty
  \let\QCTOptB\empty
  \let\QCBOptA\empty
  \let\QCBOptB\empty
  \ifOverFrame
    #4
    \ifx\QCTOptA\empty
    \else
      \ifx\QCTOptB\empty
        \caption{\QCTOptA}%
      \else
        \caption[\QCTOptB]{\QCTOptA}%
      \fi
    \fi
    \ifUnderFrame\else
      \label{#5}%
    \fi
  \else
    \UnderFrametrue%
  \fi
  \begin{center}\GRAPHIC{#7}{#6}{#2}{#3}{\z@}\end{center}%
  \ifUnderFrame
    #4
    \ifx\QCBOptA\empty
      \caption{}%
    \else
      \ifx\QCBOptB\empty
        \caption{\QCBOptA}%
      \else
        \caption[\QCBOptB]{\QCBOptA}%
      \fi
    \fi
    \label{#5}%
  \fi
  \end{figure}%
 }%
\def\makeactives{
  \catcode`\"=\active
  \catcode`\;=\active
  \catcode`\:=\active
  \catcode`\'=\active
  \catcode`\~=\active
}
   \gdef\activesoff{%
      \def"{\string"}
      \def;{\string;}
      \def:{\string:}
      \def'{\string'}
      \def~{\string~}
    }
\def\FRAME#1#2#3#4#5#6#7#8{%
 \bgroup
 \ifnum\draft=\@ne
   \wasdrafttrue
 \else
   \wasdraftfalse%
 \fi
 \def\LaTeXparams{}%
 \dispkind=\z@
 \def\LaTeXparams{}%
 \doFRAMEparams{#1}%
 \ifnum\dispkind=\z@\IFRAME{#2}{#3}{#4}{#7}{#8}{#5}\else
  \ifnum\dispkind=\@ne\DFRAME{#2}{#3}{#7}{#8}{#5}\else
   \ifnum\dispkind=\tw@
    \edef\@tempa{\noexpand\FFRAME{\LaTeXparams}}%
    \@tempa{#2}{#3}{#5}{#6}{#7}{#8}%
    \fi
   \fi
  \fi
  \ifwasdraft\draft=1\else\draft=0\fi{}%
  \egroup
 }%
\def\TEXUX#1{"texux"}
\long\def\QQQ#1#2{%
     \long\expandafter\def\csname#1\endcsname{#2}}%
\long\def\QQA#1#2{}%
\def\QTR#1#2{{\csname#1\endcsname #2}}
\def\EXPAND#1[#2]#3{}%
\def\NOEXPAND#1[#2]#3{}%
\def\LaTeXparent#1{}%
\def\ChildStyles#1{}%
\def\ChildDefaults#1{}%
\def\QTagDef#1#2#3{}%
  \providecommand{\UNICODE}[2][]{}
\def\QQfnmark#1{\footnotemark}
 \def\abstract{%
  \if@twocolumn
   \section*{Abstract (Not appropriate in this style!)}%
   \else \small 
   \begin{center}{\bf Abstract\vspace{-.5em}\vspace{\z@}}\end{center}%
   \quotation 
   \fi
  }%
   \def\registered{\relax\ifmmode{}\r@gistered
                    \else$\m@th\r@gistered$\fi}%
 \def\r@gistered{^{\ooalign
  {\hfil\raise.07ex\hbox{$\scriptstyle\rm\text{R}$}\hfil\crcr
  \mathhexbox20D}}}}{}%
\newdimen\theight
\def\Column{%
 \vadjust{\setbox\z@=\hbox{\scriptsize\quad\quad tcol}%
  \theight=\ht\z@\advance\theight by \dp\z@\advance\theight by \lineskip
  \kern -\theight \vbox to \theight{%
   \rightline{\rlap{\box\z@}}%
   \vss
   }%
  }%
 }%
\def\qed{%
 \ifhmode\unskip\nobreak\fi\ifmmode\ifinner\else\hskip5\p@\fi\fi
 \hbox{\hskip5\p@\vrule width4\p@ height6\p@ depth1.5\p@\hskip\p@}%
 }%
\def\miss{\hbox{\vrule height2\p@ width 2\p@ depth\z@}}%
\def\tcol#1{{\baselineskip=6\p@ \vcenter{#1}} \Column}  %
\def\newfmtname{LaTeX2e}
  \DeclareOldFontCommand{\rm}{\normalfont\rmfamily}{\mathrm}
  \DeclareOldFontCommand{\sf}{\normalfont\sffamily}{\mathsf}
  \DeclareOldFontCommand{\tt}{\normalfont\ttfamily}{\mathtt}
  \DeclareOldFontCommand{\bf}{\normalfont\bfseries}{\mathbf}
  \DeclareOldFontCommand{\it}{\normalfont\itshape}{\mathit}
  \DeclareOldFontCommand{\sl}{\normalfont\slshape}{\@nomath\sl}
  \DeclareOldFontCommand{\sc}{\normalfont\scshape}{\@nomath\sc}
\def\alpha{{\Greekmath 010B}}%
\def\beta{{\Greekmath 010C}}%
\def\gamma{{\Greekmath 010D}}%
\def\delta{{\Greekmath 010E}}%
\def\epsilon{{\Greekmath 010F}}%
\def\zeta{{\Greekmath 0110}}%
\def\eta{{\Greekmath 0111}}%
\def\theta{{\Greekmath 0112}}%
\def\iota{{\Greekmath 0113}}%
\def\kappa{{\Greekmath 0114}}%
\def\lambda{{\Greekmath 0115}}%
\def\mu{{\Greekmath 0116}}%
\def\nu{{\Greekmath 0117}}%
\def\xi{{\Greekmath 0118}}%
\def\pi{{\Greekmath 0119}}%
\def\rho{{\Greekmath 011A}}%
\def\sigma{{\Greekmath 011B}}%
\def\tau{{\Greekmath 011C}}%
\def\upsilon{{\Greekmath 011D}}%
\def\phi{{\Greekmath 011E}}%
\def\chi{{\Greekmath 011F}}%
\def\psi{{\Greekmath 0120}}%
\def\omega{{\Greekmath 0121}}%
\def\varepsilon{{\Greekmath 0122}}%
\def\vartheta{{\Greekmath 0123}}%
\def\varpi{{\Greekmath 0124}}%
\def\varrho{{\Greekmath 0125}}%
\def\varsigma{{\Greekmath 0126}}%
\def\varphi{{\Greekmath 0127}}%
\def\nabla{{\Greekmath 0272}}
\def\FindBoldGroup{%
   {\setbox0=\hbox{$\mathbf{x\global\edef\theboldgroup{\the\mathgroup}}$}}%
}
\def\Greekmath#1#2#3#4{%
    \if@compatibility
        \ifnum\mathgroup=\symbold
           \mathchoice{\mbox{\boldmath$\displaystyle\mathchar"#1#2#3#4$}}%
                      {\mbox{\boldmath$\textstyle\mathchar"#1#2#3#4$}}%
                      {\mbox{\boldmath$\scriptstyle\mathchar"#1#2#3#4$}}%
                      {\mbox{\boldmath$\scriptscriptstyle\mathchar"#1#2#3#4$}}%
        \else
           \mathchar"#1#2#3#4%
        \fi 
    \else 
        \FindBoldGroup
        \ifnum\mathgroup=\theboldgroup 
           \mathchoice{\mbox{\boldmath$\displaystyle\mathchar"#1#2#3#4$}}%
                      {\mbox{\boldmath$\textstyle\mathchar"#1#2#3#4$}}%
                      {\mbox{\boldmath$\scriptstyle\mathchar"#1#2#3#4$}}%
                      {\mbox{\boldmath$\scriptscriptstyle\mathchar"#1#2#3#4$}}%
        \else
           \mathchar"#1#2#3#4%
        \fi     	    
	  \fi}
\newif\ifGreekBold  \GreekBoldfalse
\let\SAVEPBF=\pbf
\def\pbf{\GreekBoldtrue\SAVEPBF}%
  \newcounter{equationnumber}  
  \def\mathletters{%
     \addtocounter{equation}{1}
     \edef\@currentlabel{\theequation}%
     \setcounter{equationnumber}{\c@equation}
     \setcounter{equation}{0}%
     \edef\theequation{\@currentlabel\noexpand\alph{equation}}%
  }
    \def\BibTeX{{\rm B\kern-.05em{\sc i\kern-.025em b}\kern-.08em
                 T\kern-.1667em\lower.7ex\hbox{E}\kern-.125emX}}}{}%
\def\AmS{{\protect\usefont{OMS}{cmsy}{m}{n}%
                A\kern-.1667em\lower.5ex\hbox{M}\kern-.125emS}}}{}%
\def\@@eqncr{\let\@tempa\relax
    \ifcase\@eqcnt \def\@tempa{& & &}\or \def\@tempa{& &}%
      \else \def\@tempa{&}\fi
     \@tempa
     \if@eqnsw
        \iftag@
           \@taggnum
        \else
           \@eqnnum\stepcounter{equation}%
        \fi
     \fi
     \global\tag@false
     \global\@eqnswtrue
     \global\@eqcnt\z@\cr}
\def\TCItag{\@ifnextchar*{\@TCItagstar}{\@TCItag}}
\def\@TCItag#1{%
    \global\tag@true
    \global\def\@taggnum{(#1)}}
\def\@TCItagstar*#1{%
    \global\tag@true
    \global\def\@taggnum{#1}}
\def\dfrac#1#2{{\displaystyle {#1 \over #2}}}%
\def\QATOP#1#2{{#1 \atop #2}}%
\def\QDATOP#1#2{{\displaystyle {#1 \atop #2}}}%
\let\DOTSI\relax
\def\RIfM@{\relax\ifmmode}%
\def\FN@{\futurelet\next}%
\def\iint{\DOTSI\intno@\tw@\FN@\ints@}%
\def\iiint{\DOTSI\intno@\thr@@\FN@\ints@}%
\def\iiiint{\DOTSI\intno@4 \FN@\ints@}%
\def\idotsint{\DOTSI\intno@\z@\FN@\ints@}%
\def\ints@{\findlimits@\ints@@}%
\newif\iflimtoken@
\newif\iflimits@
\def\findlimits@{\limtoken@true\ifx\next\limits\limits@true
 \else\ifx\next\nolimits\limits@false\else
 \limtoken@false\ifx\ilimits@\nolimits\limits@false\else
 \ifinner\limits@false\else\limits@true\fi\fi\fi\fi}%
\def\multint@{\int\ifnum\intno@=\z@\intdots@                          
 \else\intkern@\fi                                                    
 \ifnum\intno@>\tw@\int\intkern@\fi                                   
 \ifnum\intno@>\thr@@\int\intkern@\fi                                 
 \int}
\def\multintlimits@{\intop\ifnum\intno@=\z@\intdots@\else\intkern@\fi
 \ifnum\intno@>\tw@\intop\intkern@\fi
 \ifnum\intno@>\thr@@\intop\intkern@\fi\intop}%
\def\intic@{%
    \mathchoice{\hskip.5em}{\hskip.4em}{\hskip.4em}{\hskip.4em}}%
\def\negintic@{\mathchoice
 {\hskip-.5em}{\hskip-.4em}{\hskip-.4em}{\hskip-.4em}}%
\def\ints@@{\iflimtoken@                                              
 \def\ints@@@{\iflimits@\negintic@
   \mathop{\intic@\multintlimits@}\limits                             
  \else\multint@\nolimits\fi                                          
  \eat@}
 \else                                                                
 \def\ints@@@{\iflimits@\negintic@
  \mathop{\intic@\multintlimits@}\limits\else
  \multint@\nolimits\fi}\fi\ints@@@}%
\def\intkern@{\mathchoice{\!\!\!}{\!\!}{\!\!}{\!\!}}%
\def\plaincdots@{\mathinner{\cdotp\cdotp\cdotp}}%
\def\intdots@{\mathchoice{\plaincdots@}%
 {{\cdotp}\mkern1.5mu{\cdotp}\mkern1.5mu{\cdotp}}%
 {{\cdotp}\mkern1mu{\cdotp}\mkern1mu{\cdotp}}%
 {{\cdotp}\mkern1mu{\cdotp}\mkern1mu{\cdotp}}}%
\def\RIfM@{\relax\protect\ifmmode}
\def\text{\RIfM@\expandafter\text@\else\expandafter\mbox\fi}
\let\nfss@text\text
\def\text@#1{\mathchoice
   {\textdef@\displaystyle\f@size{#1}}%
   {\textdef@\textstyle\tf@size{\firstchoice@false #1}}%
   {\textdef@\textstyle\sf@size{\firstchoice@false #1}}%
   {\textdef@\textstyle \ssf@size{\firstchoice@false #1}}%
   \glb@settings}
\def\textdef@#1#2#3{\hbox{{%
                    \everymath{#1}%
                    \let\f@size#2\selectfont
                    #3}}}
\newif\iffirstchoice@
\def\Let@{\relax\iffalse{\fi\let\\=\cr\iffalse}\fi}%
\def\vspace@{\def\vspace##1{\crcr\noalign{\vskip##1\relax}}}%
\def\multilimits@{\bgroup\vspace@\Let@
 \baselineskip\fontdimen10 \scriptfont\tw@
 \advance\baselineskip\fontdimen12 \scriptfont\tw@
 \lineskip\thr@@\fontdimen8 \scriptfont\thr@@
 \lineskiplimit\lineskip
 \vbox\bgroup\ialign\bgroup\hfil$\m@th\scriptstyle{##}$\hfil\crcr}%
\def\Sb{_\multilimits@}%
\def\endSb{\crcr\egroup\egroup\egroup}%
\def\Sp{^\multilimits@}%
\newdimen\ex@
\def\rightarrowfill@#1{$#1\m@th\mathord-\mkern-6mu\cleaders
 \hbox{$#1\mkern-2mu\mathord-\mkern-2mu$}\hfill
 \mkern-6mu\mathord\rightarrow$}%
\def\leftarrowfill@#1{$#1\m@th\mathord\leftarrow\mkern-6mu\cleaders
 \hbox{$#1\mkern-2mu\mathord-\mkern-2mu$}\hfill\mkern-6mu\mathord-$}%
\def\leftrightarrowfill@#1{$#1\m@th\mathord\leftarrow
\mkern-6mu\cleaders
 \hbox{$#1\mkern-2mu\mathord-\mkern-2mu$}\hfill
 \mkern-6mu\mathord\rightarrow$}%
\def\overrightarrow{\mathpalette\overrightarrow@}%
\def\overrightarrow@#1#2{\vbox{\ialign{##\crcr\rightarrowfill@#1\crcr
 \noalign{\kern-\ex@\nointerlineskip}$\m@th\hfil#1#2\hfil$\crcr}}}%
\def\overleftarrow{\mathpalette\overleftarrow@}%
\def\overleftarrow@#1#2{\vbox{\ialign{##\crcr\leftarrowfill@#1\crcr
 \noalign{\kern-\ex@\nointerlineskip}$\m@th\hfil#1#2\hfil$\crcr}}}%
\def\overleftrightarrow{\mathpalette\overleftrightarrow@}%
\def\overleftrightarrow@#1#2{\vbox{\ialign{##\crcr
   \leftrightarrowfill@#1\crcr
 \noalign{\kern-\ex@\nointerlineskip}$\m@th\hfil#1#2\hfil$\crcr}}}%
\def\underrightarrow{\mathpalette\underrightarrow@}%
\def\underrightarrow@#1#2{\vtop{\ialign{##\crcr$\m@th\hfil#1#2\hfil
  $\crcr\noalign{\nointerlineskip}\rightarrowfill@#1\crcr}}}%
\def\underleftarrow{\mathpalette\underleftarrow@}%
\def\underleftarrow@#1#2{\vtop{\ialign{##\crcr$\m@th\hfil#1#2\hfil
  $\crcr\noalign{\nointerlineskip}\leftarrowfill@#1\crcr}}}%
\def\underleftrightarrow{\mathpalette\underleftrightarrow@}%
\def\underleftrightarrow@#1#2{\vtop{\ialign{##\crcr$\m@th
  \hfil#1#2\hfil$\crcr
 \noalign{\nointerlineskip}\leftrightarrowfill@#1\crcr}}}%
\def\qopnamewl@#1{\mathop{\operator@font#1}\nlimits@}
\let\nlimits@\displaylimits
\def\setboxz@h{\setbox\z@\hbox}
\def\varlim@#1#2{\mathop{\vtop{\ialign{##\crcr
 \hfil$#1\m@th\operator@font lim$\hfil\crcr
 \noalign{\nointerlineskip}#2#1\crcr
 \noalign{\nointerlineskip\kern-\ex@}\crcr}}}}
 \def\rightarrowfill@#1{\m@th\setboxz@h{$#1-$}\ht\z@\z@
  $#1\copy\z@\mkern-6mu\cleaders
  \hbox{$#1\mkern-2mu\box\z@\mkern-2mu$}\hfill
  \mkern-6mu\mathord\rightarrow$}
\def\leftarrowfill@#1{\m@th\setboxz@h{$#1-$}\ht\z@\z@
  $#1\mathord\leftarrow\mkern-6mu\cleaders
  \hbox{$#1\mkern-2mu\copy\z@\mkern-2mu$}\hfill
  \mkern-6mu\box\z@$}
\def\projlim{\qopnamewl@{proj\,lim}}
\def\injlim{\qopnamewl@{inj\,lim}}
\def\varinjlim{\mathpalette\varlim@\rightarrowfill@}
\def\varprojlim{\mathpalette\varlim@\leftarrowfill@}
\def\varliminf{\mathpalette\varliminf@{}}
\def\varliminf@#1{\mathop{\underline{\vrule\@depth.2\ex@\@width\z@
   \hbox{$#1\m@th\operator@font lim$}}}}
\def\varlimsup{\mathpalette\varlimsup@{}}
\def\varlimsup@#1{\mathop{\overline
  {\hbox{$#1\m@th\operator@font lim$}}}}
\def\align{\@verbatim \frenchspacing\@vobeyspaces \@alignverbatim
You are using the "align" environment in a style in which it is not defined.}
\let\csname endalign*\endcsname =\endtrivlist
\def\alignat{\@verbatim \frenchspacing\@vobeyspaces \@alignatverbatim
You are using the "alignat" environment in a style in which it is not defined.}
\let\csname endalignat*\endcsname =\endtrivlist
\def\xalignat{\@verbatim \frenchspacing\@vobeyspaces \@xalignatverbatim
You are using the "xalignat" environment in a style in which it is not defined.}
\let\csname endxalignat*\endcsname =\endtrivlist
\def\gather{\@verbatim \frenchspacing\@vobeyspaces \@gatherverbatim
You are using the "gather" environment in a style in which it is not defined.}
\let\csname endgather*\endcsname =\endtrivlist
\def\multiline{\@verbatim \frenchspacing\@vobeyspaces \@multilineverbatim
You are using the "multiline" environment in a style in which it is not defined.}
\let\csname endmultiline*\endcsname =\endtrivlist
\def\arrax{\@verbatim \frenchspacing\@vobeyspaces \@arraxverbatim
You are using a type of "array" construct that is only allowed in AmS-LaTeX.}
\def\tabulax{\@verbatim \frenchspacing\@vobeyspaces \@tabulaxverbatim
You are using a type of "tabular" construct that is only allowed in AmS-LaTeX.}
\let\csname endarrax*\endcsname =\endtrivlist
\let\csname endtabulax*\endcsname =\endtrivlist
 \def\endequation{%
     \ifmmode\ifinner 
      \iftag@
        \addtocounter{equation}{-1} 
        $\hfil
           \displaywidth\linewidth\@taggnum\egroup \endtrivlist
        \global\tag@false
        \global\@ignoretrue   
      \else
        $\hfil
           \displaywidth\linewidth\@eqnnum\egroup \endtrivlist
        \global\tag@false
        \global\@ignoretrue 
      \fi
     \else   
      \iftag@
        \addtocounter{equation}{-1} 
        \eqno \hbox{\@taggnum}
        \global\tag@false%
        $$\global\@ignoretrue
      \else
        \eqno \hbox{\@eqnnum}
        $$\global\@ignoretrue
      \fi
     \fi\fi
 } 
 \newif\iftag@ \tag@false
 \def\TCItag{\@ifnextchar*{\@TCItagstar}{\@TCItag}}
 \def\@TCItag#1{%
     \global\tag@true
     \global\def\@taggnum{(#1)}}
 \def\@TCItagstar*#1{%
     \global\tag@true
     \global\def\@taggnum{#1}}
     \def\tag{\@ifnextchar*{\@tagstar}{\@tag}}
     \def\@tag#1{%
         \global\tag@true
         \global\def\@taggnum{(#1)}}
     \def\@tagstar*#1{%
         \global\tag@true
         \global\def\@taggnum{#1}}
\begin{document}

\begin{center}
{\Huge Predicative proof theory of PDL and basic applications} {\Huge %
\medskip }

{\large L. Gordeev\smallskip }

\textit{T\"{u}bingen\medskip\ University}

l\texttt{ew.gordeew@uni-tuebingen.de}
\end{center}

\section{Extended abstract}

Propositional dynamic logic (\textbf{PDL}) was derived by M. J. Fischer and
R. Ladner \cite{FischerLadner0}, \cite{FischerLadner} from dynamic logic
where it plays the role that classical propositional logic plays in
classical predicate logic. Conceptually, it describes the properties of the
interaction between programs (as modal operators) and propositions that are
independent of the domain of computation. The semantics of \textbf{PDL} is
based on Kripke frames and comes from that of modal logic. Corresponding
sound and complete Hilbert-style formalism was proposed by K. Segerberg \cite
{Seger} (see also \cite{Par}, \cite{DynamicLogic}). Gentzen-style treatment
is more involved. This is because the syntax of \textbf{PDL} includes
starred programs $P^{\ast }$ which make finitary sequential formalism
similar to that of (say) Peano Arithmetic with induction (\textbf{PA}) that
allows no full cut-elimination. In the case of \textbf{PA}, however, there
is a well-known Sch\"{u}tte-style solution in the form of infinitary (also
called semiformal) sequent calculus with Carnap-style omega-rule that allows
full cut elimination, provably in \textbf{PA} extended by transfinite
induction up to Gentzen's ordinal $\varepsilon _{0}$ (cf. \cite{Carnap}, 
\cite{Sch}). By the same token, in the case of \textbf{PDL}, we introduce
Sch\"{u}tte-style semiformal one-sided sequent calculus \textsc{Seq}$%
_{\omega }^{\text{\textsc{pdl}}}$ whose inferences include the omega-rule
with principal formulas $\left[ P^{\ast }\right] \!A$ and prove
cut-elimination theorem using transfinite induction up to Veblen's
predicative ordinal $\varphi _{\omega }\left( 0\right) $ (that exceeds $%
\varepsilon _{0}$, see \cite{Veblen}, \cite{Fef2}). The ordinal increase in
question is caused by higher upper bounds on the complexity of cut formulas
that may contain nested occurrences of the starred programs, as compared to
plain occurrences of quantifiers in the language of \textbf{PA}. The
omega-rule-free derivations in \textsc{Seq}$_{\omega }^{\text{\textsc{pdl}}}$%
\ are finite and sequents deducible by these finite derivations are valid in 
\textbf{PDL}. Hence by the cutfree subformula property we conclude that any
given $\left[ P^{\ast }\right] $-free sequent is valid in \textbf{PDL} iff
it is deducible in \textsc{Seq}$_{\omega }^{\text{\textsc{pdl}}}$ by a
finite cut- and omega-rule free derivation, which by standard methods
enables better structural analysis of the validity of $\left[ P^{\ast }%
\right] $-free sequent involved. \footnote{{\footnotesize cf. e.g.
Gentzen-style conclusion that any given false equation }\underline{$n$}$=$%
\underline{$m$}{\footnotesize \ (in particular }$0=1${\footnotesize ) is not
valid in \textbf{PA}, since obviously it has no cutfree derivation. }} The
latter also refers to the computational complexity of the validity problem
in \textbf{PDL} that is known to be EXPTIME-complete (cf. \negthinspace \cite
{FischerLadner}, \cite{Pratt}). We show that \textbf{PDL}-validity of $\left[
P^{\ast }\right] $-free \emph{basic conjunctive normal expansions} (: $%
\mathrm{BCNE}$) is solvable in polynomial space, whereas \textbf{PDL}%
-validity of dual $\left[ P^{\ast }\right] $-free \emph{basic disjunctive
normal expansions} (: $\mathrm{BDNE}$), whose negations express that
satisfying Kripke frames encode accepting computations of polynomial-space
alternating TM, is EXPTIME-complete. Thus the conjecture \textbf{EXPTIME} = 
\textbf{PSPACE} holds true iff \textbf{PDL}-validity of $\mathrm{BDNE}$ is
decidable in polynomial space. We show that cutfree-derivability in \textsc{%
Seq}$_{\omega }^{\text{\textsc{pdl}}}$ \negthinspace (and hence \textbf{PDL}%
-validity) of any given $\mathrm{BDNE}$, $S$,\ is equivalent to the validity
of a suitable ``transparent'' quantified boolean formula $\widehat{S}$ whose
size is exponential in the size of $S$. Hence \textbf{EXPTIME = PSPACE}
holds true if $\widehat{S}$ is equivalid with another (hypothetical)
quantified boolean formula whose size is polynomial in the size of $S$, for
every $S\in $ $\mathrm{BDNE}$. This may reduce the former problem to a
complexity problem in quantified boolean logic. The whole proof is
formalized in \textbf{PA} extended by transfinite induction along $\varphi
_{\omega }\left( 0\right) $ (at most) -- actually in the corresponding
primitive recursive weakening, $\mathbf{PRA}_{\varphi _{\omega }\left(
0\right) }$.

\section{More detailed exposition}

\subsection{Hilbert-style proof system \textbf{PDL}}

\subparagraph{Language $\mathcal{L}$}

\begin{enumerate}
\item  Programs \textrm{PRO} (abbr.: $P$, $Q$, $R$, $S$, possibly indexed):

\begin{enumerate}
\item  include program-variables (\textrm{PRO}$_{0}$) $\pi _{0}$, $\pi _{1}$%
, ... (abbr.: $p$, $q$, $r$, possibly indexed),

\item  are closed under modal connectives $;$ and $\cup $ and star operation 
$^{\ast }$.
\end{enumerate}

\item  Formulas \textrm{FOR} (abbr.: $A$, $B$, $C$, $D$, $F$, $G$, $H$, $E$,
etc., possibly indexed):

\begin{enumerate}
\item  include formula-variables $\upsilon _{0}$, $\upsilon _{1}$, ...
(abbr.: $x$, $y$, $z$, possibly indexed),

\item  are closed under implication $\rightarrow $ , negation $\lnot $\ and
modal operation

$F\hookrightarrow \left[ P\right] F$, where $P\in \mathrm{PRO}$. \footnote{%
{\footnotesize Boolean constants are definable as usual e.g. by}\emph{\ }$%
1:=v_{0}\rightarrow v_{0}${\footnotesize \ and }$0:=\lnot 1${\footnotesize .}%
}
\end{enumerate}
\end{enumerate}

\textbf{Axioms (}cf. e.g. \cite{Seger}, \cite{DynamicLogic})$\mathbb{:}$ 
\footnote{{\footnotesize Standard axiom (}$D3${\footnotesize ) : }$\left[ P%
\right] \left( A\wedge B\right) \leftrightarrow \left( \left[ P\right]
A\wedge \left[ P\right] B\right) $ {\footnotesize follows by }$\left( \text{%
{\footnotesize G}}\right) ${\footnotesize \ from (}$D1${\footnotesize )--(}$%
D2${\footnotesize ), whereas (}$D6${\footnotesize )} {\footnotesize : }$%
\left[ A?\right] \!B\leftrightarrow \left( A\rightarrow B\right) $ 
{\footnotesize is obsolete\ in the chosen ?-free\ language.}}

$\left( \text{\textsc{D}}1\right) \quad Axioms\ of\ classical\
propositional\ logic.$

$\left( \text{\textsc{D}}2\right) \quad \left[ P\right] \left( A\rightarrow
B\right) \rightarrow \left( \left[ P\right] A\rightarrow \left[ P\right]
B\right) $

$\left( \text{\textsc{D}}4\right) \quad \left[ P;Q\right] A\leftrightarrow %
\left[ P\right] \!\left[ Q\right] A$

$\left( \text{\textsc{D}}5\right) \quad \left[ P\cup Q\right]
A\leftrightarrow \left[ P\right] \!A\wedge \left[ Q\right] A$

$\left( \text{\textsc{D}}7\right) \quad \left[ P^{\ast }\right]
A\leftrightarrow A\wedge \left[ P\right] \left[ P^{\ast }\right] A$

$\left( \text{\textsc{D}}8\right) \quad \left[ P^{\ast }\right] \left(
A\rightarrow \left[ P\right] A\right) \rightarrow \left( A\rightarrow \left[
P^{\ast }\right] A\right) $

\textbf{Inference rules:\medskip \qquad }

$
\begin{array}{l}
\left( \text{\textsc{MP}}\right) \quad \dfrac{A\quad A\rightarrow B}{B} \\ 
\left( \text{\textsc{G}}\right) \quad \dfrac{A}{\left[ P\right] A}
\end{array}
$

\subsection{Semiformal sequent calculus S{\protect\small EQ}$_{\protect\omega
}^{\text{PDL}}$}

\begin{definition}
The language of \textsc{Seq}$_{\omega }^{\text{\textsc{pdl}}}$ includes 
\emph{seq-formulas} and \emph{sequents}. Seq-formulas \ are built up from 
\emph{literals} $x$ and $\lnot x$ by propositional connectives $\vee $ and $%
\wedge $ and modal operations $\left[ P\right] $ and $\left\langle
P\right\rangle $ for arbitrary $P\in \mathrm{PRO}$. Seq-negation $\overline{F%
}$ is defined recursively as follows, for any seq-formula $F$.

\begin{enumerate}
\item  $\overline{x}:=\lnot x,\ \smallskip \overline{\lnot x}:=x,$

\item  $\overline{A\vee B}:=\overline{A}\wedge \overline{B},\ \overline{%
A\wedge B}:=\overline{A}\vee \overline{B}.$

\item  $\overline{\left\langle P\right\rangle \!A}:=\left[ P\right] \!%
\overline{A},\overline{\ \left[ P\right] \!A}:=\left\langle P\right\rangle 
\overline{\!A}.$
\end{enumerate}
\end{definition}

In the sequel we use abbreviations $\left\langle P\right\rangle ^{m}\!:=%
\overset{m\ times}{\overbrace{\left\langle P\right\rangle \cdots
\left\langle P\right\rangle }}$ and $\left[ P\right] ^{m}\!:=\overset{m\
times}{\overbrace{\left[ P\right] \cdots \left[ P\right] }}$. For any $\chi
\in \left\{ 0,1\right\} $, let $\!\left( P\right) _{\chi }:=\left\{ 
\begin{array}{ccc}
\left[ P\right] \text{,} & \text{if} & \chi =1\text{,} \\ 
\left\langle P\right\rangle \text{,} & \text{if} & \chi =0\text{.}
\end{array}
\right. $ For any $\overrightarrow{P}=P_{1},\cdots ,P_{k}$\ ($k\geq 0$) and $%
f:\left[ 1,k\right] \rightarrow \left\{ 0,1\right\} $ let $\left( 
\overrightarrow{P}\right) _{f}:=\left( P_{1}\right) \!_{f\left( 1\right)
}\cdots \left( P_{k}\right) \!_{f\left( k\right) }$. By $\left( 
\overrightarrow{Q}\right) $, $\left\langle \overrightarrow{Q}\right\rangle $
and $\left[ \overrightarrow{Q}\right] $ we abbreviate $\left( 
\overrightarrow{Q}\right) _{f}$\ for arbitrary $f$, $f\equiv 0$ and $f\equiv
1$, respectively. Formulas from $\mathrm{FOR}$ are represented as
seq-formulas recursively by $\lnot F:=\overline{F}$, $F\rightarrow G:=%
\overline{F}\vee G$ and, conversely, by $F\vee G:=\lnot F\rightarrow G$, $%
F\wedge G:=\lnot \left( F\rightarrow \lnot G\right) $, $\left\langle
P\right\rangle F:=\lnot \left[ P\right] \lnot F$. Sequents (abbr.: $\Gamma $%
, $\Delta $, $\Pi $, $\Sigma $, possibly indexed) are viewed as multisets
(possibly empty) of seq-formulas. A sequent $\Gamma =F_{1},\cdots ,F_{n}$ is
called \emph{valid} iff so is the corresponding disjunction $F_{1}\vee
\cdots \vee F_{n}$. \emph{Plain complexity} of a given formula and/or
program in $\mathcal{L}$ is its ordinary length (= total number of
occurrences of literals and connectives $\vee $, $\wedge \,$, $\cup $%
\thinspace , $;$\thinspace , $\ast $).

\begin{definition}
\emph{Ordinal complexity} $\frak{o\!}\left( -\right) <\omega ^{\omega }$\ of
formulas, programs and sequents in $\mathcal{L}$ is defined recursively as
follows, where $\frak{\alpha }+\!\!\!\!+\,\,\frak{\beta }$ is the symmetric
sum of ordinals $\alpha $ and $\beta $.

\begin{enumerate}
\item  $\frak{o\!}\left( x\right) =\frak{o\!}\left( \lnot x\right) =\frak{o\!%
}\left( p\right) :=0.$

\item  $\frak{o\!}\left( A\vee B\right) =\frak{o\!}\left( A\wedge B\right)
:=\max \left\{ \frak{o\!}\left( A\right) ,\frak{o\!}\left( B\right) \right\}
+1.$

\item  $\frak{o\!}\left( P\cup Q\right) :=\max \left\{ \frak{o\!}\left(
P\right) ,\frak{o\!}\left( Q\right) \right\} +1,\quad \frak{o\!}\left(
P;Q\right) :=\frak{o\!}\left( P\right) +\!\!\!\!\!+\,\,\frak{o\!}\left(
Q\right) +1.$

\item  $\frak{o\!}\left( P^{\ast }\right) :=\underset{m<\omega }{\sup }\frak{%
o\!}\left( \left( P\right) ^{m}\right) =\frak{o\!}\left( P\right) \cdot
\omega ,\quad \frak{o\!}\left( \left\langle P\right\rangle \!A\right) =\frak{%
o\!}\left( \left[ P\right] \!A\right) :=\frak{o\!}\left( P\right)
+\!\!\!\!\!+\,\,\frak{o\!}\left( A\right) +1.$

\item  $\frak{o\!}\left( \Gamma \right) :=\sum \left\{ \frak{o\!}\left(
A\right) :A\in \Gamma \right\} .$
\end{enumerate}
\end{definition}

\begin{definition}
\textsc{Seq}$_{\infty }^{\text{\textsc{pdl}}}$ includes the following axiom $%
\left( \text{\textsc{Ax}}\right) $ and inference rules $\left( \vee \right) $%
, $\left( \wedge \right) $, $\left\langle \cup \right\rangle $, $\left[ \cup %
\right] $, $\left\langle ;\right\rangle $, $\left[ ;\right] $, $\left\langle
\ast \right\rangle $, $\left[ \ast \right] $, $\left( \text{\textsc{Gen}}%
\right) $, $\left( \text{\textsc{Cut}}\right) $ in classical one-sided
sequent formalism in the language $\mathcal{L}$. In $\left[ \ast \right] $
we allow $\overrightarrow{Q}=\left[ \overrightarrow{Q}\right] =\emptyset $. 
\footnote{{\footnotesize We assume that all rules exposed have nonempty
premises. }} \footnote{$\left[ \ast \right] ${\footnotesize \ has infinitely
many premises. Ii is called the }$\omega ${\footnotesize -rule.}} 
\begin{equation*}
\begin{array}{c}
\begin{array}{c}
\fbox{$\left( \text{\textsc{Ax}}\right) \quad x,\lnot x,\Gamma $}
\end{array}
\quad \quad \qquad \\ 
\begin{array}{c}
\fbox{$\left( \vee \right) \quad \dfrac{A,B,\Gamma }{A\vee B,\Gamma }\ $}%
\fbox{$\left( \wedge \right) \quad \dfrac{A,\Gamma \quad \quad B,\Gamma }{%
A\wedge B,\Gamma }\ $}
\end{array}
\\ 
\ 
\begin{array}{c}
\fbox{$\left\langle \cup \right\rangle \quad \dfrac{\left\langle
P\right\rangle \!A,\left\langle R\right\rangle \!A,\Gamma }{\left\langle
P\cup R\right\rangle \!A,\Gamma }\ $}\fbox{$\left[ \cup \right] \quad \dfrac{%
\left[ P\right] \!A,\Gamma \text{\qquad }\left[ R\right] \!A,\Gamma }{\left[
P\cup R\right] \!A,\Gamma }\ $}
\end{array}
\, \\ 
\begin{array}{c}
\fbox{$\left\langle ;\right\rangle \quad \dfrac{\left\langle P\right\rangle
\!\!\left\langle R\right\rangle \!A,\Gamma }{\left\langle P;R\right\rangle
\!A,\Gamma }\ $}\fbox{$\left[ ;\right] \quad \dfrac{\left[ P\right] \!\left[
R\right] \!A,\Gamma }{\left[ P;R\right] \!A,\Gamma }\ $}
\end{array}
\qquad \ 
\end{array}
\end{equation*}
\begin{equation*}
\begin{array}{c}
\quad \qquad 
\begin{array}{c}
\fbox{$\left\langle \ast \right\rangle \quad \dfrac{\left\langle 
\overrightarrow{Q}\right\rangle \!\!\left\langle P\right\rangle
^{m}\!\!A,\left\langle \overrightarrow{Q}\right\rangle \!\!\left\langle
P^{\ast }\right\rangle \!A,\Gamma }{\left\langle \overrightarrow{Q}%
\right\rangle \!\!\left\langle P^{\ast }\right\rangle \!A,\Gamma }\left(
m\geq 0\right) \ $}\  \\ 
\fbox{$\left[ \ast \right] $\quad $\dfrac{\cdots \quad \left[ 
\overrightarrow{Q}\right] \!\!\left[ P\right] ^{m}\!\!A,\Gamma \quad \cdots
\ \left( \forall m\geq 0\right) }{\left[ \overrightarrow{Q}\right] \!\!\left[
P^{\ast }\right] \!A,\Gamma \quad \quad \quad \quad \quad }$}\ 
\end{array}
\\ 
\text{\ }\qquad \fbox{$
\begin{array}{c}
\left( \text{\textsc{Gen}}\right) \quad \dfrac{A_{1},\cdots ,A_{n}}{\left(
P\right) _{\chi _{1}}\!\!A_{1},\cdots ,\left( P\right) \!_{\chi
_{n}}\!A_{n},\Gamma }\ \left( n>0\right) \\ 
\text{if }\overset{n}{\underset{i=1}{\sum }}\chi _{i}=1.
\end{array}
$} \\ 
\quad \quad \quad 
\begin{array}{c}
\fbox{$\left( \text{\textsc{Cut}}\right) \quad \dfrac{C,\Gamma \quad \quad 
\overline{C},\Pi }{\Gamma \cup \Pi }$}
\end{array}
\quad
\end{array}
\end{equation*}
\end{definition}

For the sake of brevity we'll drop ``seq-'' when referring to seq-formulas
of \textsc{Seq}$_{\omega }^{\text{\textsc{pdl}}}$. $\Gamma $\ is called 
\emph{derivable} in \textsc{Seq}$_{\omega }^{\text{\textsc{pdl}}}$ if there
exists a (tree-like, possibly infinite) \textsc{Seq}$_{\omega }^{\text{%
\textsc{pdl}}}$ \emph{derivation }$\partial $\emph{\ }with the root sequent%
\emph{\ }$\Gamma $\emph{\ }(abbr.:\emph{\ }$\left( \partial :\Gamma \right) $
). We assume that \textsc{Seq}$_{\omega }^{\text{\textsc{pdl}}}$ \emph{%
derivations} are well-founded. The simplest way to implement this assumption
is to supply nodes $x$ in $\partial $ with ordinals $ord\left( x\right) $
such that ordinals of premises are always smaller that the ones of the
corresponding conclusions. Having this we let $h\left( \partial \right)
:=ord\left( root\left( \partial \right) \right) $ and call it \emph{the
height} of $\partial $.

\medskip In \textsc{Seq}$_{\omega }^{\text{\textsc{pdl}}}$, formulas
occurring in $\Gamma $ and/or $\Pi $ are called \emph{side formulas},
whereas other (distinguished) ones are called \emph{principal formulas}, of
axioms or inference rules exposed. These axioms and inferences, in turn, are
called \emph{principal} with respect to their principal formulas. Principal
formulas of $\left( \text{\textsc{Cut}}\right) $ are also called the
corresponding \emph{cut formulas}. We'll sometimes specify $\left( \text{%
\textsc{Gen}}\right) $ as $\left( \text{\textsc{Gen}}\right) _{P}$ to
indicate principal program $P$ involved.

\begin{theorem}[soundness and completeness]
\textsc{Seq}$_{\omega }^{\text{\textsc{pdl}}}$ is sound and complete with
respect to \textbf{PDL}. Moreover any \textbf{PDL}-valid sequent (in
particular formula) is derivable in \textsc{Seq}$_{\omega }^{\text{\textsc{%
pdl}}}$ using ordinals $<\frak{\omega }+\,\!\omega =:\omega \cdot 2$.
\end{theorem}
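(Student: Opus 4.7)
The theorem has two halves, each handled by a standard but nontrivial technique.

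\emph{Soundness.} Proceed by transfinite induction on $h(\partial)$. Interpret $\Gamma = F_1,\ldots,F_n$ as the disjunction $F_1 \vee \cdots \vee F_n$ in the standard Kripke semantics for \textbf{PDL}. For each rule, verify that validity of every premise (at every world of every frame) forces validity of the conclusion. The program-decomposition rules $\langle;\rangle,[;],\langle\cup\rangle,[\cup],\langle*\rangle$ reflect the defining clauses $R_{P;Q}=R_P\circ R_Q$, $R_{P\cup Q}=R_P\cup R_Q$, $R_{P^{\ast}}=R_P^{\ast}$ read off on the sequent side, and the omega-rule $[\ast]$ is sound because $[P^{\ast}]A$ is semantically equivalent to $\bigwedge_{m\geq 0}[P]^m A$. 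The nontrivial case is $(\text{\textsc{Gen}})$: the side condition $\sum\chi_i=1$ says exactly one $\chi_i$ is $1$, so exactly one conclusion-formula is boxed; at a world $w$, if the unique $[p]A_{i_0}$ fails then some $p$-successor $v$ falsifies $A_{i_0}$, and validity of the premise at $v$ forces some other $A_j$ to hold there, yielding the required $\langle p\rangle A_j$ at $w$. Cut is trivial.

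\emph{Completeness with ordinal bound.} Simulate a complete Hilbert-style axiomatization of \textbf{PDL} inside \textsc{Seq}$_\omega^{\text{\textsc{pdl}}}$. I first derive each axiom $(\text{\textsc{D}}1)$--$(\text{\textsc{D}}8)$ with controlled ordinal, then show admissibility of $(\text{\textsc{MP}})$ and $(\text{\textsc{G}})$. Axioms $(\text{\textsc{D}}1)$--$(\text{\textsc{D}}6)$ admit purely finitary derivations of height $<\omega$. Axioms $(\text{\textsc{D}}7)$ and $(\text{\textsc{D}}8)$, being intrinsically $[P^{\ast}]$-laden, require exactly one application of the omega-rule. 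For $(\text{\textsc{D}}8)$, which becomes the sequent $\langle P^{\ast}\rangle(A\wedge\langle P\rangle\overline{A}),\overline{A},[P^{\ast}]A$, apply $[\ast]$ with $\overrightarrow{Q}=\emptyset$ to reduce to the family $\langle P^{\ast}\rangle(A\wedge\langle P\rangle\overline{A}),\overline{A},[P]^m A$ for $m\in\omega$, and derive each member by a secondary finite induction on $m$ that peels off one $[P]$ per step via the modal rules and branches on whether $A$ is preserved at the current node (continuing the $[P]^m$-tower) or was already falsified at a previous node (extracting $A\wedge\langle P\rangle\overline{A}$ and folding it into $\langle P^{\ast}\rangle(\cdots)$ through the reflexive clause). $(\text{\textsc{MP}})$ is admissible by $(\text{\textsc{Cut}})$. $(\text{\textsc{G}})$ is admissible by induction on the program: atomic case by $(\text{\textsc{Gen}})$; the cases $R;S$ and $R\cup S$ by the rules $[;]$ and $[\cup]$; and $Q^{\ast}$ by $[\ast]$ whose premises $[Q]^m A$ are obtained by iterated applications of the inductively available $(\text{\textsc{G}})$ for $Q$.

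\emph{Organizing the ordinals and the main obstacle.} The bound $h(\partial)<\omega\cdot 2$ forces a careful organization: omega-rules may appear many times, but they cannot be iteratively stacked one above the other. The key device is that whenever a sequent of the form $[\overrightarrow{Q}][P^{\ast}]A,\Gamma$ must be derived — with $\overrightarrow{Q}$ possibly already containing starred programs inherited from the surrounding context — the rule $[\ast]$ is applied with the \emph{entire} $\overrightarrow{Q}$ absorbed into the principal formula, so that its premises $[\overrightarrow{Q}][P]^m A,\Gamma$ are handled by parallel independent subderivations rather than built on top of pre-existing omega-rules. With this strategy each axiom sits at height $\omega+O(1)$, every $(\text{\textsc{MP}})/(\text{\textsc{G}})$ step adds only finitely many nodes, and a Hilbert proof of length $n$ translates into a \textsc{Seq}$_\omega^{\text{\textsc{pdl}}}$-derivation of height $\omega+f(n)<\omega\cdot 2$. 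The principal difficulty is exactly the combinatorics of $(\text{\textsc{D}}8)$: producing, uniformly in $m$, a \emph{finite} derivation of $\langle P^{\ast}\rangle(A\wedge\langle P\rangle\overline{A}),\overline{A},[P]^m A$ that correctly juggles $\overline{A}$, the $[P]^m$-tower, and the reflexive-closure device, together with the bookkeeping that replaces direct $(\text{\textsc{Gen}})$ by induction on the program when $P$ is compound, so that all these pieces assemble into a single $[\ast]$-application and keep the total height just below $\omega\cdot 2$.
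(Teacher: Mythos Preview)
Your overall plan---soundness by transfinite induction on $h(\partial)$, completeness by embedding the Segerberg axiomatization, with $(D7)$ and $(D8)$ handled by a single application of $[\ast]$ whose $m$th premise has a finite derivation---is exactly the paper's approach, and your description of the $(D8)$-derivation is correct.

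The one genuine gap is in your treatment of $(\text{\textsc{G}})$ and the resulting ordinal bookkeeping. You read $(\text{\textsc{Gen}})$ as restricted to atomic $p$ and therefore simulate $(\text{\textsc{G}})$ for $Q^{\ast}$ by $[\ast]$ applied to the family $[Q]^{m}A$. But then your later claim that ``every $(\text{\textsc{MP}})/(\text{\textsc{G}})$ step adds only finitely many nodes'' is inconsistent with this: as soon as the Hilbert derivation of the premise $A$ has already used $(D7)$ or $(D8)$, its sequent-translation has height $\alpha\geq\omega$, so the premises $[Q]^{m}A$ sit at heights $\alpha+m\cdot c_Q$ with supremum $\alpha+\omega\geq\omega\cdot 2$, and the conclusion of $[\ast]$ lands at or above $\omega\cdot 2$. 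Your absorption device (taking the whole prefix $[\overrightarrow{Q}]$ into the principal formula of $[\ast]$) handles nested boxes \emph{inside one formula}, but does nothing for this situation, where the infinitary content lives in the already-completed derivation of the premise of $(\text{\textsc{G}})$.

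The paper sidesteps the issue entirely because its $(\text{\textsc{Gen}})$ is in fact used for \emph{arbitrary} programs $P$, not only atomic ones: the derivations of $(D2)$, $(D3)$, $(D8)$, the inversions in Lemma~7, and the principal cases 5(b), 5(c) of the cut-elimination proof all apply $(\text{\textsc{Gen}})$ with compound $P$ (the lowercase $p$ in Definition~3 is a notational slip). With $(\text{\textsc{Gen}})$ available for every $P$, necessitation $(\text{\textsc{G}})$ is literally one instance of $(\text{\textsc{Gen}})$ with $n=1,\ \chi_1=1$, contributing exactly $+1$ to the height; together with $(\text{\textsc{Cut}})$ for $(\text{\textsc{MP}})$ this gives height $\leq\omega+c+n<\omega\cdot 2$ for any Hilbert proof of length $n$, with no further organization required. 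You should either adopt this reading of $(\text{\textsc{Gen}})$, or---if you insist on the atomic version---first normalize the Hilbert proof so that every $(\text{\textsc{G}})$ is applied directly to an axiom (pushing $(\text{\textsc{G}})$ through $(\text{\textsc{MP}})$ via $(D2)$), and then argue that each $[\overrightarrow{P}](D_j)$ can be derived with all its $[\ast]$-applications sitting over purely finite subderivations.
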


\begin{proof}
The soundness says that any sequent $\Gamma $ that is derivable in \textsc{%
Seq}$_{\omega }^{\text{\textsc{pdl}}}$\ is valid in Kripke-style semantics
of \textbf{PDL}. It is proved by transfinite induction on $h\left( \partial
\right) $ of well-founded $\left( \partial :\Gamma \right) $ involved. 
\footnote{{\footnotesize Plain (finite) induction is sufficient for }$\left[
\ast \right] ${\footnotesize -free derivations.}} Actually it suffices to
verify that every inference rule of \textsc{Seq}$_{\omega }^{\text{\textsc{%
pdl}}}$\ preserves Kripke validity, which is easy (we omit the details; see
also Remark 5 below).

The completeness is proved as follows by deducing in \textsc{Seq}$_{\omega
}^{\text{\textsc{pdl}}}$\ the axioms and inferences $\left( D1\right) $, $%
\left( D2\right) $, $\left( D4\right) $, $\left( D5\right) $, $\left(
D7\right) ,\left( D8\right) $, $\left( \text{\textsc{MP}}\right) $, $\left( 
\text{\textsc{G}}\right) $ of \textbf{PDL}. By $\overset{\mathcal{L}}{\equiv 
}$ we denote the equivalence in propositional logic.

$\left( D1\right) $ is deducible by standard method via extended axiom $%
\left( \text{\textsc{Ax}}\right) ^{+}:$ $F,\overline{F},\Gamma $ whose
finite cutfree derivation is constructed by recursion on plain complexity of 
$F$ (in particular we pass by $\left( \text{\textsc{Gen}}\right) $ from $A,%
\overline{A}$ to $\left[ P\right] \!A,\left\langle P\right\rangle \!%
\overline{A},\Gamma $).

$\left( D4\right) $ and $\left( D5\right) $ are trivial, while $\left(
D2\right) $, $\left( D7\right) $, $\left( D8\right) $ are derivable as
follows.\smallskip

$\left( D2\right) :\ \left[ P\right] \left( A\rightarrow B\right)
\rightarrow \left( \left[ P\right] A\rightarrow \left[ P\right] B\right) 
\overset{\mathcal{L}}{\equiv }\left\langle P\right\rangle \left( A\wedge 
\overline{B}\right) \vee \left\langle P\right\rangle \!\overline{A}\vee %
\left[ P\right] \!B$.

$\fbox{$\dfrac{\quad \dfrac{\quad \dfrac{\overset{\left( \text{\textsc{Ax}}%
\right) ^{+}}{A,\overline{A}},B\quad \quad \overset{\left( \text{\textsc{Ax}}%
\right) ^{+}}{\overline{B},\overline{A}},B}{A\wedge \overline{B},\overline{A}%
,B}\left( \wedge \right) }{\left\langle P\right\rangle \left( A\wedge 
\overline{B}\right) ,\left\langle P\right\rangle \!\overline{A},\left[ P%
\right] \!B}\left( \text{\textsc{Gen}}\right) }{\left\langle P\right\rangle
\left( A\wedge \overline{B}\right) \vee \left\langle P\right\rangle \!%
\overline{A}\vee \left[ P\right] \!B\quad }\left( \vee \right) $}$\medskip
\medskip

$\left( D7\right) :\ \left[ P^{\ast }\right] A\leftrightarrow A\wedge \left[
P\right] \left[ P^{\ast }\right] A$

$\overset{\mathcal{L}}{\equiv }\,\left( \left\langle P^{\ast }\right\rangle
\!\overline{A}\vee \left( A\wedge \left[ P\right] \!\left[ P^{\ast }\right]
\!A\right) \right) \wedge \left( \left[ P^{\ast }\right] \!A\vee \overline{A}%
\vee \left\langle P\right\rangle \!\left\langle P^{\ast }\right\rangle \!%
\overline{A}\right) $.

\fbox{$\dfrac{\dfrac{\dfrac{\overset{\left( \text{\textsc{Ax}}\right) ^{+}}{%
\overline{A},\left\langle P^{\ast }\right\rangle \overline{\!A},A}}{%
\left\langle P^{\ast }\right\rangle \overline{\!A},A}\left\langle \ast
\right\rangle }{\left\langle P^{\ast }\right\rangle \overline{\!A}\vee A}%
\left( \vee \right) \dfrac{\dfrac{\dfrac{\overset{\left( \text{\textsc{Ax}}%
\right) ^{+}}{\left\langle P\right\rangle ^{m+1}\!\overline{A},\left\langle
P^{\ast }\right\rangle \overline{\!A},\left[ P\right] ^{m+1}\!A}}{\cdots \
\left\langle P^{\ast }\right\rangle \!\overline{A},\left[ P\right] ^{m+1}A\
\cdots }\left\langle \ast \right\rangle }{\left\langle P^{\ast
}\right\rangle \overline{\!A},\left[ P\right] \!\left[ P^{\ast }\right] \!A}%
\left[ \ast \right] }{\left\langle P^{\ast }\right\rangle \!\overline{A}\vee %
\left[ P\right] \!\left[ P^{\ast }\right] \!A}\left( \vee \right) }{%
\left\langle P^{\ast }\right\rangle \!\overline{A}\vee \left( A\wedge \left[
P\right] \!\left[ P^{\ast }\right] \!A\right) }\left( \wedge \right) $}$\
\&\ $

\fbox{$\dfrac{\dfrac{\QDATOP{\left( \text{\textsc{Ax}}\right) ^{+}}{A,%
\overline{A},\left\langle P\right\rangle \!\!\left\langle P^{\ast
}\right\rangle \overline{\!A}}\ \ \dfrac{\overset{\left( \text{\textsc{Ax}}%
\right) ^{+}}{\left[ P\right] ^{m+1}\!A,\overline{A},\left\langle
P\right\rangle ^{m+1}\!\overline{A},\left\langle P\right\rangle
\!\left\langle P^{\ast }\right\rangle \overline{\!A}}}{\left[ P\right]
^{m+1}\!\!A,\overline{A},\left\langle P\right\rangle \!\left\langle P^{\ast
}\right\rangle \overline{\!A}\ \ \ \cdots }\left\langle \ast \right\rangle }{%
\left[ P^{\ast }\right] \!A,\overline{A},\left\langle P\right\rangle
\!\left\langle P^{\ast }\right\rangle \overline{\!A}}\left[ \ast \right] }{%
\left[ P^{\ast }\right] \!A\vee \overline{A}\vee \left\langle P\right\rangle
\!\left\langle P^{\ast }\right\rangle \overline{\!A}}\left( \vee \right) $}%
\smallskip \medskip

$\left( D8\right) :\ \left[ P^{\ast }\right] \left( A\rightarrow \left[ P%
\right] A\right) \rightarrow \left( A\rightarrow \left[ P^{\ast }\right]
A\right) $

$\overset{\mathcal{L}}{\equiv }\left\langle P^{\ast }\right\rangle \left(
A\wedge \left\langle P\right\rangle \overline{A}\right) \vee \overline{A}%
\vee \left[ P^{\ast }\right] A$.

\fbox{$\dfrac{\dfrac{\QDATOP{{}}{\cdots }\ \QDATOP{\overset{\partial _{m}}{%
\Downarrow }}{\left\langle P^{\ast }\right\rangle \!\left( A\wedge
\left\langle P\right\rangle \!\overline{A}\right) ,\overline{A},\left[ P%
\right] ^{m}\!A}\ \QDATOP{{}}{\cdots \ }\quad \quad }{\,\left\langle P^{\ast
}\right\rangle \!\left( A\wedge \left\langle P\right\rangle \!\overline{A}%
\right) ,\overline{A},\left[ P^{\ast }\right] \!A}\left[ \ast \right] }{%
\,\left\langle P^{\ast }\right\rangle \!\left( A\wedge \left\langle
P\right\rangle \!\overline{A}\right) \vee \overline{A}\vee \left[ P^{\ast }%
\right] \!A}\left( \vee \right) $} ,\smallskip\ where:\smallskip

$\partial _{0}=$\fbox{$\left\langle P^{\ast }\right\rangle \!\left( A\wedge
\left\langle P\right\rangle \!\overline{A}\right) ,\overline{A},A\mathbf{%
\quad }\left( \text{\textsc{Ax}}\right) ^{+}$}

$\partial _{1}=$\fbox{$\dfrac{\dfrac{\overset{\left( \text{\textsc{Ax}}%
\right) ^{+}}{A,\left\langle P^{\ast }\right\rangle \!\!\left( A\!\wedge
\!\left\langle P\right\rangle \!\overline{A}\right) \!,\overline{A},\left[ P%
\right] \!A}\quad \overset{\left( \text{\textsc{Ax}}\right) ^{+}}{%
\left\langle P\right\rangle \!\overline{A},\left\langle P^{\ast
}\right\rangle \!\!\left( A\!\wedge \!\left\langle P\right\rangle \!%
\overline{A}\right) \!,\overline{A},\left[ P\right] \!A}}{A\wedge
\left\langle P\right\rangle \!\overline{A},\left\langle P^{\ast
}\right\rangle \!\left( A\wedge \left\langle P\right\rangle \!\overline{A}%
\right) ,\overline{A},\left[ P\right] \!A}\left( \wedge \right) }{%
\left\langle P^{\ast }\right\rangle \!\left( A\wedge \left\langle
P\right\rangle \!\overline{A}\right) ,\overline{A},\left[ P\right] \!A}%
\left\langle \ast \right\rangle $}

$\partial _{2}=$\fbox{$\dfrac{\dfrac{\QDATOP{\left( \text{\textsc{Ax}}%
\right) ^{+}}{\,A,\Pi ^{+},\overline{A},\left[ P\right] \!A,\left[ P\right]
\!^{2}A\quad \quad }\dfrac{\dfrac{\overset{\left( \text{\textsc{Ax}}\right)
^{+}}{\,\overline{A},A,P\!A}\overset{\left( \text{\textsc{Ax}}\right) ^{+}}{%
\,\quad \quad \overline{A},\left\langle P\right\rangle \!\overline{A},\left[
P\right] \!A}}{\,\overline{A},A\wedge \left\langle P\right\rangle \!%
\overline{A},\left[ P\right] \!A}\left( \wedge \right) }{\left\langle
P\right\rangle \!\overline{A},\Pi ^{+},\overline{A},\left[ P\right] \!A,%
\left[ P\right] \!^{2}A}\left( \text{\textsc{Gen}}\right) }{A\wedge
\left\langle P\right\rangle \!\overline{A},\left\langle P\right\rangle
\!\left( A\wedge \left\langle P\right\rangle \!\overline{A}\right) ,\Pi ,%
\overline{A},\left[ P\right] \!A,\left[ P\right] \!^{2}A}\left( \wedge
\right) }{\left\langle P^{\ast }\right\rangle \!\left( A\wedge \left\langle
P\right\rangle \!\overline{A}\right) ,\overline{A},\left[ P\right] ^{2}\!A}%
\left\langle \ast \right\rangle \left\langle \ast \right\rangle $}%
\smallskip\ for $\Pi :=\left\langle P^{\ast }\right\rangle \!\left( A\wedge
\left\langle P\right\rangle \!\overline{A}\right) $ and $\Pi
^{+}:=\left\langle P\right\rangle \!\left( A\wedge \left\langle
P\right\rangle \!\overline{A}\right) ,\Pi $,

etc. via $\left( \wedge \right) $, $\left\langle \ast \right\rangle $\ and $%
\left( \text{\textsc{Gen}}\right) $.\smallskip

Obviously these derivations require ordinal assignments $<\frak{\omega }%
+\omega $. \textbf{PDL} inferences $\left( \text{\textsc{MP}}\right) $ and $%
\left( \text{\textsc{G}}\right) $ are obviously derivable by $\left( \text{%
\textsc{Cut}}\right) $ and $\left( \text{\textsc{Gen}}\right) $,
respectively. These\ increase ordinals by one, which makes an arbitrary
Hilbert-style \textbf{PDL} deduction interpretable as a \textsc{Seq}$%
_{\omega }^{\text{\textsc{pdl}}}$ derivation of the height $<\omega \cdot 2$%
, as required.
\end{proof}

\begin{remark}
The validity of $\left( \text{\textsc{Gen}}\right) $ also follows by plain
generalzation $\left( \text{\textsc{G}}\right) $ from $\left( D1\right) $, $%
\left( D2\right) $ and derivable (dual) $\left( D3\right) :\left\langle
P\right\rangle \left( A\vee B\right) \leftrightarrow \left( \left\langle
P\right\rangle A\vee \left\langle P\right\rangle B\right) $ (cf. Footnote
3), e.g. like this: 
\begin{equation*}
\fbox{$\dfrac{A_{1},A_{2},\cdots ,A_{n}\overset{\mathcal{L}}{\equiv }%
A_{1}\vee A_{2}\vee \cdots \vee A_{n}}{%
\begin{array}{c}
\left[ P\right] \!\left( A_{1}\vee A_{2}\vee \cdots \vee A_{n}\right) 
\overset{\mathcal{L}}{\equiv }\left[ P\right] \!\left( \lnot \left(
A_{2}\vee \cdots \vee A_{n}\right) \!\rightarrow \!A_{1}\right) \!\underset{%
\left( D2\right) }{\Rightarrow } \\ 
\left[ P\right] \!\left( \lnot \left( A_{2}\!\vee {}\cdots \!\vee
{}A_{n}\right) \!\rightarrow \!\left[ P\right] \!A_{1}\right) \overset{%
\mathcal{L}}{\equiv }\left[ P\right] \!A_{1}\!\vee \!\left\langle
P\right\rangle \!\left( A_{2}\!\vee \!\cdots \!\vee \!A_{n}\right) \\ 
\underset{\left( D1\right) }{\Rightarrow }\left[ P\right] \!A_{1}\vee
\left\langle P\right\rangle \!\left( A_{2}\vee \cdots \vee A_{n}\right) \vee
\Gamma \underset{\left( D3\right) }{\Rightarrow } \\ 
\left[ P\right] \!A_{1}\!\vee \!\left\langle P\right\rangle \!A_{2}\!\vee
\!\cdots \!\vee \!\left\langle P\right\rangle \!A_{n}\!\vee \!\Gamma 
\overset{\mathcal{L}}{\equiv }\left[ P\right] \!A_{1},\!\left\langle
P\right\rangle \!A_{2},\!\cdots ,\left\langle P\right\rangle \!A_{n},\Gamma
\end{array}
}\left( \text{\textsc{G}}\right) $}
\end{equation*}
\end{remark}

\subsection{Cut elimination procedure}

\subsubsection{Auxiliary sequent calculus S{\protect\small EQ}$_{\protect%
\omega \text{\textsc{+}}}^{\text{PDL}}$}

\begin{definition}
\textsc{Seq}$_{\omega \text{\textsc{+}}}^{\text{\textsc{pdl}}}$ is a
modification of \textsc{Seq}$_{\omega }^{\text{\textsc{pdl}}}$ that includes
the following upgraded inferences $\left\langle \overset{+}{\cup }%
\right\rangle $, $\left[ \overset{+}{\cup }\right] $, $\left\langle \overset{%
+}{;}\right\rangle $, $\left[ \overset{+}{;}\right] $. 
\begin{equation*}
\begin{array}{c}
\quad \qquad \, 
\begin{array}{c}
\fbox{$\left\langle \overset{+}{\cup }\right\rangle \ \dfrac{\left\langle 
\overrightarrow{Q}\right\rangle \!\!\left\langle P\right\rangle
\!A,\left\langle \overrightarrow{Q}\right\rangle \!\!\left\langle
R\right\rangle \!A,\Gamma }{\left\langle \overrightarrow{Q}\right\rangle
\!\!\left\langle P\cup R\right\rangle \!A,\Gamma }\ $}\fbox{$\left[ \overset{%
+}{\cup }\right] \ \dfrac{\left[ \overrightarrow{Q}\right] \!\!\left[ P%
\right] \!A,\Gamma \text{\qquad }\left[ \overrightarrow{Q}\right] \!\!\left[
R\right] \!A,\Gamma }{\left[ \overrightarrow{Q}\right] \!\!\left[ P\cup R%
\right] \!A,\Gamma }\ $}
\end{array}
\\ 
\,\quad \quad \quad \quad \ \quad 
\begin{array}{c}
\fbox{$\left\langle \,\overset{+}{;}\right\rangle \quad \dfrac{\left\langle 
\overrightarrow{Q}\right\rangle \!\!\left\langle P\right\rangle
\!\!\left\langle R\right\rangle \!A,\Gamma }{\left\langle \overrightarrow{Q}%
\right\rangle \!\!\left\langle P;R\right\rangle \!A,\Gamma }\ $}\fbox{$\left[
\,\overset{+}{;}\right] \quad \dfrac{\left[ \overrightarrow{Q}\right] \!\!%
\left[ P\right] \!\left[ R\right] \!A,\Gamma }{\left[ \overrightarrow{Q}%
\right] \!\!\left[ P;R\right] \!A,\Gamma }\ $}
\end{array}
\qquad \ \ \qquad \qquad \qquad
\end{array}
\end{equation*}
\end{definition}

Obviously these upgrades are still sound in \textbf{PDL} and cut-free
derivable in \textsc{Seq}$_{\omega }^{\text{\textsc{pdl}}}$. Hence \textsc{%
Seq}$_{\omega }^{\text{\textsc{pdl}}}$ and \textsc{Seq}$_{\omega \text{%
\textsc{+}}}^{\text{\textsc{pdl}}}$ are proof theoretically equivalent. In
the sequel for the sake of brevity we use old names $\left\langle \cup
\right\rangle $, $\left[ \cup \right] $, $\left\langle ;\right\rangle $, $%
\left[ ;\right] $ also for the corresponding upgrades.

\subsubsection{Admissible refinements}

\begin{lemma}
The following inferences are admissible in \textsc{Seq}$_{\omega \text{%
\textsc{+}}}^{\text{\textsc{pdl}}}$ \emph{minus}$\ \left( \text{\textsc{Cut}}%
\right) $. Moreover, for any inversion $\dfrac{\left( \partial :\Delta
\right) }{\!\left( \partial ^{\circlearrowleft }:\Gamma \right) }$ involved
we have $h\left( \partial ^{\circlearrowleft }\right) <h\left( \partial
\right) +\,\!\omega $. In $\left( \overrightarrow{\text{\textsc{Gen}}}%
\right) $ we assume that $\overrightarrow{P}=P_{1},\cdots ,P_{k}$ $\left(
k>0\right) $, $f_{1},\cdots ,f_{n}:\left[ 1,k\right] \rightarrow \left\{
0,1\right\} $ and $\left( \forall j\in \left[ 1,k\right] \right) \overset{n}{%
\underset{i=1}{\sum }}f_{i}\left( j\right) =1$. Note that $\left( \text{%
\textsc{Gen}}\right) $ is a particular case of $\left( \overrightarrow{\text{%
\textsc{Gen}}}\right) $.
\end{lemma}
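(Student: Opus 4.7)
The plan is to prove each refinement by transfinite induction on $h(\partial)$, with a case split on the last inference of $\partial$. For each standard inversion $\frac{(\partial:\Delta)}{(\partial^{\circlearrowleft}:\Gamma)}$ (e.g.\ for $(\vee)$, $(\wedge)$, $\langle\cup\rangle$, $[\cup]$, $\langle;\rangle$, $[;]$, $\langle\ast\rangle$, $[\ast]$ in their upgraded prefixed forms), the reasoning is uniform: if the final rule of $\partial$ is principal for the formula being inverted, its premise already delivers (a superset of) $\Gamma$ at strictly smaller height; otherwise the principal formula sits in the side-formula context and survives into $\Gamma$, so I apply the inductive hypothesis to every premise sub-derivation and reapply the same rule at the bottom. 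The omega-rule $[\ast]$ is the only place where infinitely many premises occur, and there the inductive hypothesis is applied uniformly in $m$, after which a single fresh $[\ast]$ reassembles the inverted family; at this limit step the ordinal height is bounded by the supremum of the inverted premise heights plus a finite shift, which remains below $h(\partial)+\omega$.

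The derivation of $(\overrightarrow{\mathrm{Gen}})$ is carried out by a separate induction on the prefix length $k$. The base $k=1$ is just $(\mathrm{Gen})$. For the step I strip off the innermost program $P_k$ by applying $(\mathrm{Gen})_{P_k}$ once: the hypothesis $(\forall j)\sum_{i}f_i(j)=1$ guarantees that at coordinate $k$ there is a unique $i_0$ with $f_{i_0}(k)=1$, so the side condition $\sum \chi_i = 1$ of $(\mathrm{Gen})$ is met when I attach $[P_k]$ to the $i_0$-th formula and $\langle P_k\rangle$ to the others. The result is a sequent to which the inductive hypothesis on $k-1$ applies, with the truncated functions $f_i\!\restriction\![1,k-1]$ still satisfying the column-sum condition. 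Since only finitely many $(\mathrm{Gen})$ steps (bounded in $nk$) are inserted, the resulting derivation has height bounded by $h(\partial)+nk<h(\partial)+\omega$.

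The delicate bookkeeping is the height bound $h(\partial^{\circlearrowleft})<h(\partial)+\omega$ rather than just $\leq h(\partial)$. The reason one cannot stay at the same height is twofold: the upgraded rules $\langle\cup\rangle,[\cup],\langle;\rangle,[;],\langle\ast\rangle$ carry a modal prefix $\langle\overrightarrow{Q}\rangle$ which must be rebuilt by a finite number of $(\mathrm{Gen})$ applications when the last rule of $\partial$ is itself a $(\mathrm{Gen})$ or $(\overrightarrow{\mathrm{Gen}})$ splitting the prefix, and the generalized $(\overrightarrow{\mathrm{Gen}})$ itself needs up to $k$ stacked $(\mathrm{Gen})$'s. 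All these additions are finite, so passing through the induction they accumulate only to an $\omega$-bounded increment; this is where the $+\omega$ slack is essential.

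The main obstacle, and the point requiring most care, is the interaction between the inversion and the $(\mathrm{Gen})$ / $(\overrightarrow{\mathrm{Gen}})$ rule when the formula being inverted sits underneath the modal prefix introduced by that rule. In that situation the inversion has to commute past the prefix: I apply the inductive hypothesis to the single premise of $(\overrightarrow{\mathrm{Gen}})$, extract the decomposed form (say $\langle P\rangle A,\langle R\rangle A,\ldots$ in place of $\langle P\cup R\rangle A$), and then reintroduce the prefix using a freshly engineered $(\overrightarrow{\mathrm{Gen}})$ whose functions $f_i$ are updated to account for the new formula split; verifying that the column-sum condition remains satisfiable after this reshuffling is the combinatorial heart of the argument, and it is what the formulation of $(\overrightarrow{\mathrm{Gen}})$ (as opposed to plain $(\mathrm{Gen})$) is designed to accommodate.
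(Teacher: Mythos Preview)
Your overall scheme---transfinite induction on $h(\partial)$ with a case split on the last rule, and iterating $(\textsc{Gen})$ $k$ times for $(\overrightarrow{\textsc{Gen}})$---matches the paper's. However, your dichotomy ``if the final rule is principal for the inverted formula, its premise already delivers $\Gamma$; otherwise apply the inductive hypothesis and reapply the rule'' misses the case that actually drives the paper's argument. In this calculus $(\textsc{Gen})$ may introduce a \emph{compound} program, so for instance the last rule of $\partial$ can be
\[
\dfrac{(\partial_1:A,B,C)}{\langle P\cup R\rangle A,\langle P\cup R\rangle B,[P\cup R]C,\Gamma}\ (\textsc{Gen})_{P\cup R}
\]
with the inverted formula $\langle P\cup R\rangle A$ principal. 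Here the premise contains the bare $A$, not $\langle P\rangle A,\langle R\rangle A$, so neither branch of your dichotomy applies. The paper resolves this by running $(\textsc{Gen})_P$ and $(\textsc{Gen})_R$ separately on the \emph{same} premise $\partial_1$ and then rebuilding the other principal formulas with $\langle\cup\rangle$ and $[\cup]$; no induction hypothesis is invoked at this node. The analogous $[\ast]^{\circlearrowleft}$ case with last rule $(\textsc{Gen})_{P^\ast}$ is handled by replacing that single $(\textsc{Gen})$ by $(\overrightarrow{\textsc{Gen}})_{\underbrace{P\cdots P}_{m}}$ followed by a $\langle\ast\rangle$ to restore the $\langle P^\ast\rangle$-side formulas---and this, not the prefix commutation you describe, is the source of the unbounded-in-$m$ finite increment that forces the $+\,\omega$ in the height bound.

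Your final paragraph treats only the situation where $(\textsc{Gen})$ adds a prefix $Q$ \emph{on top of} the inverted formula (premise contains $\langle P\cup R\rangle A$, apply IH, reapply $(\textsc{Gen})$); that case is indeed routine and the paper dispatches it in one line. The substantive work is the direct-introduction case above, which your proposal does not address. A minor additional point: you list $\langle\ast\rangle$ among the invertible rules, but there is no $\langle\ast\rangle^{\circlearrowleft}$ in the lemma---$\langle\ast\rangle$ is disjunction-like and not invertible.
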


\begin{eqnarray*}
&& 
\begin{array}{c}
\fbox{$\left( \text{\textsc{W}}\right) \quad \dfrac{\Gamma }{\Gamma ,\Pi }\ $%
\ (weakening)}
\end{array}
\begin{array}{c}
\fbox{$\left( \text{\textsc{C}}\right) \quad \dfrac{A,A,\Gamma }{A,\Gamma }\ 
$(contraction)}
\end{array}
\\
&& 
\begin{array}{c}
\fbox{$\left( \vee \right) ^{\circlearrowleft }\quad \dfrac{\!A\vee B,\Gamma 
}{\!A,\!B,\Gamma }\ $}
\end{array}
\begin{array}{c}
\fbox{$\left( \wedge \right) _{1}^{\circlearrowleft }\quad \dfrac{\!A\wedge
B,\Gamma }{\!A,\Gamma }\ $}
\end{array}
\begin{array}{c}
\fbox{$\left( \wedge \right) _{2}^{\circlearrowleft }\quad \dfrac{\!A\wedge
B,\Gamma }{\!B,\Gamma }\ $}
\end{array}
\\
&& 
\begin{array}{c}
\fbox{$\left\langle \cup \right\rangle ^{\circlearrowleft }\quad \dfrac{%
\!\left\langle \overrightarrow{Q}\right\rangle \!\!\left\langle P\cup
R\right\rangle \!A,\Gamma }{\!\left\langle \overrightarrow{Q}\right\rangle
\!\!\left\langle P\right\rangle \!A,\!\left\langle \overrightarrow{Q}%
\right\rangle \!\!\left\langle R\right\rangle \!A,\Gamma }\ $}
\end{array}
\\
&& 
\begin{array}{c}
\fbox{$\left[ \cup \right] _{1}^{\circlearrowleft }\quad \dfrac{\!\left[ 
\overrightarrow{Q}\right] \!\!\left[ P\cup R\right] \!A,\Gamma }{\left[ 
\overrightarrow{Q}\right] \!\!\left[ P\right] \!A,\Gamma }\ $}
\end{array}
\begin{array}{c}
\fbox{$\left[ \cup \right] _{2}^{\circlearrowleft }\quad \dfrac{\left[ 
\overrightarrow{Q}\right] \!\!\left[ P\cup R\right] \!A,\Gamma }{\!\left[ 
\overrightarrow{Q}\right] \!\!\left[ R\right] \!A,\Gamma }\ $}
\end{array}
\\
&& 
\begin{array}{c}
\fbox{$\left\langle \,\cup \right\rangle ^{\circlearrowleft }\quad \dfrac{%
\!\left\langle \overrightarrow{Q}\right\rangle \!\!\left\langle
P;\!R\right\rangle \!A,\Gamma }{\!\left\langle \overrightarrow{Q}%
\right\rangle \!\!\left\langle P\right\rangle \!\!\left\langle
R\right\rangle \!A,\Gamma }\ $}
\end{array}
\begin{array}{c}
\fbox{$\,\left[ \cup \right] ^{\circlearrowleft }\quad \dfrac{\left[ 
\overrightarrow{Q}\right] \!\!\left[ P;\!R\right] \!A,\Gamma }{\left[ 
\overrightarrow{Q}\right] \!\!\left[ P\right] \!\left[ R\right] \!A\!,\Gamma 
}\ $}
\end{array}
\end{eqnarray*}
\begin{eqnarray*}
&& 
\begin{array}{c}
\fbox{$\left[ \ast \right] ^{\circlearrowleft }\quad \dfrac{\left[ 
\overrightarrow{Q}\right] \!\!\left[ P^{\ast }\right] \!A,\Gamma }{\left[ 
\overrightarrow{Q}\right] \!\!\left[ P\right] ^{m}\!A,\Gamma }\ \left( m\geq
0\right) \ $}
\end{array}
\\
&& 
\begin{array}{c}
\fbox{$\left( \overrightarrow{\text{\textsc{Gen}}}\right) \quad \dfrac{%
A_{1},\cdots ,A_{n}}{\left( \overrightarrow{P}\right) _{\text{\negthinspace }%
f_{1}}\!\!\!A_{1},\cdots ,\left( \overrightarrow{P}\right)
_{\!f_{n}}\!\!\!A_{n},\Gamma }\ \left( n>0\right) $}
\end{array}
\end{eqnarray*}

\begin{proof}
Induction on proof height and/or formula complexity. Cases $\left( \text{%
\textsc{W}}\right) $, $\left( \text{\textsc{C}}\right) $ are standard. Note
that $\left( \text{\textsc{C}}\right) $ with principal $\left( \text{\textsc{%
Gen}}\right) $ is trivial, e.g.

$\partial :\ $\fbox{$\dfrac{\left( \partial _{1}:A,A,B\right) }{\left\langle
P\right\rangle \!A,\left\langle P\right\rangle \!A,\left[ P\right] B,\Gamma }%
\ \left( \text{\textsc{Gen}}\right) $}$\quad \hookrightarrow \quad \partial
^{\text{\textsc{C}}}:\ $\fbox{$\dfrac{\left( \partial _{1}^{\text{\textsc{C}}%
}:A,B\right) }{\left\langle P\right\rangle \!A,\left[ P\right] B,\Gamma }\
\left( \text{\textsc{Gen}}\right) $}.\smallskip

\textbf{Case} $\left( \overrightarrow{\text{\textsc{Gen}}}\right) $ is an
obvious iteration of $\left( \text{\textsc{Gen}}\right) $.

\textbf{Cases }$\left( \vee \right) ^{\circlearrowleft }$, $\left( \wedge
\right) _{1}^{\circlearrowleft }$, $\left( \wedge \right)
_{2}^{\circlearrowleft }$ are standard (and trivial) boolean inversions%
\textbf{.}

\textbf{Case }$\left\langle \overset{+}{\cup }\right\rangle
^{\circlearrowleft }$\textbf{\ }(\textbf{\ }$\left[ \overset{+}{\cup }\right]
^{\circlearrowleft }$ analogous)\textbf{. }We omit trivial case of principal
inversion of $\left\langle \cup \right\rangle $ and show only the crucial
cases of principal $\left( \text{\textsc{Gen}}\right) $ (in simple
form):\smallskip

$\partial :\ $\fbox{$\dfrac{\left( \partial _{1}:A,B,C\right) }{\left\langle
P\cup R\right\rangle \!A,\left\langle P\cup R\right\rangle \!B,\left[ P\cup R%
\right] \!C,\Gamma }\ \left( \text{\textsc{Gen}}\right) $}$\quad
\hookrightarrow \partial ^{\left\langle \cup \right\rangle
^{\circlearrowleft }}:\smallskip $

\fbox{$\dfrac{\dfrac{\left( \text{\textsc{Gen}}\right) \!_{\!P}\dfrac{\left(
\partial _{1}:A,B,C\right) }{\left\langle P\right\rangle
\!\!A,\!\left\langle R\right\rangle \!\!A,\!\left\langle P\right\rangle
\!\!B,\!\left\langle R\right\rangle \!\!B,\!\left[ P\right] \!C,\!\Gamma }}{%
\left\langle P\right\rangle \!\!A,\!\left\langle R\right\rangle
\!\!A,\!\left\langle P\!\cup \!\!R\right\rangle \!\!B,\!\left[ P\right]
\!C,\!\Gamma }\!\left\langle \cup \right\rangle \!\dfrac{\dfrac{\left(
\partial _{1}:A,B,C\right) }{\left\langle P\right\rangle
\!\!A,\!\left\langle R\right\rangle \!\!A,\!\left\langle P\right\rangle
\!\!B,\!\left\langle R\right\rangle \!\!B,\!\left[ R\right] \!C,\!\Gamma }%
\!\left( \text{\textsc{Gen}}\right) \!_{\!R}}{\left\langle P\right\rangle
\!\!A,\!\left\langle R\right\rangle \!\!A,\!\left\langle P\!\cup
\!\!R\right\rangle \!\!B,\!\left[ R\right] \!C,\!\Gamma }}{\left\langle
P\right\rangle \!\!A,\!\left\langle R\right\rangle \!\!A,\!\left\langle
P\!\cup \!\!R\right\rangle \!\!B,\!\left[ P\!\cup \!\!R\right] \!C,\!\Gamma }%
\!\left[ \cup \right] $},\smallskip

$\partial :\ $\fbox{$\dfrac{\left( \partial _{1}:\left\langle P\cup
R\right\rangle \!A,B\right) }{\left\langle Q\right\rangle \!\!\left\langle
P\cup R\right\rangle \!A,\left[ Q\right] \!B,\Gamma }\left( \text{\textsc{Gen%
}}\right) $}$\quad \hookrightarrow $

$\partial ^{\left\langle \cup \right\rangle ^{\circlearrowleft }}:\ $\fbox{$%
\dfrac{\left( \partial _{1}^{\left\langle \cup \right\rangle
^{\circlearrowleft }}:\left\langle P\right\rangle \!A,\left\langle
R\right\rangle \!A,B\right) }{\left\langle Q\right\rangle \!\!\left\langle
P\right\rangle \!A,\left\langle Q\right\rangle \!\!\left\langle
R\right\rangle \!A,\left[ Q\right] \!B,\Gamma }\left( \text{\textsc{Gen}}%
\right) $}.\smallskip

\textbf{Case }$\left\langle \,;\right\rangle ^{\circlearrowleft }$\textbf{\ }%
($\left[ \,;\,\right] ^{\circlearrowleft }$ analogous)\textbf{. }As above,
we omit trivial case of principal inversion of $\left\langle
\,;\right\rangle $ and show the crucial cases of principal $\left( \text{%
\textsc{Gen}}\right) $\ (in simple form):

$\partial :\ $\fbox{$\dfrac{\left( \partial _{1}:A,B,C\right) }{\left\langle
P;\!R\right\rangle \!A,\left\langle P;\!R\right\rangle \!B,\left[ P;\!R%
\right] \!C,\Gamma }\ \left( \text{\textsc{Gen}}\right) $}$\quad
\hookrightarrow $

$\partial ^{\left\langle ;\right\rangle ^{\circlearrowleft }}:\ $\fbox{$%
\dfrac{\dfrac{\dfrac{\left( \partial _{1}:A,B,C\right) }{\left\langle
P\right\rangle \!\!\left\langle R\right\rangle \!A,\left\langle
P\right\rangle \!\!\left\langle R\right\rangle \!B,\left[ P\right] \!\left[ R%
\right] \!C,\Gamma }\left( \overrightarrow{\text{\textsc{Gen}}}\right) _{R,P}%
}{\left\langle P\right\rangle \!\!\left\langle R\right\rangle
\!A,\left\langle P;\!R\right\rangle \!B,\left[ P\right] \!\left[ R\right]
\!C,\Gamma }\left\langle ;\right\rangle }{\left\langle P\right\rangle
\!\!\left\langle R\right\rangle \!A,\left\langle P;\!R\right\rangle \!B,%
\left[ P;\!R\right] \!C,\Gamma }\ \left[ ;\right] $},$\smallskip $

$\partial :\ $\fbox{$\dfrac{\left( \partial _{1}:\left\langle
P;\!R\right\rangle \!A,B,C\right) }{\left\langle Q\right\rangle
\!\!\left\langle P;\!R\right\rangle \!A,\left\langle Q\right\rangle \!B,%
\left[ Q\right] \!C,\Gamma }\ \left( \text{\textsc{Gen}}\right) $}$\quad
\hookrightarrow $

$\partial ^{\left\langle ;\right\rangle ^{\circlearrowleft }}:\ $\fbox{$%
\dfrac{\left( \partial _{1}^{\left\langle ;\right\rangle ^{\circlearrowleft
}}:\left\langle P\right\rangle \!\!\left\langle R\right\rangle
\!A,B,C\right) }{\left\langle Q\right\rangle \!\!\left\langle P\right\rangle
\!\!\left\langle R\right\rangle \!A,\left\langle Q\right\rangle \!B,\left[ Q%
\right] \!C,\Gamma }\left( \text{\textsc{Gen}}\right) $}$.\smallskip $

\textbf{Case }$\left[ \ast \right] ^{\circlearrowleft }$ is analogous to $%
\left( \wedge \right) _{i}^{\circlearrowleft }$, via trivial inversion of $%
\left[ \ast \right] $:\smallskip

$\partial :\ \fbox{$\dfrac{\left( \partial _{1}:A,B\right) }{\left[ P^{\ast }%
\right] \!A,\left\langle P^{\ast }\right\rangle \!B,\Gamma }\ \left( \text{%
\textsc{Gen}}\right) $}\quad \hookrightarrow $

$\partial ^{\left[ \ast \right] ^{\circlearrowleft }}:\ \fbox{$\dfrac{\dfrac{%
\left( \partial _{1}:A,B\right) }{\left[ P\right] ^{m}\!\!A,\left\langle
P\right\rangle ^{m}\!\!B,\left\langle P^{\ast }\right\rangle \!B,\Gamma }%
\left( \overrightarrow{\text{\textsc{Gen}}}\right) \underset{m}{_{%
\underbrace{P\cdots P}}}}{\left[ P\right] ^{m}\!\!A,\left\langle P^{\ast
}\right\rangle \!B,\Gamma }\ \left\langle \ast \right\rangle $}$,\smallskip

$\partial :\ \fbox{$\dfrac{\left( \partial _{1}:\left[ \overrightarrow{R}%
\right] \!\!\left[ P^{\ast }\right] \!A,B\right) }{\left[ Q\right] \!\!\left[
\overrightarrow{R}\right] \!\!\left[ P^{\ast }\right] \!A,\left\langle
Q\right\rangle \!B,\Gamma }\ \left( \text{\textsc{Gen}}\right) $}\quad
\hookrightarrow $

$\partial ^{\left[ \ast \right] ^{\circlearrowleft }}:\ \fbox{$\dfrac{\left(
\partial _{1}^{\left[ \ast \right] ^{\circlearrowleft }}:\left[ 
\overrightarrow{R}\right] \!\!\left[ P\right] ^{m}\!\!A,B,\Gamma \right) }{%
\left[ Q\right] \!\!\left[ \overrightarrow{R}\right] \!\!\left[ P\right]
^{m}\!\!A,\left\langle Q\right\rangle \!B,\Gamma }\ \left( \text{\textsc{Gen}%
}\right) $}$.\smallskip

Note that $\left( \text{\textsc{W}}\right) $, $\left( \text{\textsc{C}}%
\right) $, $\left( \vee \right) ^{\circlearrowleft }$, $\left( \wedge
\right) _{1}^{\circlearrowleft }$, $\left( \wedge \right)
_{2}^{\circlearrowleft }$ don't increase derivation heights.
\end{proof}

\subsubsection{Cut elimination proper}

We adapt familiar predicative cut elimination techniques (\cite{Sch}, \cite
{Fef1}, \cite{Poh}, \cite{Gordeev}, \cite{Buchh}).

\begin{theorem}[Predicative cut elimination]
The following is provable in \textbf{PA} extended by transfinite induction
up to Veblen-Feferman ordinal $\varphi _{\omega }\!\left( 0\right)
>\varepsilon _{0}$. Any sequent derivable in \textsc{Seq}$_{\omega }^{\text{%
\textsc{pdl}}}$ is derivable in \textsc{Seq}$_{\omega \text{\textsc{+}}}^{%
\text{\textsc{pdl}}}$ minus $\left( \text{\textsc{Cut}}\right) $. Hence any 
\textbf{PDL}-valid sequent (formula) is derivable in the cut-free fraction
of \textsc{Seq}$_{\omega \text{\textsc{+}}}^{\text{\textsc{pdl}}}$, and
hence also in \textsc{Seq}$_{\omega }^{\text{\textsc{pdl}}}$ minus $\left( 
\text{\textsc{Cut}}\right) $.
\end{theorem}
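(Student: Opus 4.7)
The approach is the standard Schütte-style predicative cut elimination, adapted to the $\omega$-rule $\left[ \ast \right] $ and the modal structure of \textsc{Seq}$_{\omega }^{\text{\textsc{pdl}}}$. Equip each derivation $\partial $ with its height $h(\partial )$ and a cut-rank $r(\partial ):=\sup \{\frak{o}(C)+1:C$ is a cut formula of $\partial \}$. Two auxiliary lemmas drive the argument: a Reduction Lemma, which trades a single cut at rank $\frak{o}(C)$ for cuts of strictly lower rank at the cost of taking the natural sum of the two heights, and a Cut-Rank Lemma, which removes all cuts of a single rank at the cost of ordinary ordinal exponentiation of the height. All manipulations are carried out inside \textsc{Seq}$_{\omega \text{\textsc{+}}}^{\text{\textsc{pdl}}}$, whose $+$-upgraded modal rules are precisely what makes the inversions of Lemma 4 go through uniformly under arbitrary prefixes $\left\langle \vec{Q}\right\rangle ,[\vec{Q}]$ introduced by $(\text{\textsc{Gen}})$.

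\textbf{Reduction Lemma.} Given $(\partial _{1}:C,\Gamma )$ and $(\partial _{2}:\overline{C},\Pi )$ with cut-ranks $\leq \frak{o}(C)$, I build a derivation of $\Gamma \cup \Pi $ of cut-rank $\leq \frak{o}(C)$ and height $h(\partial _{1})\,\#\,h(\partial _{2})$ (natural/symmetric sum), by transfinite induction on $h(\partial _{1})\,\#\,h(\partial _{2})$. If $C$ is not principal in the last rule of one of the $\partial _{i}$, push the cut upward by IH. Otherwise, do case analysis on $C$. The $\vee /\wedge $, $\langle P\cup R\rangle /[P\cup R]$ and $\langle P;R\rangle /[P;R]$ cases apply the corresponding inversion from Lemma 4 to one side and re-cut on the direct modal subformula, whose ordinal complexity is strictly less than $\frak{o}(C)$; the prefix-tolerant $+$-rules of \textsc{Seq}$_{\omega \text{\textsc{+}}}^{\text{\textsc{pdl}}}$ are indispensable here, because the inverted formula may appear under $\langle \vec{Q}\rangle $ or $[\vec{Q}]$ coming from $(\text{\textsc{Gen}})$.

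\textbf{Main case $[\ast ]/\left\langle \ast \right\rangle $.} Here $C=[\vec{Q}][P^{\ast }]A$ and $\overline{C}=\left\langle \vec{Q}\right\rangle \left\langle P^{\ast }\right\rangle \overline{A}$. Then $\partial _{1}$ must end in $[\ast ]$ with premises $\partial _{1}^{(m)}:[\vec{Q}][P]^{m}A,\Gamma $ for every $m\geq 0$, while $\partial _{2}$ ends in $\left\langle \ast \right\rangle $ with one premise $\partial _{2}^{\prime }:\left\langle \vec{Q}\right\rangle \left\langle P\right\rangle ^{m}\overline{A},\left\langle \vec{Q}\right\rangle \left\langle P^{\ast }\right\rangle \overline{A},\Pi $ for a specific $m$. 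Because $\partial _{2}^{\prime }$ retains the cut formula being eliminated, we first apply IH (legitimate since $h(\partial _{2}^{\prime })<h(\partial _{2})$) to cut $\partial _{1}$ against $\partial _{2}^{\prime }$, obtaining $\partial ^{\prime }:\left\langle \vec{Q}\right\rangle \left\langle P\right\rangle ^{m}\overline{A},\Gamma ,\Pi $. Only then can the principal cut of $\partial _{1}^{(m)}$ against $\partial ^{\prime }$ be performed, now on the strictly simpler formula $[\vec{Q}][P]^{m}A$ whose ordinal complexity is $<\frak{o}(C)$ (since $\frak{o}(P^{\ast })=\frak{o}(P)\cdot \omega $ strictly dominates $m\cdot (\frak{o}(P)+1)$ for every finite $m$). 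Contraction from Lemma 4 then collapses $\Gamma ,\Gamma ,\Pi $ to $\Gamma \cup \Pi $.

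\textbf{Cut-rank elimination and ordinal bound.} The Cut-Rank Lemma follows from the Reduction Lemma applied top-down to each outermost cut of maximal rank, bounding the new height by ordinary ordinal exponentiation of the old. Because ordinal complexities of cut formulas can reach $\omega ^{\omega }$ (driven by $\frak{o}(P^{\ast })=\frak{o}(P)\cdot \omega $), iterating the rank-elimination $\omega $-many times reaches $\varepsilon _{0}$, iterating it $\omega $-further times reaches $\varphi _{2}(0)$, and in general $\omega ^{n}$ levels of iteration reach $\varphi _{n}(0)$, with supremum $\sup_{n<\omega }\varphi _{n}(0)=\varphi _{\omega }(0)$. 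Starting from Theorem 1's height bound $\omega \cdot 2$, the final cut-free derivation has height $<\varphi _{\omega }(0)$. Since all ordinal manipulations are primitive recursive in notations below $\varphi _{\omega }(0)$, the argument formalizes in $\mathbf{PA}+\mathrm{TI}(\varphi _{\omega }(0))$, and indeed in $\mathbf{PRA}_{\varphi _{\omega }(0)}$. The principal technical obstacle is precisely the $[\ast ]/\left\langle \ast \right\rangle $ reduction above: the $\langle \ast \rangle $-premise retains a copy of the formula being eliminated, so the induction has to be arranged so that a nested, height-reducing cut on that retained formula is discharged first, prior to the principal cut on the strictly simpler subformula, and one must verify case by case that the $+$-upgraded rules handle $(\text{\textsc{Gen}})$-induced prefixes uniformly.
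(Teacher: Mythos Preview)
Your overall architecture---a Reduction Lemma by induction on the natural sum of heights, followed by rank-by-rank elimination with Veblen-function bounds---matches the paper's. The $[\ast]/\langle\ast\rangle$ reduction you sketch (first eliminate the retained copy of $\langle\vec{Q}\rangle\langle P^*\rangle\overline{A}$ by IH on height, then cut on the simpler $[\vec{Q}][P]^m A$) is exactly the paper's case~5(a).

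There is, however, a genuine gap. Your case analysis covers literals, $\vee/\wedge$, $\cup$, $;$, and $\ast$, each handled by inversion, but omits the case where the outermost modality of the cut formula is an \emph{atomic} program: $C=\langle p\rangle A$ with $\overline{C}=[p]\overline{A}$. There is no inversion lemma for $\langle p\rangle$ or $[p]$ in Lemma~7, and this case cannot be discharged by ``pushing upward'': the only way either of $\langle p\rangle A$, $[p]\overline{A}$ becomes principal is via $(\text{\textsc{Gen}})_p$, which is a multi-conclusion rule with no context in its premise. The paper treats this as its case~5(c), and it requires a separate sub-reduction $\mathcal{R}'(\partial_1',\partial_2)$ defined by a secondary induction on $h(\partial_2)$: one ascends $\partial_2$ through non-principal steps, and at a principal $(\text{\textsc{Gen}})_p$ on the $[p]\overline{A}$ side one merges the two premises $A,\vec{B},D$ and $\overline{A},\vec{G}$ by a cut on $A$ (of strictly lower complexity) and then reapplies $(\text{\textsc{Gen}})_p$ to the combined sequent. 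Without this case the Reduction Lemma does not go through. Relatedly, your claim that in the $[\ast]/\langle\ast\rangle$ case ``$\partial_1$ must end in $[\ast]$'' is not quite right: if $\vec{Q}$ begins with an atomic $q$, then $[\vec{Q}][P^*]A=[q]\cdots$ can also be principal in $(\text{\textsc{Gen}})_q$, which again routes back to the missing atomic case. Your final remark about ``$(\text{\textsc{Gen}})$-induced prefixes'' suggests you sensed the issue, but it is a primary reduction case, not a matter of prefix bookkeeping.
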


\begin{proof}
Our \emph{cut elimination operator} $\partial \hookrightarrow \mathcal{E}%
\left( \partial \right) $ satisfying $\deg \left( \mathcal{E}\left( \partial
\right) \right) =0$ is defined for any derivation $\partial $ in \textsc{Seq}%
$_{\omega \text{\textsc{+}}}^{\text{\textsc{pdl}}}$ by simultaneous
transfinite recursion on $h\left( \partial \right) $ and ordinal cut-degree $%
\deg \left( \partial \right) $. 
\begin{equation*}
\fbox{$\deg \left( \partial \right) :=\max \left\{ 0,\sup \left\{ \frak{o\!}%
\left( C\right) +1:C\ occurs\ as\ cut\ formula\ in\ \partial \right\}
\right\} $}
\end{equation*}
Namely, for any inference rule $\left( \text{\textsc{R}}\right) \neq \left( 
\text{\textsc{Cut}}\right) $ with 
\begin{eqnarray*}
&&\fbox{$\left( \partial :\Gamma \right) =\dfrac{\left( \partial _{1}:\Gamma
_{1}\right) }{\Gamma }\ \left( \text{\textsc{R}}\right) $}\text{ \ , \ }%
\fbox{$\left( \partial :\Gamma \right) =\dfrac{\left( \partial _{1}:\Gamma
_{1}\right) \quad \quad \left( \partial _{2}:\Gamma _{2}\right) }{\Gamma }\
\left( \text{\textsc{R}}\right) $}\text{ } \\
&&\text{\ \qquad \qquad\ or \ }\fbox{$\left( \partial :\Gamma \right) =%
\dfrac{\cdots \quad \left( \partial _{m}:\Gamma _{m}\right) \quad \cdots
\left\{ m\geq 0\right\} }{\Gamma }\ \left( \text{\textsc{R}}\right) =\left[
\ast \right] $}
\end{eqnarray*}
we respectively let 
\begin{eqnarray*}
&&\fbox{$\left( \mathcal{E\!}\left( \partial \right) \!:\!\Gamma \right)
\!=\!\dfrac{\left( \mathcal{E\!}\left( \partial _{1}\right) :\Gamma
_{1}\right) }{\Gamma }\left( \text{\textsc{R}}\right) $}\text{ \ or \ }\fbox{%
$\left( \mathcal{E\!}\left( \partial \right) \!:\!\Gamma \right) \!=\!\dfrac{%
\left( \mathcal{E\!}\left( \partial _{1}\right) :\Gamma _{1}\right) \ \
\left( \mathcal{E\!}\left( \partial _{2}\right) :\Gamma _{2}\right) }{\Gamma 
}\left( \text{\textsc{R}}\right) $} \\
&&\text{\ \qquad\ \qquad or \ }\fbox{$\left( \mathcal{E\!}\left( \partial
\right) :\Gamma \right) =\dfrac{\cdots \quad \left( \mathcal{E\!}\left(
\partial _{m}\right) :\Gamma _{m}\right) \quad \cdots \left\{ m\geq
0\right\} }{\Gamma }\ \left[ \ast \right] $}\text{.}
\end{eqnarray*}
Otherwise, if $\left( \text{\textsc{R}}\right) =\left( \text{\textsc{Cut}}%
\right) $ with 
\begin{equation*}
\fbox{$\left( \partial :\Gamma \cup \Pi \right) =\dfrac{\left( \partial
_{1}:C,\Gamma \right) \quad \quad \left( \partial _{2}:\overline{C},\Pi
\right) }{\Gamma \cup \Pi }\ \left( \text{\textsc{Cut}}\right) $}
\end{equation*}
then we stipulate 
\begin{equation*}
\fbox{$\left( \mathcal{E\!}\left( \partial \right) :\Gamma \cup \Pi \right)
=\left( \mathcal{E\!}\left( \!\mathcal{R\!}\left( \dfrac{\left( \mathcal{E\!}%
\left( \partial _{1}\right) :C,\Gamma \right) \quad \left( \mathcal{E\!}%
\left( \partial _{2}\right) :\overline{C},\Pi \right) }{\Gamma \cup \Pi }%
\left( \text{\textsc{Cut}}\right) \!\right) \!\right) \!:\Gamma \cup \Pi
\!\right) $}
\end{equation*}
with respect to a suitable \emph{cut reduction operation} $\partial
\hookrightarrow \mathcal{R}\left( \partial \right) $ such that 
\begin{equation*}
\fbox{$\deg \left( \mathcal{R}\left( \partial \right) \right) <\deg \left(
\partial \right) $ if $\deg \left( \partial _{1}\right) =\deg \left(
\partial _{2}\right) =0$}\text{,}
\end{equation*}
which makes $\mathcal{E\!}\left( \partial \right) $, $\deg \left( \mathcal{E}%
\left( \partial \right) \right) =0$, definable by induction on $\deg \left(
\partial \right) $ and $h\left( \partial \right) $.

Now $\mathcal{R}\left( \partial \right) $ is defined for any 
\begin{equation*}
\fbox{$\left( \partial :\Gamma \cup \Pi \right) =\dfrac{\left( \partial
_{1}:C,\Gamma \right) \quad \quad \left( \partial _{2}:\overline{C},\Pi
\right) }{\Gamma \cup \Pi }\ \left( \text{\textsc{Cut}}\right) $}
\end{equation*}
by following double induction on ordinal complexity of $C$ and $\max \left(
h\left( \partial _{1}\right) ,h\left( \partial _{2}\right) \right) $,
provided that $\deg \left( \partial _{1}\right) =\deg \left( \partial
_{2}\right) =0$.

\textbf{1.} Case $C=L$ and $\overline{C}=\overline{L}$ for $L\in \left\{
x,\lnot x\right\} $. This case is standard. Namely, we observe that $L$ is
principal left-hand side cut formula only if $\left( \partial _{1}:L,\Gamma
\right) $ with $\Gamma =\overline{L},\Gamma ^{\prime }$. But then $\left(
\partial _{2}:\overline{L},\Pi \right) $ infers $\Gamma \cup \Pi =$ $%
\overline{L},\Gamma ^{\prime },\Pi $ by derivable weakening $\left( \text{%
\textsc{W}}\right) $. Thus following Mints-style graphic presentation, we
can construct $\mathcal{R}\left( \partial \right) $ by (1) setting $L:=\Pi $
for every non-principal predecessor of the left-hand cut formula $L$ while
ascending $\partial _{1}:L,\Gamma $ either up to its disappearance as a side
formula of $\left( \text{\textsc{Gen}}\right) $ -- then we are done -- or
else up to its principal occurrence in $\left( \text{\textsc{Ax}}\right) $ $%
\overline{L},L,\Gamma ^{\prime \prime }$, which yields sequent $\overline{L}%
,\Pi ,\Gamma ^{\prime \prime }$ instead, followed in the latter case by (2)
setting $\overline{L}:=\overline{L},\Gamma ^{\prime \prime }$ for every
non-principal predecessor of the right-hand cut formula $\overline{L}$ while
ascending $\partial _{2}:$ $\overline{L},\Pi $ up to its disappearance as a
side formula of $\left( \text{\textsc{Gen}}\right) $ or else up to any
occurrence in a leaf. In both cases we eventually arrive at a correct
derivation of $\Gamma \cup \Pi $, since in the case of principal occurrence
of $\overline{L}$ in a modified axiom $\left( \text{\textsc{Ax}}\right) $ $%
\overline{L},L,\Pi ^{\prime }$ we obtain another axiom $\left( \text{\textsc{%
Ax}}\right) $ $\overline{L},\Gamma ^{\prime \prime },L,\Pi ^{\prime }$.

\textbf{2.} Case $C=A\vee B$ and $\overline{C}=\overline{A}\wedge B$. Use
derivable inversions $\left( \vee \right) ^{\circlearrowleft },\left( \wedge
\right) _{1}^{\circlearrowleft },\left( \wedge \right)
_{2}^{\circlearrowleft }$:

$\partial :\ $\fbox{$\dfrac{\left( \partial _{1}:A\vee B,\Gamma \right)
\quad \left( \partial _{2}:\overline{A}\wedge \overline{B},\Pi \right) }{%
\Gamma \cup \Pi }\left( \text{\textsc{Cut}}\right) $}$\quad \hookrightarrow
\quad $

$\mathcal{R}\left( \partial \right) :=\ $\fbox{$\dfrac{\dfrac{\left(
\partial _{1}^{\left( \vee \right) ^{\circlearrowleft }}:A,B,\Gamma \right)
\quad \left( \partial _{2}^{\left( \vee \right) _{1}^{\circlearrowleft }}:%
\overline{A},\Pi \right) }{B,\Gamma \cup \Pi }\left( \text{\textsc{Cut}}%
\right) \QDATOP{{}}{\left( \partial _{2}^{\left( \vee \right)
_{2}^{\circlearrowleft }}:\overline{B},\Pi \right) }}{\Gamma \cup \Pi }%
\left( \text{\textsc{Cut}}\right) $}.\smallskip

\textbf{3.} Case $C=\!\left\langle \overrightarrow{Q}\right\rangle
\!\!\left\langle P\cup R\right\rangle \!A$ and $\overline{C}=\!\left[ 
\overrightarrow{Q}\right] \!\!\left[ P\cup R\right] \overline{A}$. Analogous
reduction to $\left( \text{\textsc{Cut}}\right) $'s on $\left\langle 
\overrightarrow{Q}\right\rangle \!\!\left\langle P\right\rangle \!A$ and $%
\left\langle \overrightarrow{Q}\right\rangle \!\!\left\langle R\right\rangle
\!A$ by derivable inversions $\left\langle \cup \right\rangle
^{\circlearrowleft }$, $\left[ \cup \right] _{1}^{\circlearrowleft }$, $%
\left[ \cup \right] _{2}^{\circlearrowleft }$.

\textbf{4.} Case $C=\!\left\langle \overrightarrow{Q}\right\rangle
\!\!\left\langle P;R\right\rangle \!A$ and $\overline{C}=\!\left[ 
\overrightarrow{Q}\right] \!\!\left[ P;R\right] \overline{A}$. Immediate
reduction to $\left( \text{\textsc{Cut}}\right) $ on $\left\langle 
\overrightarrow{Q}\right\rangle \!\!\left\langle P\right\rangle
\!\!\left\langle R\right\rangle \!A$ by derivable inversions $\left\langle
\,;\right\rangle ^{\circlearrowleft }$, $\left[ \,;\right]
^{\circlearrowleft }$.

\textbf{5.} Case $C=\left\langle \overrightarrow{Q}\right\rangle \!F$ and $%
\overline{C}=\left[ \overrightarrow{Q}\right] \overline{F}$ where $%
\overrightarrow{Q}=Q_{1}\cdots Q_{n}$ $\left( n>0\right) $ and $\left(
\forall j\in \left[ 1,n\right] \right) \left( Q_{j}\!=p_{j}\text{ or }%
Q_{j}=P_{j}^{\ast }\right) $, while $F\neq \left\langle Q\right\rangle
F^{\prime }$. The reduction is either trivial, if $\partial _{1}=\left( 
\text{\textsc{Ax}}\right) ^{+}$, or\smallskip\ else defined hereditarily
with respect to left-hand side non-principal subcases like\smallskip

$\partial _{1}:\ \fbox{$\dfrac{\left( \partial _{1}^{\prime }:C,\Gamma
^{\prime }\right) }{C,\Gamma }\ \left( \text{\textsc{R}}\right) $}$%
\smallskip\ with$\ \partial _{2}:\left[ p\right] \!\overline{A},\Pi $, when
we let

$\mathcal{R}\left( \partial \right) :=\fbox{$\dfrac{\left( \mathcal{R}\left(
\partial ^{\prime }\right) :\Gamma ^{\prime }\cup \Pi \right) }{\qquad
\Gamma \cup \Pi \quad \quad }\left( \text{\textsc{R}}\right) $}\smallskip $
\ for $\partial ^{\prime }:\ $\fbox{$\dfrac{\left( \partial _{1}^{\prime
}:C,\Gamma ^{\prime }\right) \quad \quad \left( \partial _{2}:\overline{C}%
,\Pi \right) }{\Gamma ^{\prime }\cup \Pi }\left( \text{\textsc{Cut}}\right) $%
},\smallskip

or analogous non-principal subcases $\partial _{1}:\ \fbox{$\dfrac{\left(
\partial _{1}^{\prime }:C,\Gamma ^{\prime }\right) \quad \left( \partial
_{1}^{\prime \prime }:C,\Gamma ^{\prime \prime }\right) }{C,\Gamma }\ \left( 
\text{\textsc{R}}\right) $}\smallskip $,

$\partial _{1}:\ $\ $\fbox{$\dfrac{\cdots \quad \left( \partial ^{\left(
m\right) }:C,\Gamma ^{\left( m\right) }\right) \quad \cdots \left( \forall
m\geq 0\right) }{C,\Gamma }\ \left[ \ast \right] $}$,\smallskip

as well as the following principal subcases 5 $\left( \text{a}\right) $, 5 $%
\left( \text{b}\right) $, 5 $\left( \text{c}\right) $.\smallskip

\textbf{5 }$\left( \text{\textbf{a}}\right) $.\textbf{\smallskip } $%
C=\left\langle \overrightarrow{Q}\right\rangle \!F=\left\langle 
\overrightarrow{Q^{\prime }}\right\rangle \!\!\left\langle P^{\ast
}\right\rangle \!\!A$ and

$\partial _{1}:\ $\fbox{$\dfrac{\left( \partial _{1}^{\prime }:\left\langle 
\overrightarrow{Q^{\prime }}\right\rangle \!\!\left\langle P\right\rangle
\!^{m}A,\left\langle \overrightarrow{Q^{\prime }}\right\rangle
\!\!\left\langle P^{\ast }\right\rangle \!\!A,\Gamma \right) }{\left\langle 
\overrightarrow{Q^{\prime }}\right\rangle \!\!\left\langle P^{\ast
}\right\rangle \!\!A,\Gamma }\left\langle \ast \right\rangle $}\smallskip\
with\ $\partial _{2}:\left[ \overrightarrow{Q^{\prime }}\right] \!\!\left[
P^{\ast }\right] \!\overline{A},\Pi $. Let

$\mathcal{R}\left( \partial \right) :=$\smallskip

\fbox{$\dfrac{\left( \mathcal{R}\left( \partial ^{\prime }\right)
:\left\langle \overrightarrow{Q^{\prime }}\right\rangle \!\!\left\langle
P\right\rangle \!^{m}A,\Gamma \cup \Pi \right) \quad \quad \left( \partial
_{2}^{\left[ \ast \right] ^{\circlearrowright }}:\left[ \overrightarrow{%
Q^{\prime }}\right] \!\!\left[ P\right] \!^{m}\overline{A},\Pi \right) }{%
\Gamma \cup \Pi }\left( \text{\textsc{Cut}}\right) $} \ where$\smallskip $

$\partial ^{\prime }:\ $\fbox{$\dfrac{\left( \partial _{1}^{\prime
}:\left\langle \overrightarrow{Q^{\prime }}\right\rangle \!\!\left\langle
P\right\rangle \!^{m}A,\left\langle \overrightarrow{Q^{\prime }}%
\right\rangle \!\!\left\langle P^{\ast }\right\rangle \!\!A,\Gamma \right)
\quad \left( \partial _{2}:\left[ \overrightarrow{Q^{\prime }}\right] \!\!%
\left[ P^{\ast }\right] \!\overline{A},\Pi \right) }{\left\langle 
\overrightarrow{Q^{\prime }}\right\rangle \!\!\left\langle P\right\rangle
\!^{m}A,\Gamma \cup \Pi }\left( \text{\textsc{Cut}}\right) $}.\smallskip

\textbf{5 }$\left( \text{\textbf{b}}\right) $.\smallskip\ $C=\left\langle 
\overrightarrow{Q}\right\rangle \!F=\left\langle P^{\ast }\right\rangle \!A$%
\ and

$\partial _{1}:\ $\fbox{$\dfrac{\left( \partial _{1}^{\prime }:A,%
\overrightarrow{B},D\right) }{%
\begin{array}{c}
\left\langle P^{\ast }\right\rangle \!A,\left\langle P^{\ast }\right\rangle
\!\overrightarrow{B},\left[ P^{\ast }\right] \!D,\Gamma ^{\prime } \\ 
=\left\langle P^{\ast }\right\rangle \!A,\Gamma
\end{array}
}\left( \text{\textsc{Gen}}\right) $}\smallskip\ with$\ \partial _{2}:\left[
P^{\ast }\right] \!\overline{A},\Pi $. Then let

$\mathcal{R}\left( \partial \right) :=\ $\fbox{$\dfrac{\cdots \ \left(
\partial _{m}^{\prime \prime }:\left\langle P^{\ast }\right\rangle \!%
\overrightarrow{B},\left[ P\right] \!^{m}D,\Gamma ^{\prime }\cup \Pi \right)
\ \cdots \left( \forall m\geq 0\right) }{%
\begin{array}{c}
\left\langle P^{\ast }\right\rangle \!\overrightarrow{B},\left[ P^{\ast }%
\right] \!D,\Gamma ^{\prime }\cup \Pi \\ 
=\Gamma \cup \Pi
\end{array}
}\left[ \ast \right] $}\smallskip\ \ where

$\partial _{m}^{\prime \prime }:=\smallskip $

\fbox{$\dfrac{\dfrac{\dfrac{\left( \partial _{1}^{\prime }:A,\overrightarrow{%
B},D\right) }{\left\langle P\right\rangle \!^{m}\!A,\left\langle
P\right\rangle \!^{m}\!\overrightarrow{B},\left[ P\right] \!^{m}\!D,\left%
\langle P^{\ast }\right\rangle \!\!\overrightarrow{B},\Gamma ^{\prime }}%
\!\left( \overrightarrow{\text{\textsc{Gen}}}\right) \ \QDATOP{{}}{\left(
\partial _{2}^{\left[ \ast \right] ^{\circlearrowright }}:\left[ P\right]
\!^{m}\overline{A},\Pi \right) }}{\left\langle P\right\rangle \!^{m}\!%
\overrightarrow{B},\left\langle P^{\ast }\right\rangle \!\!\overrightarrow{B}%
,\left[ P\right] \!^{m}\!D,\Gamma ^{\prime }\cup \Pi }\!\left( \text{\textsc{%
Cut}}\right) }{\left\langle P^{\ast }\right\rangle \!\!\overrightarrow{B},%
\left[ P\right] \!^{m}\!D,\Gamma ^{\prime }\cup \Pi }\!\left\langle \ast
\right\rangle $}.\smallskip

\textbf{5 }$\left( \text{\textbf{c}}\right) $.\textbf{\smallskip } $%
C=\left\langle \overrightarrow{Q}\right\rangle \!F=\left\langle
p\right\rangle \!A$ and

$\partial _{1}:\ $\fbox{$\dfrac{\left( \partial _{1}^{\prime }:A,%
\overrightarrow{B},D\right) }{%
\begin{array}{c}
\left\langle p\right\rangle \!A,\left\langle p\right\rangle \!%
\overrightarrow{B},\left[ p\right] \!D,\Gamma ^{\prime } \\ 
=\left\langle p\right\rangle \!A,\Gamma
\end{array}
}\left( \text{\textsc{Gen}}\right) $}\smallskip\ with$\ \partial _{2}:\left[
p\right] \!\overline{A},\Pi $. Then we let

$\mathcal{R}\left( \partial \right) :=\ $\fbox{$\dfrac{\left( \mathcal{R}%
^{\prime }\left( \partial _{1}^{\prime },\partial _{2}\right) :\left\langle
p\right\rangle \!\overrightarrow{B},\left[ p\right] \!D,\Pi \right) }{%
\begin{array}{c}
\left\langle p\right\rangle \!\overrightarrow{B},\left[ p\right] \!D,\Gamma
^{\prime }\cup \Pi \\ 
=\Gamma \cup \Pi
\end{array}
}\left( \text{\textsc{W}}\right) $}\smallskip ,

where $\mathcal{R}^{\prime }\left( \partial _{1}^{\prime },\partial
_{2}\right) $ is defined by induction on $h\left( \partial _{2}\right) $ --
either trivially, if $\partial _{2}=\left( \text{\textsc{Ax}}\right) ^{+}$,
or hereditarily, in the non-principal subcases, while in the principal
subcases\smallskip

$\partial _{2}:\ $\fbox{$\dfrac{\left( \partial _{2}^{\prime }:\overline{A},%
\overrightarrow{G}\right) }{%
\begin{array}{c}
\left[ p\right] A,\left\langle p\right\rangle \!\overrightarrow{G},\Pi
^{\prime } \\ 
=\left[ p\right] \!A,\Pi
\end{array}
}\left( \text{\textsc{Gen}}\right) $} \ and $\partial _{2}:\ $\fbox{$\dfrac{%
\left( \partial _{2}^{\prime }:\overrightarrow{G},H\right) }{%
\begin{array}{c}
\left[ p\right] A,\left\langle p\right\rangle \!\overrightarrow{G},\left[ p%
\right] \!H,\Pi ^{\prime \prime } \\ 
=\left[ p\right] \!A,\Pi
\end{array}
}\left( \text{\textsc{Gen}}\right) $}\smallskip

we respectively let\smallskip

$\mathcal{R}^{\prime }\left( \partial _{1},\partial _{2}\right) \smallskip $%
\smallskip $:=\ $\fbox{$\dfrac{\left( \partial _{1}^{\prime }:A,%
\overrightarrow{B},D\right) \quad \quad \left( \partial _{2}^{\prime }:%
\overline{A},\overrightarrow{G}\right) }{\dfrac{\overrightarrow{B},D,%
\overrightarrow{G}}{%
\begin{array}{c}
\left\langle p\right\rangle \!\overrightarrow{B},\left[ p\right]
\!D,\left\langle p\right\rangle \!\overrightarrow{G},\Pi ^{\prime } \\ 
=\left\langle p\right\rangle \!\overrightarrow{B},\left[ p\right] \!D,\Pi
\end{array}
}\left( \text{\textsc{Gen}}\right) }\left( \text{\textsc{Cut}}\right) $}%
\smallskip\ \ and

$\mathcal{R}^{\prime }\left( \partial _{1},\partial _{2}\right) \smallskip
\smallskip :=\ \fbox{$\dfrac{\left( \partial _{2}^{\prime }:\overrightarrow{G%
},H\right) }{%
\begin{array}{c}
\left\langle p\right\rangle \!\overrightarrow{G},\left[ p\right]
\!H,\left\langle p\right\rangle \!\overrightarrow{B},\left[ p\right] \!D,\Pi
^{\prime \prime } \\ 
=\left\langle p\right\rangle \!\overrightarrow{B},\left[ p\right] \!D,\Pi
\end{array}
}\left( \text{\textsc{Gen}}\right) $}$, as desired.$\smallskip $

Obviously $\mathcal{R}$ reduces the cut degree of $\partial $. That is, in
each case 1--5 we have$\ \deg \left( \mathcal{R}\left( \partial \right)
\right) <\deg \left( \partial \right) <\omega ^{\omega }$, provided that
both $\partial _{1}$ and $\partial _{2}$ involved are cutffree. Moreover
it's readily seen that nodes in $\mathcal{R}\left( \partial \right) $ can be
augmented with ordinals such that 
\begin{equation*}
\fbox{$h\left( \mathcal{R}\left( \partial \right) \right) <h\left( \partial
_{1}\right) +\!\!\!\!\!+\,\,h\left( \partial _{2}\right) +\omega <h\left(
\partial \right) \cdot 2+\omega $}\text{.}
\end{equation*}
Having this one can define ordinal assignments also for (slightly modified)
cutfree derivations $\mathcal{E}\left( \partial \right) $ such that for any $%
\partial $\ with $\deg \left( \partial \right) <\omega ^{\alpha }$ it holds 
\begin{equation*}
\fbox{$h\left( \mathcal{E}\left( \partial \right) \right) <\varphi \left(
\alpha ,h\left( \partial \right) \right) $}\text{,}
\end{equation*}
which for $\deg \left( \partial \right) <\omega ^{\omega }$ and $h\left(
\partial \right) <\omega \cdot 2$ (cf. Theorem 4) yields 
\begin{equation*}
\fbox{$h\left( \mathcal{E}\left( \partial \right) \right) <\underset{%
n<\omega }{\sup }\,\varphi \left( n,\omega \cdot 2\right) =\varphi \left(
\omega ,0\right) =\varphi _{\omega }\!\left( 0\right) $}
\end{equation*}
(see Appendix A for a detailed presentation). It is readily seen that the
entire proof is formalizable in $\mathbf{PA}_{\varphi _{\omega }\left(
0\right) }$, i.e. $\mathbf{PA}$ extended by schema of transfinite induction
along (canonical primitive recursive representation of) ordinal $\varphi
_{\omega }\!\left( 0\right) $. \footnote{$\varphi _{\omega }\!\left(
0\right) =D\!\left( \omega ^{\Omega +\omega }\right) ${\footnotesize \
according to} {\footnotesize ordinal notations used in \cite{Gordeev}.}}
\end{proof}

\begin{corollary}
Let $\Gamma $ be any sequent that does not contain occurrences $\left[
P^{\ast }\right] $ and suppose that $\Gamma $ is derivable in \textsc{Seq}$%
_{\omega }^{\text{\textsc{pdl}}}$. Then $\Gamma $ is derivable in a
subsystem of \textsc{Seq}$_{\omega }^{\text{\textsc{pdl}}}$, called \textsc{%
Seq}$_{1}^{\text{\textsc{pdl}}}$, that does not contain inferences $\left[
\ast \right] $ and/or $\left( \text{\textsc{Cut}}\right) $. Note that every
derivation in \textsc{Seq}$_{1}^{\text{\textsc{pdl}}}$ is finite.
Consequently, any given $\left[ P^{\ast }\right] $-free seq-formula is valid
in \textbf{PDL} iff it is derivable in \textsc{Seq}$_{1}^{\text{\textsc{pdl}}%
}$.
\end{corollary}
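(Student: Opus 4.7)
The strategy is to combine the cut elimination theorem with a subformula-property argument, and then to rewrite the resulting finite cutfree derivation using only the original (non-upgraded) rules of \textsc{Seq}$_{\omega }^{\text{\textsc{pdl}}}$.

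First, apply Theorem~7 to the assumed \textsc{Seq}$_{\omega }^{\text{\textsc{pdl}}}$-derivation of $\Gamma $ to obtain a cutfree derivation $\partial $ of $\Gamma $ in \textsc{Seq}$_{\omega +}^{\text{\textsc{pdl}}}$. Every rule of \textsc{Seq}$_{\omega +}^{\text{\textsc{pdl}}}$ other than $\left( \text{\textsc{Cut}}\right) $ enjoys the following subformula property: the principal formula of the conclusion is a compound whose immediate constituents (or, in the case of $\left\langle \ast \right\rangle $, a copy of itself together with $\left\langle \overrightarrow{Q}\right\rangle \!\!\left\langle P\right\rangle ^{m}\!A$) occur in the premises, and side formulas are copied verbatim. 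An induction from the root of $\partial $ upward therefore shows that no formula occurrence in $\partial $ contains any subformula of the shape $\left[ \overrightarrow{Q}\right] \!\!\left[ P^{\ast }\right] \!A$, since $\Gamma $ contains none and no rule apart from $\left[ \ast \right] $ can manufacture such a subformula. Consequently $\left[ \ast \right] $ is never applied in $\partial $.

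Second, with $\left[ \ast \right] $ eliminated, every remaining rule in $\partial $ has at most two premises. So $\partial $ is a finitely branching, well-founded tree, and K\"{o}nig's lemma yields that $\partial $ is finite.

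Third, rewrite $\partial $ as a derivation using only the original non-upgraded rules of \textsc{Seq}$_{\omega }^{\text{\textsc{pdl}}}$. Each upgraded inference $\left\langle \cup \right\rangle $, $\left[ \cup \right] $, $\left\langle \,;\right\rangle $, $\left[ \,;\right] $ with surrounding prefix $\left\langle \overrightarrow{Q}\right\rangle $ or $\left[ \overrightarrow{Q}\right] $ is simulated by applying the corresponding original rule at the innermost modality and then rebuilding the prefix through iterated $\left( \overrightarrow{\text{\textsc{Gen}}}\right) $-style moves, as already observed in the equivalence remark preceding Definition~6. Since every formula encountered stays $\left[ P^{\ast }\right] $-free, this simulation never invokes $\left[ \ast \right] $ or $\left( \text{\textsc{Cut}}\right) $, and the result lives entirely in \textsc{Seq}$_{1}^{\text{\textsc{pdl}}}$. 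The ``iff'' characterisation then follows by combining this with Theorem~4: a $\left[ P^{\ast }\right] $-free seq-formula is \textbf{PDL}-valid iff it is derivable in \textsc{Seq}$_{\omega }^{\text{\textsc{pdl}}}$ iff it is derivable in \textsc{Seq}$_{1}^{\text{\textsc{pdl}}}$, the reverse implication being immediate from soundness.

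The step most likely to require real care is the third one. The upgraded rules act \emph{under} an arbitrary modal prefix $\overrightarrow{Q}$, whereas each original rule fires only at the outermost modality; so the combinatorics of $\left\langle \cup \right\rangle $, $\left[ \cup \right] $, $\left\langle \,;\right\rangle $, $\left[ \,;\right] $ has to be propagated through the prefix uniformly. Verifying that this propagation can always be carried out using only plain $\left( \text{\textsc{Gen}}\right) $ (and without secretly reintroducing $\left( \text{\textsc{Cut}}\right) $ or $\left[ \ast \right] $ via some hidden auxiliary) is the main technical point.
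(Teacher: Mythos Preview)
Your approach is correct and matches the paper's own one-line proof (``obvious by the subformula property of cutfree derivations''). Two remarks, though.

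First, your step 3 is an unnecessary detour. Theorem~8 already asserts cutfree derivability in \textsc{Seq}$_{\omega}^{\text{\textsc{pdl}}}$ itself, not merely in \textsc{Seq}$_{\omega+}^{\text{\textsc{pdl}}}$; the passage from the upgraded system back to the original one is part of the theorem's statement (justified by the remark \emph{following} Definition~6, not preceding it). So you can run the subformula argument directly on a cutfree \textsc{Seq}$_{\omega}^{\text{\textsc{pdl}}}$ derivation and skip the conversion entirely.

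Second, your sketch of step 3 as ``apply the original rule at the innermost modality, then rebuild the prefix via $(\overrightarrow{\text{\textsc{Gen}}})$'' is not how such an admissibility argument goes: sequent rules act on whole formulas, not on subformulas, and $(\overrightarrow{\text{\textsc{Gen}}})$ handles only sequences of \emph{atomic} programs, whereas the prefix $\overrightarrow{Q}$ may contain compound (even starred) programs. The honest argument is an induction on the height of the premise derivation, permuting the upgraded rule upward past whatever introduced the outermost modality of $\overrightarrow{Q}$. You were right to flag this as the delicate point; the cleaner fix is simply to avoid it by invoking the full strength of Theorem~8 as above.
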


\begin{proof}
This is obvious by the subformula property of cutfree derivations.
\end{proof}

\begin{remark}
Here and below we argue in $\mathbf{PA}_{\varphi _{\omega }\left( 0\right) }$
that is a proper extension of $\mathbf{PA}$, as $\varphi _{\omega }\left(
0\right) >\varepsilon _{0}$. Actually by standard arguments the whole proof
is formalizable in the corresponding primitive recursive weakening, $\mathbf{%
PRA}_{\varphi _{\omega }\left( 0\right) }$.
\end{remark}

\subsection{Herbrand-style conclusions}

Let $\mathcal{L}_{0}$ be the star-free sublanguage of $\mathcal{L}$. Denote
by \textsc{Seq}$_{0}^{\text{\textsc{pdl}}}$ the $\mathcal{L}_{0}$-subsystem
of \textsc{Seq}$_{1}^{\text{\textsc{pdl}}}$.

\begin{theorem}
Let $\Sigma =\left\langle P^{\ast }\right\rangle \!A,\Pi $ with $A,\Pi \in 
\mathcal{L}_{0}$. Suppose that $\Sigma $ is derivable in \textsc{Seq}$%
_{\omega }^{\text{\textsc{pdl}}}$. Then there exists a $k\geq 0$ such that $%
\widehat{\Sigma }_{k}:=A,\left\langle p\right\rangle \!A,\cdots
,\left\langle p\right\rangle ^{k}\!\!A,\Pi $ is derivable in \textsc{Seq}$%
_{0}^{\text{\textsc{pdl}}}$.
\end{theorem}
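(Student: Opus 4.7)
The plan is to use cut elimination to reduce to a finite derivation, pin down its shape via the subformula property, and then syntactically replace $\langle P^{\ast}\rangle\!A$ by the finitely many unfoldings $\langle P\rangle^{m}\!A$ that are actually witnessed in the derivation. First I would apply Corollary~9---which applies since the star-freeness of $A$ and $\Pi$ makes $\Sigma$ a $[P^{\ast}]$-free sequent---to obtain a finite derivation $\partial$ of $\Sigma$ in the cut- and $\omega$-rule-free subsystem \textsc{Seq}$_{1}^{\text{\textsc{pdl}}}$.

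Next, exploiting the subformula property of cut-free derivations, I would analyse which formulas can occur in $\partial$. Every such formula is either a subformula of $\Pi$, a subformula of $A$, the starred formula $\langle P^{\ast}\rangle\!A$ itself with \emph{empty} modal prefix, or one of its diamond unfoldings $\langle P\rangle^{m}\!A$. In particular, no formula of shape $\langle R\rangle\langle P^{\ast}\rangle\!A$ or $[R]\langle P^{\ast}\rangle\!A$ ever arises in $\partial$, since none is a generalised subformula of any formula of $\Sigma$. Consequently every occurrence of $\langle P^{\ast}\rangle\!A$ in $\partial$ sits at the top level of its host sequent, and every $\langle\ast\rangle$-inference in $\partial$ is of the restricted shape with $\overrightarrow{Q}=\emptyset$ and some exponent $m\ge 0$.

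Now let $k$ be the maximum of the finitely many exponents $m$ so occurring (with $k:=0$ if $\langle\ast\rangle$ is never applied). Define a sequent transformation $\Gamma\mapsto\widehat{\Gamma}$ by deleting every occurrence of $\langle P^{\ast}\rangle\!A$ from $\Gamma$ and, provided at least one such occurrence was present, inserting a single copy of each of $A,\langle P\rangle\!A,\ldots,\langle P\rangle^{k}\!A$. Applied to the root this gives precisely $\widehat{\Sigma}_{k}$. Working bottom-up I would verify that each inference of $\partial$ transports to a valid step of \textsc{Seq}$_{0}^{\text{\textsc{pdl}}}$ on transformed sequents: the boolean rules together with $\langle\cup\rangle,[\cup],\langle;\rangle,[\,;\,]$ and $(\text{\textsc{Gen}})$ all treat $\langle P^{\ast}\rangle\!A$ purely as a side formula (or, for $(\text{\textsc{Gen}})$, drop it altogether), so they remain valid with the transformed side-formula block in place; and each $\langle\ast\rangle$-inference with exponent $m\le k$ collapses to a derivable contraction (Lemma~6), because its transformed premise differs from its transformed conclusion only by an extra copy of $\langle P\rangle^{m}\!A$ already present in the inserted expansion.

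The resulting finite derivation $\widehat{\partial}$ lives entirely in the star-free fragment and invokes neither $(\text{\textsc{Cut}})$, $[\ast]$, nor $\langle\ast\rangle$, hence is a derivation in \textsc{Seq}$_{0}^{\text{\textsc{pdl}}}$ of $\widehat{\Sigma}_{k}$ as required. The main delicate point is the subformula analysis of step two---in particular, verifying that $(\text{\textsc{Gen}})$ cannot smuggle a prefixed copy of $\langle P^{\ast}\rangle\!A$ into $\partial$ and that the $\overrightarrow{Q}$ in every $\langle\ast\rangle$-application is thereby forced to be empty---since this is exactly what lets the transformation of step three preserve validity of each inference and absorb all the star-unfoldings via contraction.
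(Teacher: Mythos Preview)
Your proposal is correct and follows essentially the same route as the paper: both first invoke cut elimination (Corollary~9) to obtain a finite $[\ast]$- and cut-free derivation, then observe via the subformula property that $\langle P^{\ast}\rangle A$ can only occur unprefixed and can only be principal in $\langle\ast\rangle$ with $\overrightarrow{Q}=\emptyset$, and finally replace each such $\langle\ast\rangle$-step by a contraction once the starred formula is unfolded up to the maximal witnessed exponent. The only stylistic difference is that the paper runs an explicit induction on derivation height (taking $k'=\max(k,m)$ at each $\langle\ast\rangle$), whereas you precompute a global $k$ and apply a uniform sequent transformation; these are interchangeable standard framings of the same Herbrand-style argument.
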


\begin{proof}
The nontrivial implication $\,$\textsc{Seq}$_{\omega }^{\text{\textsc{pdl}}}$
$\vdash \Sigma \Rightarrow \,\,\text{\textsc{Seq}}_{0}^{\text{\textsc{pdl}}%
}\vdash \widehat{\Sigma }_{k}$ follows by standard arguments from the cut
elimination theorem by induction on the height of the corresponding finite
cutfree proof $\partial $ of $\Sigma $\ in \textsc{Seq}$_{\omega }^{\text{%
\textsc{pdl}}}$. Since no $\left[ P^{\ast }\right] $ occurs in $\Sigma $, no 
$\left\langle P^{\ast }\right\rangle \!A$ can be principal formula in $%
\left( \text{\textsc{Gen}}\right) $. Thus the only crucial case is when some 
$\left\langle P^{\ast }\right\rangle \!A$ is principal formula in 
\begin{equation*}
\fbox{$\left\langle \ast \right\rangle \quad \dfrac{\!\!\!\left\langle
P^{\ast }\right\rangle \!A,\left\langle P\right\rangle ^{m}\!\!A,\Sigma }{%
\left\langle P^{\ast }\right\rangle \!A,\Sigma }\ $}
\end{equation*}
which by the induction hypothesis yields $k$ such that$\ \left\langle
P\right\rangle ^{m}\!\!A,\widehat{\Sigma }_{k}$ is derivable in \textsc{Seq}$%
_{0}^{\text{\textsc{pdl}}}$. By $\left( \text{\textsc{C}}\right) $ or $%
\left( \text{\textsc{W}}\right) $ this yields the derivability of $\widehat{%
\Sigma }_{k^{\prime }}$\ for $k^{\prime }:=\max \left( k,m\right) $.
\end{proof}

\begin{remark}
By the same token, for any $\left[ P^{\ast }\right] $-free seq-formula $F$,
one can successively replace all subformulas $\left\langle P^{\ast
}\right\rangle \!A$ by appropriate disjunctions $\overset{k}{\underset{i=0}{%
\bigvee }}\left\langle P\right\rangle ^{i}\!\!A$ such that $F$ is $\mathbf{%
PDL}$-valid iff the resulting expansion $\widehat{F}$ is derivable in 
\textsc{Seq}$_{0}^{\text{\textsc{pdl}}}$.
\end{remark}

\subsubsection{PSPACE refinement}

Denote by $\mathcal{L}_{00}$ a sublanguage of $\mathcal{L}_{0}$ over atomic
programs \textrm{PRO}$_{0}$ and let $\mathcal{L}_{\emptyset }$ be purely
propositional fraction of $\mathcal{L}.$ Note that program operations ``$;$%
'' and ``$\cup $'' are definable in $\mathcal{L}_{00}$ via $\left(
P;Q\right) \!A:=\left( P\right) \!\left( Q\right) \!A$, $\left\langle P\cup
Q\right\rangle \!A:=\left\langle P\right\rangle \!A\vee \left\langle
Q\right\rangle \!A$\ and $\left[ P\cup Q\right] \!A:=\left[ P\right]
\!A\wedge \left[ Q\right] \!A$. Let \textsc{Seq}$_{00}^{\text{\textsc{pdl}}}$
be the following $\mathcal{L}_{00}$-restriction of \textsc{Seq}$_{0}^{\text{%
\textsc{pdl}}}$ (that proves the same $\mathcal{L}_{00}$-sequents as \textsc{%
Seq}$_{0}^{\text{\textsc{pdl}}}$).

\begin{equation*}
\begin{array}{c}
\quad 
\begin{array}{c}
\fbox{$\left( \text{\textsc{Ax}}\right) \quad x,\lnot x,\Gamma $}
\end{array}
\quad \quad \qquad \\ 
\begin{array}{c}
\fbox{$\left( \vee \right) \quad \dfrac{A,B,\Gamma }{A\vee B,\Gamma }\ $}%
\fbox{$\left( \wedge \right) \quad \dfrac{A,\Gamma \quad \quad B,\Gamma }{%
A\wedge B,\Gamma }\ $}
\end{array}
\\ 
\qquad \qquad \text{\ }\fbox{$
\begin{array}{c}
\left( \text{\textsc{Gen}}\right) \quad \dfrac{A_{1},\cdots ,A_{n}}{\left(
p\right) _{\chi _{1}}\!\!A_{1},\cdots ,\left( p\right) \!_{\chi
_{n}}\!A_{n},\Gamma }\ \left( n>0\right) \\ 
\text{if }\overset{n}{\underset{i=1}{\sum }}\chi _{i}=1.
\end{array}
$}\qquad \qquad \qquad
\end{array}
\end{equation*}

\begin{remark}
Theorems 4 and 8 confirm that \textbf{PDL} is a conservative extension of
classical propositional logic that is formalized by the $\left( \text{%
\textsc{Gen}}\right) $-free subsystem of \textsc{Seq}$_{00}^{\text{\textsc{%
pdl}}}$. Thus any $\mathcal{L}_{\emptyset }$-formula $A$ is derivable in 
\textsc{Seq}$_{00}^{\text{\textsc{pdl}}}-\left( \text{\textsc{Gen}}\right) $
iff it is valid in propositional logic, and hence, by contraposition, 
\textsc{Seq}$_{00}^{\text{\textsc{pdl}}}\nvdash A$ iff $\nvDash A$ (: $\lnot
A$ is satisfiable).
\end{remark}

\begin{lemma}[$p\,$-inversion]
\ Suppose that $\left[ p\right] \!A_{1},\cdots ,\left[ p\right]
\!A_{j},\left\langle p\right\rangle \!B_{1},\cdots ,\left\langle
p\right\rangle \!B_{k},\Gamma $, where $\Gamma \!=\!\left( q_{1}\right)
\!C_{1},\cdots ,\left( q_{l}\right) \!C_{l},\Pi $ for $q_{j}\neq p$, and $%
\Pi \in \mathcal{L}_{00}$, is derivable in \textsc{Seq}$_{00}^{\text{\textsc{%
pdl}}}$. Then so is either $\Gamma $ or $A_{i},B_{1},\cdots ,B_{k}$, for
some $i\in \left[ 1,j\right] $, without increasing the height of the former
derivation.
\end{lemma}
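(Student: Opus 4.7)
The plan is to proceed by induction on the height of a given cutfree derivation $\partial$ of the sequent $\Sigma := [p]A_{1},\ldots,[p]A_{j},\langle p\rangle B_{1},\ldots,\langle p\rangle B_{k},\Gamma$ in \textsc{Seq}$_{01}^{\text{\textsc{pdl}}}$, with case analysis on the last inference applied. Since \textsc{Seq}$_{01}^{\text{\textsc{pdl}}}$ has only the four rules $\left(\text{\textsc{Ax}}\right)$, $\left(\vee\right)$, $\left(\wedge\right)$, and $\left(\text{\textsc{Gen}}\right)$, only four cases arise, and in each one I would verify that no height is incurred.

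\textbf{Case} $\left(\text{\textsc{Ax}}\right)$: both literals $x,\lnot x$ must lie in $\Pi$, since no formula of the form $[p]A_{i}$, $\langle p\rangle B_{i}$ or $(q_{s})C_{s}$ is a literal; hence $\Gamma$ itself is an axiom, giving alternative (a). \textbf{Cases} $\left(\vee\right)$ and $\left(\wedge\right)$: the principal formula must sit in $\Pi$, since no modal formula of $\Sigma$ has a Boolean outermost connective. I would apply the induction hypothesis to the premise(s); if any premise yields the (b)-alternative we inherit it, otherwise the (a)-alternatives recombine under the same Boolean rule to yield (a) for $\Sigma$, without increase in height.

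\textbf{Case} $\left(\text{\textsc{Gen}}\right)$: I would split on the principal program. If the last inference is $\left(\text{\textsc{Gen}}\right)_{p'}$ with $p'\neq p$, then its principal formulas must be drawn from $\{(q_{s})C_{s}:q_{s}=p'\}$, because no other formula of $\Sigma$ has outermost program $p'$. A fresh application of $\left(\text{\textsc{Gen}}\right)_{p'}$ to the same premise, with the side context trimmed so as to drop the $[p]A_{i}$'s and $\langle p\rangle B_{i}$'s but retain $\Pi$ and the non-principal $(q_{s})C_{s}$'s, already yields $\Gamma$ — giving (a) at the same height. If instead the last inference is $\left(\text{\textsc{Gen}}\right)_{p}$, then, because $q_{s}\neq p$ for every $s$ and $\Pi$ contains no modal formula, the only candidates for the unique principal $p$-box are among $[p]A_{1},\ldots,[p]A_{j}$. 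So some $[p]A_{i_{0}}$ is principal and the premise has the shape $A_{i_{0}},B_{j_{1}},\ldots,B_{j_{r}}$ with $\{j_{1},\ldots,j_{r}\}\subseteq[1,k]$; routine height-preserving weakening (as in Lemma 7, trivially adapted to \textsc{Seq}$_{01}^{\text{\textsc{pdl}}}$) then produces a derivation of $A_{i_{0}},B_{1},\ldots,B_{k}$, giving (b).

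The main technical point is the $\left(\text{\textsc{Gen}}\right)_{p}$ subcase: one must observe that the unique principal $p$-box — unique by the constraint $\sum\chi_{i}=1$ — cannot be hidden inside the side context of the rule, precisely because by hypothesis every $(q_{s})C_{s}\in\Gamma$ has $q_{s}\neq p$ and $\Pi$ contains no modal formula at all. This is the one place where the constraint on $\Gamma$ is essential: it forces $j\geq 1$ in this subcase and pins the principal box to some $[p]A_{i_{0}}$, so that the premise is already a near-witness for (b). The only supporting lemma needed beyond the case analysis is height-preserving weakening on \textsc{Seq}$_{01}^{\text{\textsc{pdl}}}$, which follows by a short induction on derivation height.
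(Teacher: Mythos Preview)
Your proposal is correct and follows the same approach as the paper: induction on the height of the derivation, with the crucial case being $\left(\text{\textsc{Gen}}\right)_{p}$, where the constraint $\sum\chi_i=1$ forces the unique box to be some $[p]A_{i_0}$ and the premise is then a subsequent of $A_{i_0},B_1,\ldots,B_k$, handled by height-preserving weakening. The paper's proof is much terser (it only spells out this crucial principal case), but your more detailed case analysis---including the $\left(\text{\textsc{Gen}}\right)_{p'}$ subcase with $p'\neq p$, which the paper leaves implicit---is exactly what is going on.
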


\begin{proof}
By straightforward induction on the derivation height. In the crucial
principal case we have 
\begin{equation*}
\fbox{$\ \left( \text{\textsc{Gen}}\right) \quad \dfrac{A_{i},\Delta }{\left[
p\right] \!A_{1},\cdots ,\left[ p\right] \!A_{k},\left\langle p\right\rangle
\!B_{1},\cdots ,\left\langle p\right\rangle \!B_{l}}$}
\end{equation*}
\smallskip

where $0<i\leq k$ and $\Delta \subseteq B_{1},\cdots ,B_{l}$, which by
derivable $\left( \text{\textsc{W}}\right) $ yields\smallskip\ the required
derivability of $A_{i},B_{1},\cdots ,B_{l}$.
\end{proof}

\begin{theorem}
The derivability in \textsc{Seq}$_{00}^{\text{\textsc{pdl}}}$\ is a PSPACE
problem.\ \footnote{{\footnotesize Apparently this result is well-known in
the context of multimodal version of \textbf{K}.}}
\end{theorem}

\begin{proof}
For the sake of brevity we consider $\mathcal{L}_{00}$ formulas containing
at most one atomic program $p=\pi _{0}$. Furthermore, we refine the notion
of \textsc{Seq}$_{00}^{\text{\textsc{pdl}}}$\ derivability by asserting that
a sequent $\Delta \neq \left( \text{\textsc{Ax}}\right) $ is the conclusion
of a rule $\left( \text{\textsc{R}}\right) $ if one of the following \emph{%
priority conditions} 1--3 is satisfied.

\begin{enumerate}
\item  $\left( \text{\textsc{R}}\right) =\left( \vee \right) $.

\item  $\left( \text{\textsc{R}}\right) =\left( \wedge \right) $ and no
disjunction $A\vee B$ occurs as formula in $\Delta $; thus $\Delta $ is not
a conclusion of any $\left( \vee \right) $.

\item  No disjunction $A\vee B$ or conjunction $A\wedge B$ occurs as formula
in $\Delta $. Thus $\Delta $ is not a conclusion of any $\left( \vee \right) 
$ or $\left( \wedge \right) $, i.e. $\Delta =\left( p\right) _{\xi
_{1}}\!\!F_{1},\cdots ,\left( p\right) \!_{\xi _{n}}\!F_{n}$ for $\overset{n%
}{\underset{i=1}{\sum }}\!\,\xi _{i}\geq 1$. In this case we stipulate that $%
\Delta $ is the conclusion of $\left( \text{\textsc{R}}\right) $ if one of
the following two conditions holds:

\begin{enumerate}
\item  $\overset{n}{\underset{i=1}{\sum }}\!\,\xi _{i}=1$ and $F_{1},\cdots
,F_{n}$ is the premise of $\left( \text{\textsc{R}}\right) =\left( \text{%
\textsc{Gen}}\right) $.

\item  $\overset{n}{\underset{i=1}{\sum }}\!\,\xi _{i}>1$ and there exists $%
j\in \left[ 1,n\right] $ with $\xi _{j}=1$ such that either $\Delta ^{\left(
j\right) }:=F_{j}\cup \left\{ F_{l}\in \Delta :\xi _{l}=0\right\} $ or $%
\Delta ^{\left( -j\right) }:=\Delta \setminus \left\{ F_{j}\right\} $ is the
premise of $\left( \text{\textsc{R}}\right) $. (Note that we have $\left( 
\text{\textsc{R}}\right) =\left( \text{\textsc{Gen}}\right) $ and $\left( 
\text{\textsc{R}}\right) =\left( \text{\textsc{W}}\right) $ in the former
and in the latter case, respectively.)
\end{enumerate}
\end{enumerate}

Having this we consider derivations in the refined \textsc{Seq}$_{00}^{\text{%
\textsc{pdl}}}$ as at most binary-branching trees $\partial $ whose nodes
are labeled with sequents of $\mathcal{L}_{00}$. Actually, for any given $%
\mathcal{L}_{00}$-sequent $\Sigma $ it will suffice to fix one distinguished 
\emph{proof search tree} $\partial _{0}$\ with root sequent $\Sigma $ that
is defined by bottom-up recursion while applying the conditions 1--3 in a
chosen order as long as possible. It is readily seen by inversions in
Lemmata 7, 14 that $\Sigma $ is derivable in \textsc{Seq}$_{00}^{\text{%
\textsc{pdl}}}$ iff $\partial _{0}$ proves $\Sigma $, i.e. every maximal
path in $\partial _{0}$ is locally correct with respect to 1--3. Moreover,
by the obvious subformula property we conclude that the depth, $d\left(
\partial _{0}\right) $, and maximum sequent length, $\max \left\{ \left|
\Delta \right| :\Delta \in \partial _{0}\right\} $, of $\partial _{0}$ are
both proportional to $\left| \Sigma \right| $. Hence every maximal path in $%
\partial _{0}$ can be encoded by a $\mathcal{L}_{01}$-string of the length
proportional to $\left| \Sigma \right| $ whose local correctness is
verifiable by TM in $\mathcal{O}\left( \left| \Sigma \right| \right) $\
space. The corresponding universal verification runs by counting all maximal
paths successively, still in $\mathcal{O}\left( \left| \Sigma \right|
\right) $\ space, which completes the proof.
\end{proof}

\begin{remark}
Arguing along more familiar lines we can turn $\partial _{0}$ into a Boolean
circuit with (binary) AND, OR and (unary) ID gates, where ID$\left( x\right)
:=x$ for $x\in \left\{ 0,1\right\} $, such that AND, OR and ID correspond to
the above conditions 2, 3 (b) and 1 and/or 3 (a),\ respectively. The
corresponding truth evaluations $val\left( -\right) $ are defined as usual
via $val\left( \Delta \right) :=1$ $\left( \mathbf{true}\right) $ iff $%
\Delta =\left( \text{\textsc{Ax}}\right) $, for every leaf $\Delta $. Then $%
val\left( \Sigma \right) =1$ iff $\partial _{0}$ proves $\Sigma $, as
required. \footnote{{\footnotesize This proof is dual to familiar proof of
polynomial space solvability of QSAT (cf. e,g. \cite{Papa}).}}
\end{remark}

\subsubsection{Special cases}

Recall that by (a particular case of) Theorem 11, for any $\Sigma
=\left\langle p^{\ast }\right\rangle \!A,\Pi $ with $A\in \mathcal{L}%
_{00},\Pi \in \mathcal{L}_{\emptyset }$ the following holds. Suppose that $%
\Sigma $ is derivable in \textsc{Seq}$_{\omega }^{\text{\textsc{pdl}}}$.
Then there exists a $k\geq 0$ such that $\widehat{\Sigma }%
_{k}:=A,\left\langle p\right\rangle \!A,\cdots ,\left\langle p\right\rangle
^{k}\!\!A,\Pi $ is derivable in \textsc{Seq}$_{00}^{\text{\textsc{pdl}}}$.
It turns out that in some cases it's possible to estimate the minimum $k$
and hence the size of $\widehat{\Sigma }_{k}$.

\begin{definition}
Let $p=\pi _{0}$ be fixed. Call \emph{basic conjunctive normal form} (abbr.: 
\emph{BCNF}) any $\mathcal{L}_{00}$-formula$\!$ $\overset{m}{\underset{i=1}{%
\bigwedge }}\left( B_{i}\vee \left\langle p\right\rangle \!C_{i}\vee 
\overset{n_{i}}{\underset{j=1}{\bigvee }}\left[ p\right] \!D_{i,j}\right) $
for $m>0$, $n_{i}\geq 0$ and $B_{i}$, $C_{i}$, $D_{i,j}$ $\in \mathcal{L}%
_{\emptyset }\cup \left\{ \emptyset \right\} $. Formulas $\left\langle
p^{\ast }\right\rangle \!A\vee Z$ for $A\in \mathrm{BCNF}$ and $Z\in \!%
\mathcal{L}_{\emptyset }$ are called \emph{basic conjunctive normal
expressions} (abbr.: \emph{BCNE}).
\end{definition}

\begin{theorem}
Let $A=\overset{m}{\underset{i=1}{\bigwedge }}\left( \!B_{i}\vee
\left\langle p\right\rangle \!C_{i}\vee \overset{n_{i}}{\underset{j=1}{%
\bigvee }}\left[ p\right] \!D_{i,j}\!\right) \!\in $ $\mathrm{BCNF}\!$ and
for any $k\geq 0$ and $\Pi \!\in \!\mathcal{L}_{\emptyset }$ let $\widehat{A}%
_{k}:=$ $A,\left\langle p\right\rangle \!A,\cdots ,\left\langle
p\right\rangle ^{k}\!\!A$, $\widehat{A}_{0}=A$, and $\widehat{\Sigma }_{k}:=%
\widehat{A}_{k},\Pi $. \negthinspace If any $\widehat{\Sigma }_{k}$
\negthinspace is \negthinspace derivable in \textsc{Seq}$_{00}^{\text{%
\textsc{pdl}}}$ then so is $\widehat{\Sigma }_{n+1}$ too, where $n=\overset{m%
}{\underset{i=1}{\sum }}n_{i}$.
\end{theorem}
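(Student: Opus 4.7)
The plan is to prove by strong induction on $k$ that derivability of $\widehat{\Sigma}_k$ forces derivability of $\widehat{\Sigma}_{n+1}$. When $k \leq n+1$, the multiset inclusion $\widehat{\Sigma}_k \subseteq \widehat{\Sigma}_{n+1}$ and the derivable weakening rule $(\text{W})$ of Lemma 7 immediately do the job, so the substantive case is $k \geq n+2$.

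For $k \geq n+2$, I would apply in sequence to the derivation of $\widehat{\Sigma}_k = A, \langle p\rangle A, \ldots, \langle p\rangle^k A, \Pi$ the $\wedge$-inversion on the leading $A = \bigwedge_i F_i$, the $\vee$-inversion on each resulting $F_i$, and finally the $p$-inversion of Lemma 13. This yields, for each $i \in [1, m]$, one of two alternatives: either (A) $B_i, \Pi$ is derivable, or (B) there is $j_i \in [1, n_i]$ with $D_{i,j_i}, C_i, \widehat{A}_{k-1}$ derivable. The key structural feature, on which the bound $n+1$ rests, is that iteration of case (B) \emph{replaces} rather than accumulates the propositional baggage $D_{i,j_i}, C_i$: because Lemma 13 retains only the contents of the modal formulas, the propositional side at one level is discarded at the next.

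The crux is then an auxiliary pigeonhole claim: \emph{for every pair $(i, j)$ with $j \in [1, n_i]$, if $D_{i,j}, C_i, \widehat{A}_l$ is derivable for some $l$, then $D_{i,j}, C_i, \widehat{A}_n$ is already derivable.} Along any individual branch of the $m$-ary case-analysis tree, iterated (B)-steps produce a sequence of state-pairs $(i_t, j_t)$ drawn from a pool of exactly $n = \sum_i n_i$ admissible pairs. A branch whose (B)-segment has length $\geq n+1$ must therefore contain a repetition $(i_s, j_s) = (i_{s'}, j_{s'})$ with $s < s' \leq n+1$; this repetition, together with the upward-monotonicity of derivability in the $\widehat{A}$-depth (via $(\text{W})$), lets me splice the branch to its initial segment of length $\leq n$ while preserving derivability of its root sequent, establishing the claim.

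Granted the claim, the final derivation of $\widehat{\Sigma}_{n+1}$ is assembled clause by clause. For each $i$ in case (A), $F_i, \langle p\rangle A, \ldots, \langle p\rangle^{n+1} A, \Pi$ follows from $B_i, \Pi$ via $(\vee)$-introduction and weakening. For each $i$ in case (B), the claim supplies $D_{i,j_i}, C_i, \widehat{A}_n$ derivable; a single $(\text{Gen})_p$-application (sign $1$ on $D_{i,j_i}$, sign $0$ on every other premise formula, side multiset $\Gamma$ chosen to furnish $B_i$, the remaining $[p]D_{i,j}$'s, and $\Pi$) produces $B_i, \langle p\rangle C_i, [p]D_{i,1}, \ldots, [p]D_{i,n_i}, \langle p\rangle A, \ldots, \langle p\rangle^{n+1} A, \Pi$, which collapses via $(\vee)$-introductions to $F_i, \langle p\rangle A, \ldots, \langle p\rangle^{n+1} A, \Pi$. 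A cascade of $(\wedge)$-introductions across $i \in [1,m]$ then combines these slices into $\widehat{\Sigma}_{n+1}$. I expect the hard part to be the splicing step of the pigeonhole claim: because the iterated analysis fans out into an $m$-ary tree rather than a single chain, the pigeonhole must be localized to each branch and the splicing performed uniformly across sibling branches, and it is precisely the non-accumulation of $D, C$-baggage under Lemma 13 that makes this uniform splicing possible.
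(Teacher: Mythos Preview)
Your proposal is correct and rests on exactly the same two ingredients as the paper's proof: the recursive characterisation of derivability obtained from the $(\wedge)$-, $(\vee)$- and $p$-inversions (your cases (A)/(B), the paper's boxed equivalences), and a pigeonhole on the $n$ possible ``contexts'' $(C_i,D_{i,j})$. The difference is purely one of orientation. The paper argues by contraposition: it packages $\nvdash\widehat{\Sigma}_k$ as the existence of a \emph{refutation tree} $T_k$ (choose one $i$ at each node, branch over all $j$), and then \emph{pumps} $T_{n+1}$ to $T_s$ for every $s>n$ by duplicating the segment between a repeated daughter-label along any maximal path. You instead work directly with derivability certificates (branch over all $i$, choose one $j$) and \emph{compress} them via the same repetition.

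Each orientation buys something. The paper's pumping is operationally simpler: because the refutation tree branches over $j$, a single path already carries the full sequence of contexts, and splicing-in the repeated segment is a one-line tree surgery—there is no ``uniform splicing across sibling branches'' to worry about. Your direct approach avoids the contrapositive detour and is arguably more informative, but the splicing step you flag as hard is genuinely more delicate in your AND/OR orientation. You can sidestep that delicacy entirely by recasting your pigeonhole as a rank bound: define $r(X)=\min\{l:\ \vdash\widehat{A}_l,X\}$ on the finite set $S=\{(C_i,D_{i,j})\}$; the recursion forces the finite values of $r$ on $S$ to form an initial segment of $\{-1,0,1,\dots\}$, hence $r\le n-1$ on $S$, whence $r(\Pi)\le n$. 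This is the clean form of your argument and matches the paper's bound.
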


\begin{proof}
For $i\in \left[ 1,m\right] $ let $\Delta _{i}:=$ $\left\{ \left[ p\right]
D_{i,j}\!:1\leq j\leq n_{i}\right\} $. So Lemma 14 yields 
\begin{equation*}
\vdash \,\!\!\widehat{\Sigma }_{0}\Leftrightarrow \,\vdash A,\Pi
\Leftrightarrow \overset{m}{\underset{i=1}{\bigwedge }}\!\vdash
B_{i},\left\langle p\right\rangle \!C_{i},\Delta _{i},\Pi \Leftrightarrow
\end{equation*}
\begin{equation*}
\fbox{$\left. 
\begin{array}{c}
``F\!or\ every\ i\in \!\left[ 1,m\right] \!,\,either\vdash B_{i},\Pi \ or \\ 
there\ is\ j\in \!\left[ 1,n_{i}\right] \ such\ that\vdash C_{i},D_{i,j}"\,
\end{array}
\right. $}
\end{equation*}

where ``$\ \vdash $'' stands for ``\thinspace \thinspace \textsc{Seq}$_{00}^{%
\text{\textsc{pdl}}}\vdash $'', and hence 
\begin{equation*}
\nvdash \ \!\!\widehat{\Sigma }_{0}\Leftrightarrow \,\nvdash A,\Pi
\Leftrightarrow
\end{equation*}
\begin{equation*}
\fbox{$\left. 
\begin{array}{c}
``T\!here\ is\ i\in \!\left[ 1,m\right] \!\ such\ that\nvdash B_{i},\Pi \\ 
and\ f\!or\ every\ j\in \!\left[ 1,n_{i}\right] ,\,\nvdash C_{i},D_{i,j}"\,.
\end{array}
\right. $}
\end{equation*}
By the same token, for any $s\geq 0$ we let $\left\langle p\right\rangle 
\widehat{\!A}_{s}:=\left\langle p\right\rangle \left( A\vee \left\langle
p\right\rangle \!A\vee \cdots \vee \left\langle p\right\rangle
^{s}\!\!A\right) $ and arrive at 
\begin{eqnarray*}
&\vdash &\!\!\!\!\widehat{\Sigma }_{s+1}\Leftrightarrow \ \vdash \ \widehat{A%
}_{s+1},\Pi \Leftrightarrow \ \vdash A,\left\langle p\right\rangle \widehat{%
\!A}_{s},\Pi \\
&\Leftrightarrow &\overset{m}{\underset{i=1}{\bigwedge }}\!\vdash
B_{i},\left\langle p\right\rangle \!C_{i},\Delta _{i},\left\langle
p\right\rangle \!\widehat{A}_{s},\Pi \Leftrightarrow
\end{eqnarray*}
\begin{equation*}
\fbox{$\left. 
\begin{array}{c}
``F\!or\ every\ i\in \!\left[ 1,m\right] \!,\,either\vdash B_{i},\Pi \ or\ 
\\ 
there\ is\ j\in \!\left[ 1,n_{i}\right] \ such\ that\vdash \widehat{A}%
_{s},C_{i},D_{i,j}"\,
\end{array}
\right. $}
\end{equation*}
which yields 
\begin{equation*}
\nvdash \ \!\widehat{\Sigma }_{s+1}\Leftrightarrow \,\nvdash \widehat{A}%
_{s+1},\Pi \Leftrightarrow
\end{equation*}
\begin{equation*}
\fbox{$\left. 
\begin{array}{c}
``T\!here\ is\ i\in \!\left[ 1,m\right] \!\ such\ that\nvdash B_{i},\Pi \ \ 
\\ 
and\ f\!or\ every\ j\in \!\left[ 1,n_{i}\right] ,\,\nvdash \widehat{A}%
_{s},C_{i},D_{i,j}"\,.
\end{array}
\right. $}
\end{equation*}
Thus for any $k\geq 0$, the assertion $\nvdash \widehat{\Sigma }_{k}$ is
equivalent to the existence of a labeled rooted \emph{refutation} \emph{tree}
$T_{k}$ of the height $k+1$ such that the following conditions 1--3 hold,
where sequents $\ell \left( x\right) $ are the labels of nodes $x\in T_{k}$ (%
$\rho $ being the root).

\begin{enumerate}
\item  $\ell \left( \rho \right) =\Pi $.

\item  $\nvdash \ell \left( x\right) $ holds for every leaf $x\in T_{k}$.

\item  For any inner node $x\in T_{k}$ there exists $i\in \left[ 1,m\right] $
such that $x$ has $m_{i}+1$ ordered children: $x_{0}$ (the \emph{son}) with
label $\ell \left( x_{0}\right) =B_{i},\ell \left( x\right) $ and $%
x_{1},\cdots ,x_{m_{i}}\!$ (the \emph{daughters}) labeled $\ell \left(
x_{j}\right) =C_{i},D_{i,j}$, respectively; moreover $x_{j}$ ($j\geq 0$) is
a leaf iff it is either a son or else a daughter of the depth $k+1$.
\end{enumerate}

Such $T_{k}$ is easily obtained by straightforward geometric interpretation
of our translations of the conditions $\nvdash \ \!\!\widehat{\Sigma }_{0}$
and $\nvdash \ \!\widehat{\Sigma }_{s+1}$. Moreover, condition 2 above is
equivalent to

\begin{description}
\item  \ \thinspace \thinspace 2*. $\nvdash \ell \left( x\right) $ holds for
every node $x\in T_{k}$,
\end{description}

since daughters are subsequents of their sons. That is, every inner node $%
x\in T_{k}$ with label $\ell \left( x\right) $ has an upper neighbor (son) $%
x_{0}$ with label $B_{i},\ell \left( x\right) $ being a leaf in $T_{k}$,
which by condition 2 yields $\nvdash B_{i},\ell \left( x\right) $, and hence
also $\nvdash \ell \left( x\right) $, as required, by converting the
admissible weakness rule of \textsc{Seq}$_{01}^{\text{\textsc{pdl}}}$.

Now if $k\leq n+1$ then $\widehat{\Sigma }_{k}\subseteq \widehat{\Sigma }%
_{n+1}$, and hence $\vdash \widehat{\Sigma }_{k}$ implies $\vdash \widehat{%
\Sigma }_{n+1}$. Furthermore, assuming $\nvdash \widehat{\Sigma }_{n+1}$
we'll infer $\left( \forall s>n\right) \nvdash \widehat{\Sigma }_{s}$ and
conclude by contraposition that $\left( \exists k\right) \vdash \widehat{%
\Sigma }_{k}$ implies (in fact is equivalent to) $\vdash \widehat{\Sigma }%
_{n+1}$, as required. So assume $\nvdash \widehat{\Sigma }_{n+1}$. We prove
by induction on $s>n$ the existence of the refutation trees $T_{s}$, and
hence $\nvdash \widehat{\Sigma }_{s}$. Basis case $s=n+1$ holds by the
assumption. To pass from $T_{s}$ to $T_{s+1}$ we argue as follows. Let $x\in
T_{s}$ be any leaf-daughter and $\theta =\left( \rho ,y_{1},\cdots
,y_{s}=x\right) $ the corresponding maximal path, in $T_{s}$. Since $\theta $
contains at most $n<s$ different labels $\ell \left( y_{\imath }\right)
=C_{i},D_{i,j}$ ($i\in \left[ 1,m\right] $, $j\in \left[ 1,n_{i}\right] $),
there exist a (say, minimal) pair $0<r<t<s$ such that $\ell \left(
y_{r}\right) =\ell \left( y_{t}\right) $. Let $T_{\left( s,x,r,t\right) }$
be a tree that arises from $T_{s}$ by substituting its subtree rooted in $%
y_{r}$ for that rooted in $y_{t}$. Note that $T_{\left( s,x,r,t\right) }$ is
higher than $T_{s}$ -- so let $T_{s+1}^{\left( x\right) }$ be a subtree of $%
T_{\left( s,x,r,t\right) }$\ consisting of the nodes of heights $\leq s+1$.
Proceeding this way successively with respect to all leaf-daughters $x\in
T_{s}$ while keeping in mind condition 2*, we eventually obtain a refutation
tree $T_{s+1}$ of the height $s+1$, as required.
\end{proof}

By Remark 10 and Theorem 11, the following are provable in $\mathbf{PRA}%
_{\varphi _{\omega }\left( 0\right) }$.

\begin{corollary}
Let $A\in $ $\mathrm{BCNF}\!$, $n$ and $\Pi $ be as above. Then
\negthinspace $\Sigma :=\left\langle p^{\ast }\right\rangle \!A,\Pi $
\negthinspace is \negthinspace derivable in \textsc{Seq}$_{\omega }^{\text{%
\textsc{pdl}}}$ \negthinspace iff $\widehat{\Sigma }_{n+1}:=\widehat{A}%
_{n+1},\Pi $ is derivable in \textsc{Seq}$_{00}^{\text{\textsc{pdl}}}$.
\end{corollary}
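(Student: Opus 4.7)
The plan is to obtain Corollary 17 as a direct synthesis of Theorems 11 and 16, with the converse direction needing only a short bottom-up construction in \textsc{Seq}$_{\omega }^{\text{\textsc{pdl}}}$.

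For the forward implication, assume that $\Sigma =\left\langle p^{\ast }\right\rangle \!A,\Pi $ is derivable in \textsc{Seq}$_{\omega }^{\text{\textsc{pdl}}}$. Since $A\in \mathrm{BCNF}$ is star-free and $\Pi \in \mathcal{L}_{00}$, both $A$ and $\Pi $ belong to $\mathcal{L}_{0}$, so Theorem 11 applies and yields some $k\geq 0$ with $\widehat{\Sigma }_{k}=A,\left\langle p\right\rangle \!A,\cdots ,\left\langle p\right\rangle ^{k}\!A,\Pi $ derivable in \textsc{Seq}$_{0}^{\text{\textsc{pdl}}}$. Because $A$ uses only the atomic program $p=\pi _{0}$ and $\Pi $ is program-free, $\widehat{\Sigma }_{k}$ is in fact an $\mathcal{L}_{01}$-sequent; by the remark that \textsc{Seq}$_{01}^{\text{\textsc{pdl}}}$ proves the same $\mathcal{L}_{01}$-sequents as \textsc{Seq}$_{0}^{\text{\textsc{pdl}}}$, the derivability passes to \textsc{Seq}$_{01}^{\text{\textsc{pdl}}}$. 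Theorem 16, applied with $n=\sum_{i=1}^{m}n_{i}$, then upgrades this to the derivability of $\widehat{\Sigma }_{n+1}$, as required.

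For the converse, assume that $\widehat{\Sigma }_{n+1}$ is derivable in \textsc{Seq}$_{01}^{\text{\textsc{pdl}}}$, hence also in \textsc{Seq}$_{\omega }^{\text{\textsc{pdl}}}$. I construct a \textsc{Seq}$_{\omega }^{\text{\textsc{pdl}}}$ derivation of $\Sigma $ as follows: insert $\left\langle p^{\ast }\right\rangle \!A$ into $\widehat{\Sigma }_{n+1}$ by derivable weakening (Lemma 7), obtaining
\[
\left\langle p\right\rangle ^{n+1}\!A,\;\left\langle p^{\ast }\right\rangle \!A,\;A,\;\left\langle p\right\rangle \!A,\;\ldots ,\;\left\langle p\right\rangle ^{n}\!A,\;\Pi ,
\]
and then apply $\left\langle \ast \right\rangle $ (with $\overrightarrow{Q}=\emptyset $) in turn with $m=n+1,n,\ldots ,1,0$. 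At stage $m$ the premise $\left\langle p\right\rangle ^{m}\!A,\left\langle p^{\ast }\right\rangle \!A,\Gamma $ yields the conclusion $\left\langle p^{\ast }\right\rangle \!A,\Gamma $, absorbing one $\left\langle p\right\rangle ^{m}\!A$-copy; after $n+2$ such reductions every $\left\langle p\right\rangle ^{i}\!A$ has been eliminated and we reach $\left\langle p^{\ast }\right\rangle \!A,\Pi =\Sigma $.

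Neither direction hides any real obstacle: the forward one is a plug-and-play combination of Theorems 11 and 16, while the backward one is a routine $\left\langle \ast \right\rangle $-unfolding. The only bookkeeping care is that the sublanguage $\mathcal{L}_{01}$ is closed under the syntax of $\mathrm{BCNE}$ (so the transition from \textsc{Seq}$_{0}^{\text{\textsc{pdl}}}$ to \textsc{Seq}$_{01}^{\text{\textsc{pdl}}}$ is free) and that the $n+2$ applications of $\left\langle \ast \right\rangle $ match exactly the $n+2$ prefixes $A,\left\langle p\right\rangle \!A,\ldots ,\left\langle p\right\rangle ^{n+1}\!A$ present in $\widehat{\Sigma }_{n+1}$.
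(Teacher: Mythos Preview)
Your argument is correct and matches the paper's intended (but unwritten) proof: the forward direction is exactly the combination of the Herbrand-style Theorem~11 with the BCNF bound Theorem~17, and your converse via weakening plus iterated $\langle\ast\rangle$ is the natural syntactic route. Note only that your internal references are off by one---what you call ``Theorem~16'' is Theorem~17 in the paper (Theorem/Definition~16 is the definition of $\mathrm{BCNF}$), and the corollary itself is numbered~18.
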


\begin{corollary}
Let $S\in $ $\mathrm{BCNE}$. \negthinspace Problem $\mathbf{PDL}\vdash S$,
i.e. $\mathbf{PDL}$-validity of $S$, is solvable by a deterministic TM in $%
\mathcal{O}\left( \left| S\right| ^{2}\right) $ space.
\end{corollary}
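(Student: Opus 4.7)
The plan is to combine Corollary 16 with the PSPACE algorithm of Theorem 15, estimating carefully how the relevant sequent sizes grow.

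Let $S = \left\langle p^{\ast }\right\rangle \!A\vee Z \in \mathrm{BCNE}$, where $A = \overset{m}{\underset{i=1}{\bigwedge }}\left( B_{i}\vee \left\langle p\right\rangle \!C_{i}\vee \overset{n_{i}}{\underset{j=1}{\bigvee }}\left[ p\right] \!D_{i,j}\right)$ and $Z\in \mathcal{L}_{00}$. First I would note that $\mathbf{PDL}$-validity of $S$ is equivalent to derivability in \textsc{Seq}$_{\omega }^{\text{\textsc{pdl}}}$ of the sequent $\Sigma = \left\langle p^{\ast }\right\rangle \!A,Z$ (by Theorem 4 and the standard translation). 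Hence by Corollary 16, $S$ is $\mathbf{PDL}$-valid iff $\widehat{\Sigma }_{n+1}=\widehat{A}_{n+1},Z$ is derivable in \textsc{Seq}$_{01}^{\text{\textsc{pdl}}}$, where $n=\overset{m}{\underset{i=1}{\sum }}n_{i}$. The deterministic TM\ for $\mathbf{PDL}\vdash S$ thus proceeds in two phases: first construct (or implicitly address) the sequent $\widehat{\Sigma}_{n+1}$, then invoke the PSPACE procedure of Theorem 15 on this sequent.

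The central calculation is the size bound. Since each $\left[ p\right] \!D_{i,j}$ contributes at least one symbol to $A$, we have $n\leq \left| A\right| \leq \left| S\right|$. The expansion $\widehat{A}_{n+1}=A,\left\langle p\right\rangle \!A,\cdots ,\left\langle p\right\rangle ^{n+1}\!\!A$ consists of $n+2$ copies of (shifted) $A$, each of size at most $\left| A\right| +(n+1)$. Therefore
\[
\bigl| \widehat{\Sigma }_{n+1}\bigr| \leq (n+2)\left( \left| A\right| +n+1\right) +\left| Z\right| =\mathcal{O}\left( \left| S\right| ^{2}\right) .
\]
Since $\widehat{\Sigma}_{n+1}$ is determined syntactically from $S$ (extract $A$ from the disjunct under $\left\langle p^{\ast }\right\rangle$, count the occurrences of $\left[ p\right]$ in $A$ to obtain $n$, and assemble the list), it can be written on a work tape in time and space $\mathcal{O}(\left| S\right| ^{2})$.

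Finally I would apply Theorem 15 to the sequent $\widehat{\Sigma}_{n+1}$: that theorem provides a deterministic decision procedure for derivability in \textsc{Seq}$_{01}^{\text{\textsc{pdl}}}$ using space linear in the input sequent length. Plugging in $\bigl|\widehat{\Sigma}_{n+1}\bigr| = \mathcal{O}(\left| S\right| ^{2})$ yields the claimed $\mathcal{O}(\left| S\right| ^{2})$ bound. The only step that requires some care (and which I would flag as the main obstacle, modest though it is) is verifying that $n \leq \left| A\right|$ in the intended notion of ``plain complexity''; this is essentially bookkeeping, but it is the hinge on which the quadratic bound rests, and a looser estimate of $n$ in terms of $\left|S\right|$ would only give a higher polynomial bound rather than the stated $\mathcal{O}(\left|S\right|^{2})$.
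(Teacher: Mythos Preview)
Your proposal is correct and follows essentially the same route as the paper: reduce $\mathbf{PDL}\vdash S$ to \textsc{Seq}$_{01}^{\text{\textsc{pdl}}}\vdash \widehat{A}_{n+1},Z$ via soundness/completeness and the BCNF corollary, bound $n<|A|$ so that $|\widehat{A}_{n+1},Z|=\mathcal{O}(|S|^{2})$, and then invoke the linear-space decision procedure for \textsc{Seq}$_{01}^{\text{\textsc{pdl}}}$. The only discrepancy is in your cross-references (the PSPACE result is Theorem~14 and the BCNF reduction is Corollary~18 in the paper's numbering, not 15 and 16), but the mathematical content matches.
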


\begin{proof}
For $A$ as above we have $n<\left| A\right| $, and hence$\left| \widehat{A}%
_{n+1}\right| =\mathcal{O}\left( \left| A\right| ^{2}\right) $. This yields $%
\left| \widehat{A}_{n+1},Z\right| =\mathcal{O}\left( \left| A\right|
^{2}+\left| Z\right| \right) =\mathcal{O}\left( \left| S\right| ^{2}\right) $%
. Now by Theorem 4 followed by Theorems 11, 18 we have 
\begin{equation*}
\fbox{$\mathbf{PDL}\vdash S\Leftrightarrow \text{\textsc{Seq}}_{\omega }^{%
\text{\textsc{pdl}}}\vdash S$ $\Leftrightarrow $ \textsc{Seq}$_{01}^{\text{%
\textsc{pdl}}}\vdash \widehat{A}_{n+1},Z$}
\end{equation*}
while problem \textsc{Seq}$_{01}^{\text{\textsc{pdl}}}\vdash \widehat{A}%
_{n+1},Z$ is solvable in $\mathcal{O\!}\left( \left| \widehat{A}%
_{n+1},Z\right| \right) \!=\!\mathcal{O\!}\left( \left| S\right| ^{2}\right) 
$ space.
\end{proof}

Now consider (dual) basic disjunctive normal forms.

\begin{definition}
Call \emph{basic disjunctive normal form} (abbr.: \emph{BDNF}) any $\mathcal{%
L}_{00}$-formula$\!$ $F\vee \overset{s}{\underset{i=1}{\bigvee }}\left(
F_{i}\wedge \left[ p\right] G_{i}\right) \vee \overset{t}{\underset{j=1}{%
\bigvee }}\left( F_{j}\wedge \left\langle p\right\rangle \!H_{j}\right) $
for $s,t>0$ and $F$, $F_{i}$, $G_{i}$, $H_{j}$ $\in \mathcal{L}_{\emptyset
}\cup \left\{ \emptyset \right\} $. Formulas $\left\langle p^{\ast
}\right\rangle \!A\vee Z$ for $A\in \mathrm{BDNF}$ and $Z\in \!\mathcal{L}%
_{\emptyset }$ are called \emph{basic disjunctive normal expressions}
(abbr.: \emph{BDNE}).
\end{definition}

\begin{problem}
Let $S\in \mathrm{BDNE}$. \negthinspace Is problem $\mathbf{PDL}\vdash S$
solvable by a TM in $\left| S\right| $-polynomial space?
\end{problem}

\subsubsection{More on BDNE}

\textbf{PDL}-satisfiability problem for certain statements \textsc{Accepts}$%
_{M,x}=\left[ p^{\ast }\right] V\wedge W$ for $V\in $ $\mathrm{BCNF}\!$, $%
W\in \mathcal{L}_{\emptyset }$ -- expressing that satisfying Kripke frames
encode accepting computations of polynomial-space alternating TM -- is known
to be EXPTIME-complete (cf. \cite{ModalLogic} and \cite{DynamicLogic}:
Theorem 8.5, et al; see also \cite{Spaan}). Hence so is also the \textbf{PDL}%
-validity problem for the corresponding negations $S:=\overline{\text{%
\textsc{Accepts}}}_{M,x}=\left\langle p^{\ast }\right\rangle \!A\vee Z\in $ $%
\mathrm{BDNE}$. So the affirmative solution to Problem 22 would infer $%
\mathbf{EXPTIME=PSPACE}$ (and vice versa, since general \textbf{PDL}%
-validity is EXPTIME-complete). That is, problem $\mathbf{EXPTIME=PSPACE}$
reduces to a particular case of Problem 22 for $S:=\overline{\text{\textsc{%
Accepts}}}_{M,x}$ (see Appendix B for precise definition).

Now let $S=\left\langle p^{\ast }\right\rangle \!A\vee Z\in $ $\mathrm{BDNE}$
for $A\!=\!F\vee \overset{s}{\underset{i=1}{\bigvee }}\!\left( F_{i}\!\wedge
\!\left[ p\right] G_{i}\right) \vee \overset{t}{\underset{j=1}{\bigvee }}%
\!\left( F_{j}\!\wedge \!\left\langle p\right\rangle \!H_{j}\right) \!\in 
\mathrm{BDNF}$ and $Z\in \!\mathcal{L}_{\emptyset }$. We wish to present the
assertion $\mathbf{PDL}\vdash S$ in a suitable ``transparent'' quantified
boolean form. To this end, by DeMorgan laws, we first convert $A\!$\ to $R=%
\underset{\xi \in \Xi }{\bigwedge }R_{\xi }\in \mathrm{BCNF}$, where $R_{\xi
}=B_{\xi }\vee \left\langle p\right\rangle \!C_{\xi }\vee \underset{\jmath
\in J_{\xi }}{\bigvee }\left[ p\right] \!D_{\xi ,\jmath }$ for $\Xi
:=\left\{ \xi =\left( \xi \left( 1\right) ,\cdots ,\xi \left( s+t\right)
\right) \right\} $ with $\xi \left( k\right) \in \left\{ 1,2\right\} $, $%
1\leq k\leq s+t$, while 
\begin{eqnarray*}
B_{\xi } &:&=F\vee \bigvee \left\{ F_{k}:1\leq k\leq s+t\wedge \xi \left(
k\right) =1\right\} , \\
C_{\xi } &:&=\bigvee \left\{ H_{k-s}:s<k\leq t\wedge \xi \left( k\right)
=2\right\} , \\
D_{\xi ,\jmath } &:&=G_{\jmath }\text{ for }\jmath \in J_{\xi }:=\left\{
k:1\leq k\leq s\wedge \xi \left( i\right) =2\right\} .\text{\ }
\end{eqnarray*}
Clearly $\mathbf{PDL}\vdash A\leftrightarrow R$ (also by \textbf{PDL}%
-equivalence $\left\langle p\right\rangle \!H\vee \left\langle
p\right\rangle \!H^{\prime }\leftrightarrow $ $\left\langle p\right\rangle
\!\left( H\vee \!H^{\prime }\right) \,$). Note that $\left| \Xi \right|
=2^{s+t}$ and $\left| R_{\xi }\right| <\left| A\right| $, for every $\xi \in
\Xi $.

By the cut-elimination theorem, $\mathbf{PDL}\vdash S$\ is equivalent to 
\textsc{Seq}$_{\omega }^{\text{\textsc{pdl}}}\vdash \left\langle p^{\ast
}\right\rangle \!R,Z$, which by Theorem 18 is equivalent to \textsc{Seq}$%
_{00}^{\text{\textsc{pdl}}}\vdash \widehat{R}_{n+1},Z$, where 
\begin{equation*}
\widehat{R}_{n+1}=R,\left\langle p\right\rangle \!R,\cdots ,\left\langle
p\right\rangle ^{n+1}\!R
\end{equation*}
for $n:=\underset{\xi \in \Xi }{\sum }\left| J_{\xi }\right| <s\cdot \left|
\Xi \right| =s2^{s+t}$. Arguing as in the proof of Theorem 18 we get 
\begin{equation*}
\fbox{$\mathbf{PDL}\vdash S$\ $\Leftrightarrow \ $\textsc{Seq}$_{00}^{\text{%
\textsc{pdl}}}\vdash \widehat{R}_{n+1},Z\Leftrightarrow f\!\left(
s2^{t}+1,Z\right) =1$}
\end{equation*}
where $f$ is a boolean-valued binary function that is defined for every $%
\imath \geq 0$ and propositional formula $X$ by the following recursive
clauses 1--2, where ``$\,\vdash _{\emptyset }Y$ '' stands for plain boolean
validity of propositional formula$\ Y$.

\begin{enumerate}
\item  \noindent \fbox{$f\!\left( 0,X\right) =1\Leftrightarrow \underset{\xi
\in \Xi }{\bigwedge }\left( \vdash _{\emptyset }\left( B_{\xi }\vee X\right)
\ or\ \underset{\text{ }j\in _{\xi }}{\bigvee }\vdash _{\emptyset }\left(
C_{\xi }\vee D_{\xi ,j}\right) \right) $}

\item  \fbox{$f\!\left( \imath +1,X\right) \!=\!1\Leftrightarrow \underset{%
\xi \in \Xi }{\bigwedge }\left( \!\vdash _{\emptyset }\left( B_{\xi }\vee
X\right) \ or\ \!\underset{\text{ }j\in J_{\xi }}{\bigvee }f\!\left( \imath
,C_{\xi }\vee D_{\xi ,j}\right) \!={}\!1\right) $}
\end{enumerate}

Note that every ``$\,\vdash _{\emptyset }Y$ '' involved is expressible in
quantified boolean logic as $\forall x_{1}\cdots \forall x_{q}Y$, where $%
\left\{ x_{1},\cdots ,x_{q}\right\} $ is the set of propositional variables
occurring in $Y$. Having this, by recursion on $\imath $ with respect to
clauses 1--2 we obtain a desired ``transparent'' quantified boolean formula $%
\widehat{S}$ such that

\begin{equation*}
\fbox{$\mathbf{PDL}\vdash S$\ \ $\Leftrightarrow f\!\left(
s2^{s+t}+1,Z\right) =1\Leftrightarrow \mathbf{QBL}\vdash \widehat{S}$}
\end{equation*}
(\textbf{QBL} being\ the canonical proof system for quantified boolean
logic).

\begin{remark}
The size of $\widehat{S}$ is exponential in that of $S$, \footnote{%
{\footnotesize This is in contrast to analogous polynomial BCNE case, see
Corollary 20.}}\ whereas quantified boolean validity (and/or satisfiability)
is known to be PSPACE-complete (cf. e.g. \cite{Papa}). Hence $\mathbf{%
EXPTIME=PSPACE}$ holds if $\widehat{S}$ is equivalid with another quantified
boolean formula whose size is polynomial in the size of $S$. Moreover, this
holds true of $S:=\overline{\text{\textsc{Accepts}}}_{M,x}$ (see above and
Appendix B).
\end{remark}

\subsection{Conclusion}

Soundness and completeness together with full cut elimination [Theorems 4,
8] in semiformal (infinite) sequent calculus \textsc{Seq}$_{\omega }^{\text{%
\textsc{pdl}}}$ shows that Hilbert-Bernays-style proof system \textbf{PDL}
is a conservative extension of formal (finite) cutfree sequent calculi 
\textsc{Seq}$_{00}^{\text{\textsc{pdl}}}\varsubsetneq \,$\textsc{Seq}$_{0}^{%
\text{\textsc{pdl}}}$ and \textsc{Seq}$_{10}^{\text{\textsc{pdl}}%
}\varsubsetneq \,$\textsc{Seq}$_{1}^{\text{\textsc{pdl}}}$ with respect to
the corresponding classes of formulas \textrm{FOR}$_{00}\varsubsetneq \,$%
\textrm{FOR}$_{0}$ and \textrm{FOR}$_{10}\varsubsetneq \,$\textrm{FOR}$_{1}$%
, respectively. I.e., for any $A\in \,$\textrm{FOR}$_{\dagger }$, $\mathbf{%
PSP}\vdash A$ implies \textsc{Seq}$_{\dagger }^{\text{\textsc{pdl}}}\vdash A$
($\dagger \in \left\{ 0,1,00,10\right\} $). Here we let

\textrm{FOR}$_{1}$ := subset of \textrm{FOR} whose seq-formulas don't
include occurrences $\left[ P^{\ast }\right] $.

\textrm{FOR}$_{10}$ := subset of \textrm{FOR}$_{1}$ with atomic programs 
\textrm{PRO}$_{0}$.

\textrm{FOR}$_{0}$ := subset of \textrm{FOR}$_{1}$ whose seq-formulas don't
include occurrences $\left\langle P^{\ast }\right\rangle $, i.e. \textrm{FOR}%
$_{0}$ is just star-free fragment of \textrm{FOR}.

\textrm{FOR}$_{00}$ := subset of \textrm{FOR}$_{0}$ with atomic programs 
\textrm{PRO}$_{0}$.

Basic program connectives are interpretable in \textrm{FOR}$_{00}$ and 
\textrm{FOR}$_{10}$ by $\left( P;Q\right) \!A:=\left( P\right) \!\left(
Q\right) \!A$, $\left\langle P\cup Q\right\rangle \!A:=\left\langle
P\right\rangle \!A\vee \left\langle Q\right\rangle \!A$\ and $\left[ P\cup Q%
\right] \!A:=\left[ P\right] \!A\wedge \left[ Q\right] \!A$.

\textsc{Seq}$_{00}^{\text{\textsc{pdl}}}:=$%
\begin{equation*}
\begin{array}{c}
\quad 
\begin{array}{c}
\fbox{$\left( \text{\textsc{Ax}}\right) \quad x,\lnot x,\Gamma $}
\end{array}
\quad \quad \qquad \\ 
\begin{array}{c}
\fbox{$\left( \vee \right) \quad \dfrac{A,B,\Gamma }{A\vee B,\Gamma }\ $}%
\fbox{$\left( \wedge \right) \quad \dfrac{A,\Gamma \quad \quad B,\Gamma }{%
A\wedge B,\Gamma }\ $}
\end{array}
\\ 
\qquad \qquad \text{\ }\fbox{$
\begin{array}{c}
\left( \text{\textsc{Gen}}\right) \quad \dfrac{A_{1},\cdots ,A_{n}}{\left(
p\right) _{\chi _{1}}\!\!A_{1},\cdots ,\left( p\right) \!_{\chi
_{n}}\!A_{n},\Gamma }\ \left( n>0\right) \\ 
\text{if }\overset{n}{\underset{i=1}{\sum }}\chi _{i}=1.
\end{array}
$}\qquad \qquad \qquad
\end{array}
\end{equation*}

\textsc{Seq}$_{0}^{\text{\textsc{pdl}}}:=$%
\begin{equation*}
\begin{array}{c}
\begin{array}{c}
\fbox{$\left( \text{\textsc{Ax}}\right) \quad x,\lnot x,\Gamma $}
\end{array}
\quad \quad \qquad \\ 
\begin{array}{c}
\fbox{$\left( \vee \right) \quad \dfrac{A,B,\Gamma }{A\vee B,\Gamma }\ $}%
\fbox{$\left( \wedge \right) \quad \dfrac{A,\Gamma \quad \quad B,\Gamma }{%
A\wedge B,\Gamma }\ $}
\end{array}
\\ 
\ 
\begin{array}{c}
\fbox{$\left\langle \cup \right\rangle \quad \dfrac{\left\langle
P\right\rangle \!A,\left\langle R\right\rangle \!A,\Gamma }{\left\langle
P\cup R\right\rangle \!A,\Gamma }\ $}\fbox{$\left[ \cup \right] \quad \dfrac{%
\left[ P\right] \!A,\Gamma \text{\qquad }\left[ R\right] \!A,\Gamma }{\left[
P\cup R\right] \!A,\Gamma }\ $}
\end{array}
\, \\ 
\begin{array}{c}
\fbox{$\left\langle ;\right\rangle \quad \dfrac{\left\langle P\right\rangle
\!\!\left\langle R\right\rangle \!A,\Gamma }{\left\langle P;R\right\rangle
\!A,\Gamma }\ $}\fbox{$\left[ ;\right] \quad \dfrac{\left[ P\right] \!\left[
R\right] \!A,\Gamma }{\left[ P;R\right] \!A,\Gamma }\ $}
\end{array}
\\ 
\qquad \fbox{$
\begin{array}{c}
\left( \text{\textsc{Gen}}\right) \quad \dfrac{A_{1},\cdots ,A_{n}}{\left(
P\right) _{\chi _{1}}\!\!A_{1},\cdots ,\left( P\right) \!_{\chi
_{n}}\!A_{n},\Gamma }\ \left( n>0\right) \\ 
\text{if }\overset{n}{\underset{i=1}{\sum }}\chi _{i}=1.
\end{array}
$}\qquad \ 
\end{array}
\end{equation*}

\textsc{Seq}$_{10}^{\text{\textsc{pdl}}}:=$%
\begin{equation*}
\begin{array}{c}
\quad 
\begin{array}{c}
\fbox{$\left( \text{\textsc{Ax}}\right) \quad x,\lnot x,\Gamma $}
\end{array}
\quad \quad \qquad \\ 
\begin{array}{c}
\fbox{$\left( \vee \right) \quad \dfrac{A,B,\Gamma }{A\vee B,\Gamma }\ $}%
\fbox{$\left( \wedge \right) \quad \dfrac{A,\Gamma \quad \quad B,\Gamma }{%
A\wedge B,\Gamma }\ $}
\end{array}
\\ 
\fbox{$\left\langle \ast \right\rangle $\quad $\dfrac{\left\langle
p\right\rangle ^{m}\!\!A,\left\langle p^{\ast }\right\rangle \!A,\Gamma }{%
\!\!\left\langle p^{\ast }\right\rangle \!A,\Gamma }\left( m\geq 0\right) $}
\\ 
\qquad \qquad \text{\ }\fbox{$
\begin{array}{c}
\left( \text{\textsc{Gen}}\right) \quad \dfrac{A_{1},\cdots ,A_{n}}{\left(
p\right) _{\chi _{1}}\!\!A_{1},\cdots ,\left( p\right) \!_{\chi
_{n}}\!A_{n},\Gamma }\ \left( n>0\right) \\ 
\text{if }\overset{n}{\underset{i=1}{\sum }}\chi _{i}=1.
\end{array}
$}\qquad \qquad \qquad
\end{array}
\end{equation*}

\textsc{Seq}$_{1}^{\text{\textsc{pdl}}}:=$%
\begin{equation*}
\begin{array}{c}
\begin{array}{c}
\fbox{$\left( \text{\textsc{Ax}}\right) \quad x,\lnot x,\Gamma $}
\end{array}
\quad \quad \qquad \\ 
\begin{array}{c}
\fbox{$\left( \vee \right) \quad \dfrac{A,B,\Gamma }{A\vee B,\Gamma }\ $}%
\fbox{$\left( \wedge \right) \quad \dfrac{A,\Gamma \quad \quad B,\Gamma }{%
A\wedge B,\Gamma }\ $}
\end{array}
\\ 
\ 
\begin{array}{c}
\fbox{$\left\langle \cup \right\rangle \quad \dfrac{\left\langle
P\right\rangle \!A,\left\langle R\right\rangle \!A,\Gamma }{\left\langle
P\cup R\right\rangle \!A,\Gamma }\ $}\fbox{$\left[ \cup \right] \quad \dfrac{%
\left[ P\right] \!A,\Gamma \text{\qquad }\left[ R\right] \!A,\Gamma }{\left[
P\cup R\right] \!A,\Gamma }\ $}
\end{array}
\, \\ 
\begin{array}{c}
\fbox{$\left\langle ;\right\rangle \quad \dfrac{\left\langle P\right\rangle
\!\!\left\langle R\right\rangle \!A,\Gamma }{\left\langle P;R\right\rangle
\!A,\Gamma }\ $}\fbox{$\left[ ;\right] \quad \dfrac{\left[ P\right] \!\left[
R\right] \!A,\Gamma }{\left[ P;R\right] \!A,\Gamma }\ $}
\end{array}
\qquad \  \\ 
\fbox{$\left\langle \ast \right\rangle $\quad $\dfrac{\left\langle 
\overrightarrow{Q}\right\rangle \!\!\left\langle P\right\rangle
^{m}\!\!A,\left\langle \overrightarrow{Q}\right\rangle \!\!\left\langle
P^{\ast }\right\rangle \!A,\Gamma }{\left\langle \overrightarrow{Q}%
\right\rangle \!\!\left\langle P^{\ast }\right\rangle \!A,\Gamma }\left(
m\geq 0\right) $}\  \\ 
\fbox{$
\begin{array}{c}
\left( \text{\textsc{Gen}}\right) \quad \dfrac{A_{1},\cdots ,A_{n}}{\left(
P\right) _{\chi _{1}}\!\!A_{1},\cdots ,\left( P\right) \!_{\chi
_{n}}\!A_{n},\Gamma }\ \left( n>0\right) \\ 
\text{if }\overset{n}{\underset{i=1}{\sum }}\chi _{i}=1.
\end{array}
$}
\end{array}
\end{equation*}

It is readily seen that \textsc{Seq}$_{0}^{\text{\textsc{pdl}}}$ and \textsc{%
Seq}$_{00}^{\text{\textsc{pdl}}}$ (resp. \textsc{Seq}$_{1}^{\text{\textsc{pdl%
}}}$ and \textsc{Seq}$_{10}^{\text{\textsc{pdl}}}$) have the same provable
sequents modulo basic interpretation of \textrm{FOR}$_{0}$ (\textrm{FOR}$%
_{1} $) within \textrm{FOR}$_{00}$ (\textrm{FOR}$_{10}$).

As usual in proof theory, our cutfree sequent calculi provide useful help in
the verification of \textbf{PDL}-(un)provability of concrete formulas of
simple shapes. Concerning general computational complexities the following
holds.

1. The derivability (provability) in \textsc{Seq}$_{00}^{\text{\textsc{pdl}}%
} $ is a PSPACE problem. The case of \textsc{Seq}$_{0}^{\text{\textsc{pdl}}}$
is less clear as our interpretation of program connectives does not preserve
polynomial size.

2. The derivability (provability) in \textsc{Seq}$_{1}^{\text{\textsc{pdl}}}$
is EXPTIME-complete and in fact so is the derivability in \textsc{Seq}$%
_{10}^{\text{\textsc{pdl}}}$, too.

3. The latter is characteristic also for a subclass BDNE (: ``basic
disjunctive normal expressions'') of \textrm{FOR}$_{10}$, whereas \textsc{Seq%
}$_{10}^{\text{\textsc{pdl}}}$-derivability of dual BCNE expressions (:
``basic conjunctive normal expressions'') turns out to be decidable in
polynomial space. Moreover, \textsc{Seq}$_{10}^{\text{\textsc{pdl}}}$%
-derivability (and hence \textbf{PDL}-provability) of BDNE is equivalent to 
\textbf{QBL}-validity of corresponding ``transparent'' quantified boolean
formulas of exponential length.

Proofs of our claims use transfinite induction on predicative ordinal $%
\varphi _{\omega }\!\left( 0\right) $. It is not clear yet whether
conservative extension results (see above) are provable in Peano Arithmetic.

\subparagraph{Relevant papers}

\cite{HillPog} formalized \textbf{PDL} as proof system \textbf{CSPDL} with $%
\omega $-rule for $\left[ \ast \right] $ that is based on \emph{hypersequents%
} (more precisely: \emph{zoom tree hypersequents}), rather than sequents.
Sequent calculi proper are not exposed there. Consequently, conservative
extension corollaries and complexity connections are not mentioned, either.
It might appear, however, that the hypersequents were chosen to allow ``deep
inferences'', i.e. transformations applied to subformulas of those
explicitly shown (along the lines of Herbrand version \cite{HillPog} of
Gentzen's sequent calculus). Cut-elimination theorem is claimed for \textbf{%
CSPDL} but proof thereof is informal. Apparently it should follow by
Sch\"{u}tte-style predicative pattern adapted to the hypersequents, instead
of plain sequents. However no ordinal bounds on the height transformation is
given and hence proof theoretic strength of the required logic formalism
remains unclear. According to well-known predicative cut elimination in the
presence of $\omega $-rule(s), one would expect it to be the same as in the
present paper, provided that ordinal complexity of cut formulas with nested
occurrences of starred programs has the same natural upper bound $\omega
^{\omega }$ (which does not explicitly follow from \cite{HillPog}:
Definition 3.1).

\cite{HartLar} considered finite sequent calculus \textbf{GPDL} in an
extended language with mixed formulas (possibly including atomic programs)
and contextual sequents (whose antecedents and/or succedents might include
program terms). It is claimed that all but analytic cuts in special form can
be eliminated from \textbf{GPDL} derivations. There is no discussion of
possible conservative extensions and/or computational complexity connections.

\cite{GoreWid} presented a different approach in form of an optimal
tableau-based EXPTIME algorithm for deciding satisfiability for \textbf{PDL}
with converse (\textbf{CPDL}) without the use of analytic cuts. In order to
decide the satisfiability of a given input formula $\phi $ the algorithm
builds a suitable directed graph $G$ and checks the applicability of one of
the four attached rules $Rule\ 1$, ..., $Rule\ 4$. There is no obvious
translation into plain sequent calculus formalism.

\section{Appendix A: Ordinal assignments}

\subsection{Ordinal arithmetic}

We use basic properties 1--8 of Veblen's ordinals (abbr.:\negthinspace\ $%
\alpha $, $\!\beta $,\negthinspace\ $\gamma $,\negthinspace\ $\delta $) ( 
\cite{Veblen},\negthinspace\ \cite{Fef2},\negthinspace\ \cite{Poh}).

\begin{enumerate}
\item  Basic relation $<$ is linearly ordered.

\item  Symmetric sum is associative and commutative.

\item  $0<1=\omega ^{0}$, $\omega =\omega ^{1}$, $\omega ^{\beta }=\varphi
\left( 0,\beta \right) $.

\item  $\alpha +\!\!\!\!\!+\,0=\alpha $, $\alpha <\beta \rightarrow \alpha
+\!\!\!\!\!+\gamma <\beta +\!\!\!\!\!+\gamma +\!\!\!\!\!+\delta $.

\item  $\alpha <\beta \rightarrow \varphi \left( \alpha ,\gamma \right)
<\varphi \left( \beta ,\gamma \right) \wedge \varphi \left( \gamma ,\alpha
\right) <\varphi \left( \gamma ,\beta \right) $.

\item  $\alpha \leq \beta <\varphi \left( \gamma ,\delta \right) \rightarrow
\alpha +\!\!\!\!\!+\beta <\varphi \left( \gamma ,\delta \right) $.

\item  $\alpha <\beta \wedge \gamma <\varphi \left( \beta ,\delta \right)
\rightarrow \varphi \left( \alpha ,\gamma \right) <\varphi \left( \beta
,\delta \right) \wedge \varphi \left( \alpha ,\varphi \left( \beta ,\delta
\right) \right) =\varphi \left( \beta ,\delta \right) $.

\item  $\alpha \leq \omega ^{\alpha }$, $0<\alpha \rightarrow \omega
^{\varphi \left( \alpha ,\beta \right) }=\varphi \left( \alpha ,\beta
\right) $.
\end{enumerate}

$\varphi \left( \alpha ,\beta \right) $ is also denoted by $\varphi _{\alpha
}\!\left( \beta \right) $. Note that $\varepsilon _{0}=\varphi _{1}\left(
0\right) <\varphi _{\omega }\!\left( 0\right) <\Gamma _{0}$.

In the rest of this chapter we freely use these properties without explicit
references.

\subsection{Cut elimination $\partial \hookrightarrow \mathcal{E}\left(
\partial \right) $}

For the sake of brevity we'll slightly refine our inductive definition of $%
\mathcal{E\!}\left( \partial \right) $. To this end we upgrade $\mathcal{R}$
to $\mathcal{R}^{+}:\left( \partial \mid \!\!\!\frac{\beta }{\,\rho +\omega
^{a}}\,\Delta \right) \hookrightarrow \left( \mathcal{R}^{+}\mathcal{\!}%
\left( \rho ,\alpha ,\partial \right) \mid \!\!\!\frac{\,\varphi \left(
\alpha ,\beta \right) }{\,\rho }\,\Delta \right) $. That is, for any $\rho
>0 $, $\alpha $ and $\left( \partial :\Delta \right) $\ with $\deg \left(
\partial \right) <\rho +\omega ^{a}$ we define $\left( \mathcal{R}^{+}%
\mathcal{\!}\left( \rho ,\alpha ,\partial \right) :\Delta \right) $ such
that $\deg \left( \mathcal{R}^{+}\mathcal{\!}\left( \rho ,\alpha ,\partial
\right) \right) <\rho $ and $h\left( \mathcal{R}^{+}\mathcal{\!}\left( \rho
,\alpha ,\partial \right) \right) <$ $\varphi \left( \alpha ,h\left(
\partial \right) \right) $. Then for any $\partial $ with cuts\ we let 
\begin{equation*}
\fbox{$\mathcal{E}\left( \partial \right) :=\mathcal{R}^{+}\mathcal{\!}%
\left( 1,\alpha ,\partial \right) $, where $\alpha :=\min \left\{ \beta
:\deg \left( \partial \right) <\omega ^{\beta }\right\} $}
\end{equation*}
and conclude that $\deg \left( \mathcal{E}\left( \partial \right) \right) =0$
and $h\left( \mathcal{E}\left( \partial \right) \right) <\varphi \left(
\alpha ,h\left( \partial \right) \right) $.

Now $\mathcal{R}^{+}\mathcal{\!}\left( \rho ,\alpha ,\partial \right) $ is
defined for any $\partial $ with $\deg \left( \partial \right) <\rho +\omega
^{a}$ as follows by double induction on $\alpha $ and $h\left( \partial
\right) $. Let $\left( \text{\textsc{R}}\right) $ be the lowermost inference
in $\partial $. If $\left( \text{\textsc{R}}\right) $ is not a $\left( \text{%
\textsc{Cut}}\right) $ on $C$ with $\frak{o\!}\left( C\right) +1\geq \rho $
then $\mathcal{R}^{+}\mathcal{\!}\left( \rho ,\alpha ,\partial \right) $
arises from $\partial $ by substituting $\mathcal{R}^{+}\mathcal{\!}\left(
\rho ,\alpha ,\partial _{i}\right) $ for the lowermost subdeductions $%
\partial _{i}$ (recall that $h\left( \partial _{i}\right) <h\left( \partial
\right) $ ). Otherwise, we have 
\begin{equation*}
\left( \partial :\Gamma \cup \Pi \right) =\dfrac{\left( \partial
_{1}:C,\Gamma \right) \quad \quad \left( \partial _{2}:\overline{C},\Pi
\right) }{\Gamma \cup \Pi }\ \left( \text{\textsc{Cut}}\right)
\end{equation*}
where $\rho \leq \frak{o\!}\left( C\right) +1\leq \deg \left( \partial
\right) <\rho +\omega ^{a}$. Let 
\begin{equation*}
\left( \widehat{\partial }:\Gamma \cup \Pi \right) :=\dfrac{\left( \mathcal{R%
}^{+}\left( \rho ,\alpha ,\partial _{1}\right) :C,\Gamma \right) \quad \quad
\left( \mathcal{R}^{+}\mathcal{\!}\left( \rho ,\alpha ,\partial _{2}\right) :%
\overline{C},\Pi \right) }{\Gamma \cup \Pi }\ \left( \text{\textsc{Cut}}%
\right)
\end{equation*}
and consider two cases.

\subparagraph{Case $\protect\alpha =0$.}

Let $\mathcal{R}^{+}\mathcal{\!}\left( \rho ,\alpha ,\partial \right) =%
\mathcal{R}^{+}\mathcal{\!}\left( \rho ,0,\partial \right) :=\mathcal{R}%
\left( \widehat{\partial }\right) $. Recall that 
\begin{equation*}
\deg \left( \mathcal{R}\left( \widehat{\partial }\right) \right) <\deg
\left( \widehat{\partial }\right) =\frak{o\!}\left( C\right) +1\leq \deg
\left( \partial \right) <\rho +1
\end{equation*}
and hence $\deg \left( \mathcal{R}^{+}\mathcal{\!}\left( \rho ,\alpha
,\partial \right) \right) =\deg \left( \mathcal{R}\left( \widehat{\partial }%
\right) \right) <\rho $. On the other hand 
\begin{eqnarray*}
h\left( \mathcal{R}\left( \widehat{\partial }\right) \right) \! &<&\!h\left( 
\mathcal{R}^{+}\mathcal{\!}\left( \rho ,\alpha ,\partial _{1}\right) \right)
+\!\!\!\!\!+\,\,h\left( \mathcal{R}^{+}\mathcal{\!}\left( \rho ,\alpha
,\partial _{2}\right) \right) +\omega \\
&\leq &\omega ^{\!h\left( \mathcal{R}^{+}\mathcal{\!}\left( \rho ,\alpha
,\partial _{1}\right) \right) }+\!\!\!\!\!+\,\,\omega ^{h\left( \mathcal{R}%
^{+}\mathcal{\!}\left( \rho ,\alpha ,\partial _{2}\right) \right) }+\omega \\
&<&\omega ^{\!h\left( \widehat{\partial }\right) }=\varphi \left( 0,h\left( 
\widehat{\partial }\right) \right)
\end{eqnarray*}
which yields $h\left( \mathcal{R}^{+}\mathcal{\!}\left( \rho ,\alpha
,\partial \right) \right) \!=\!h\left( \mathcal{R}\left( \widehat{\partial }%
\right) \right) \!<\!\varphi \left( 0,h\left( \widehat{\partial }\right)
\right) \!$, as desired.

\subparagraph{Case $\protect\alpha >0$.}

Thus $\omega ^{a}=\omega ^{\alpha _{1}}+\cdots +\omega ^{\alpha _{n}}$ for $%
\alpha >\alpha _{1}\geq \cdots \geq \alpha _{n}$ (by Cantor's normal form).
In this case we apply inductive hypotheses successively for $\alpha
_{1},\cdots ,\alpha _{n}$ and let 
\begin{equation*}
\mathcal{R}^{+}\mathcal{\!}\left( \rho ,\alpha ,\partial \right) :=\mathcal{R%
}^{+}\mathcal{\!}\left( \rho ,\alpha _{1},\mathcal{R}^{+}\mathcal{\!}\left(
\rho _{1},\alpha _{2},\mathcal{R}^{+}\left( \cdots ,\mathcal{R}^{+}\mathcal{%
\!}\left( \rho _{n-1},\alpha _{n},\partial \right) \right) \right) \right) 
\end{equation*}
where $\rho _{0}:=\rho $ and $\left( \forall i>0\right) \rho _{i+1}:=\rho
_{i}+\omega ^{\alpha _{i+1}}$. Then $\deg \left( \mathcal{R}^{+}\mathcal{\!}%
\left( \rho ,\alpha ,\partial \right) \right) <\rho $ and $h\left( \mathcal{R%
}^{+}\mathcal{\!}\left( \rho ,\alpha ,\partial \right) \right) <\varphi
\left( \alpha _{1},\varphi \left( \alpha _{2},\varphi \left( \cdots ,\varphi
\left( \alpha _{n},h\left( \partial \right) \right) \right) \right) \right)
<\varphi \left( \alpha ,h\left( \partial \right) \right) $, as desired.

\subsection{Formalization}

We fix a chosen ``canonical'' primitive recursive ordinal representation 
\begin{equation*}
\fbox{$\mathcal{O}=\left\langle 0,1,\omega ,<,+,+\!\!\!\!+,\omega ^{\left(
-\right) },\varphi \left( -,-\right) \right\rangle $}
\end{equation*}
(also known as \emph{system of ordinal notations}) in the language of $%
\mathbf{PA}$ that is supposed to be well-ordered by $<$ up to $\varphi
_{\omega }\left( 0\right) $ (at least). To formalize the latter assumption
we extend\ standard formalism of $\mathbf{PA}$ by the transfinite induction
axiom (schema) for arbitrary arithmetical formulas, $\mathrm{TI}_{\mathcal{O}%
}\left( \varphi _{\omega }\!\left( 0\right) \right) $. The extended proof
system is abbreviated by $\mathbf{PA}_{\varphi _{\omega }\left( 0\right) }$.
Derivations $\partial $ used in the proofs are interpreted as primitive
recursive trees whose nodes $x$ are labeled with sequents and ordinals $%
ord\left( x\right) <\varphi _{\omega }\!\left( 0\right) $. Having this it is
easy to formalize in $\mathbf{PA}_{\varphi _{\omega }\left( 0\right) }$ the
whole cut elimination proof; note that the operators $\mathcal{R}$, $%
\mathcal{R}^{+}$\ and $\mathcal{E}$\ involved are constructively defined and 
$\mathrm{TI}_{\mathcal{O}}\left( \varphi _{\omega }\!\left( 0\right) \right) 
$ is used in the corresponding termination-and-correctness proofs only.
Actually we can restrict $\mathrm{TI}_{\mathcal{O}}\left( \varphi _{\omega
}\!\left( 0\right) \right) $ to primitive recursive induction formulas thus
reducing $\mathbf{PA}_{\varphi _{\omega }\left( 0\right) }$ to $\mathbf{PRA}%
_{\varphi _{\omega }\left( 0\right) }$.

\section{Appendix B: Formula \textsc{Accepts}$_{M,x}$ \protect\footnote{%
{\footnotesize This is a recollection of \cite{DynamicLogic}: 8.2.}}}

\subsection{Semantics}

Consider a given polynomial-space-bounded $k$-tape alternating Turing
machine $M$ on a given input $x$ of length $n$ with blanks over $M$'s input
alphabet; $\vdash $ and $\dashv $ are the left and right endmarkers,
respectively. Formula \textsc{Accepts}$_{M,x}$ involves the single atomic
program \textsc{Next}, atomic propositions \textsc{Symbol}$_{i}^{a}$ and 
\textsc{State}$_{i}^{q}$ for each symbol $a$ in $M$'s tape alphabet, $q$ a
state of $M$'s finite control, and $0\leq i\leq n$, and an atomic
proposition \textsc{Accept}. Then \textsc{Accepts}$_{M,x}$ has the property
that any satisfying Kripke frame encodes an accepting computation of $M$ on $%
x$. In any such Kripke frame, states $u$ represent configurations of $M$
occurring in the computation tree of $M$ on input $x=x_{1},\cdots ,x_{n}$;
the truth values of \textsc{Symbol}$_{i}^{a}$ and \textsc{State}$_{i}^{q}$
at state $u$ give the tape contents, current state, and tape head position
in the configuration corresponding to $u$. The truth value of \textsc{Accept}
will be $\mathbf{1}$ iff the computation beginning in state $u$ is an
accepting computation according to the rules of alternating Turing machine
acceptance. Then $M$ accepts $x$ iff \textsc{Accepts}$_{M,x}$ is
satisfiable. \textsc{Accepts}$_{M,x}$ is EXPTIME-complete (cf. \cite
{DynamicLogic}: Theorem 8.5) and hence so is the negation $\overline{\text{%
\textsc{Accepts}}}_{M,x}$.

\subsection{Formal definition}

Let $\Gamma $ be $M$'s tape alphabet and $Q$ the set of states; there is a
distinguished start-state $s\in Q$ and left/right annotations $\ell ,r\notin
Q$. Let $U\subseteq Q$ \ and $E\subseteq Q$ be the sets of universal and
existential states, respectively. Thus $U\cup E=Q$ and $U\cap E=\emptyset $.
For each pair $\left( q,a\right) \in Q\times \Gamma $ let $\Delta \left(
q,a\right) $ be the set of all triples describing a possible action when
scanning $a$ in state $q$. Working in $\mathcal{L}$ we let 
\begin{equation*}
\fbox{\textsc{Accepts}$_{M,x}:=\text{\textsc{Acc}}\wedge \!\text{\textsc{%
Start\negthinspace }}\wedge \!\left[ \text{\textsc{Next*}}\right] \!\left( 
\text{\textsc{Config\negthinspace }}\wedge \!\text{\textsc{Move\negthinspace 
}}\wedge \!\text{\textsc{Acceptance}}\right) $}
\end{equation*}
where \textsc{Acc(ept), State}$_{\left( -\right) }^{\left( -\right) }$%
\textsc{, Symbol}$_{\left( -\right) }^{\left( -\right) }\in V\!AR$, \textsc{%
Next\thinspace }$\in PRO$ \smallskip while \textsc{Start}, \textsc{Config}, 
\textsc{Move} and \textsc{Acceptance} are defined as follows.

1. \textsc{Start }$:=$ \textsc{State}$_{0}^{s}\wedge \underset{1\leq i\leq n%
}{\bigwedge }$\textsc{Symbol}$_{i}^{x_{i}}\wedge \underset{n+1\leq i\leq
n^{k}}{\bigwedge }$\textsc{Symbol}$_{i}^{\Box }$.

2. \textsc{Config }$:=$

$\underset{0\leq i\leq n+1}{\bigwedge }\underset{a\in \Gamma }{\bigvee }%
\!\left( \!\text{\textsc{Symbol}}_{i}^{a}\!\wedge \underset{a\neq b\in
\Gamma }{\bigwedge }\overline{\text{\textsc{Symbol}}}_{i}^{b}\!\right)
\!\wedge $ \textsc{Symbol}$_{0}^{\vdash }\wedge \overline{\text{\textsc{%
Symbol}}}_{n+1}^{\dashv }\wedge $

$\underset{0\leq i\leq n+1}{\bigvee }\underset{q\in Q}{\bigvee }$\textsc{%
State}$_{i}^{q}\wedge \underset{0\leq i\leq n+1}{\bigwedge }\underset{q\in
Q\cup \left\{ \ell ,r\right\} }{\bigvee }\left( \text{\textsc{State}}%
_{i}^{q}\wedge \underset{q\neq p\in Q\cup \left\{ l,r\right\} }{\bigwedge }%
\overline{\text{\textsc{State}}}_{i}^{p}\right) \wedge $

$\underset{0\leq i\leq n}{\bigwedge }\underset{q\in Q\cup \left\{ \ell
\right\} }{\bigwedge }\!\left( \overline{\text{\textsc{State}}}_{i}^{q}\vee 
\text{\textsc{State}}_{i+1}^{\ell }\right) \wedge \underset{0\leq i\leq n+1}{%
\bigwedge }\underset{q\in Q\cup \left\{ r\right\} }{\bigwedge }\!\left( 
\overline{\text{\textsc{State}}}_{i}^{q}\vee \text{\textsc{State}}%
_{i-1}^{r}\right) $.\smallskip \smallskip

3. \textsc{Move }$:=$

$\underset{0\leq i\leq n+1}{\bigwedge }\left( \overline{\text{\textsc{State}}%
}_{i}^{\ell }\vee \overline{\text{\textsc{State}}}_{i}^{r}\vee \underset{%
a\in \Gamma }{\bigwedge }\left( \overline{\text{\textsc{Symbol}}}%
_{i}^{a}\vee \left[ \text{\textsc{Next}}\right] \text{\textsc{Symbol}}%
_{i}^{a}\right) \right) \wedge $

$\underset{0\leq i\leq n+1}{\bigwedge }\underset{\QATOP{a\in \Gamma }{q\in Q}%
}{\bigwedge }\left( 
\begin{array}{c}
\overline{\text{\textsc{Symbol}}}_{i}^{a}\vee \overline{\text{\textsc{State}}%
}_{i}^{q}\vee \\ 
\left( 
\begin{array}{c}
\underset{\left( p,b,d\right) \in \Delta \left( q,a\right) }{\bigwedge }%
\left\langle \text{\textsc{Next}}\right\rangle \left( \text{\textsc{Symbol}}%
_{i}^{b}\wedge \text{\textsc{State}}_{i+d}^{p}\right) \wedge \\ 
\left[ \text{\textsc{Next}}\right] \left( \underset{\left( p,b,d\right) \in
\Delta \left( q,a\right) }{\bigvee }\left( \text{\textsc{Symbol}}%
_{i}^{b}\wedge \text{\textsc{State}}_{i+d}^{p}\right) \right)
\end{array}
\right)
\end{array}
\right) $.\smallskip

4. \textsc{Acceptance }$:=$

$\left( \underset{0\leq i\leq n+1}{\bigwedge }\underset{q\in E}{\bigwedge }%
\overline{\text{\textsc{State}}}_{i}^{q}\vee \left( \left( \text{\textsc{Acc}%
}\vee \left[ \text{\textsc{Next}}\right] \overline{\text{\textsc{Acc}}}%
\right) \wedge \left( \overline{\text{\textsc{Acc}}}\vee \left\langle \text{%
\textsc{Next}}\right\rangle \text{\textsc{Acc}}\right) \right) \right)
\wedge $

$\left( \underset{0\leq i\leq n+1}{\bigwedge }\underset{q\in U}{\bigwedge }%
\overline{\text{\textsc{State}}}_{i}^{q}\vee \left( \left( \text{\textsc{Acc}%
}\vee \left\langle \text{\textsc{Next}}\right\rangle \overline{\text{\textsc{%
Acc}}}\right) \wedge \left( \overline{\text{\textsc{Acc}}}\vee \left[ \text{%
\textsc{Next}}\right] \text{\textsc{Acc}}\right) \right) \right) $.\smallskip

Hence 
\begin{equation*}
\fbox{$\left. 
\begin{array}{c}
\overline{\text{\textsc{Accepts}}}_{M,x}=\smallskip \\ 
\overline{\text{\textsc{Acc}}}\vee \overline{\text{\textsc{Start}}}\vee
\!\left\langle \text{\textsc{Next*}}\right\rangle \!\left( \overline{\text{%
\textsc{Config}}}\vee \!\overline{\text{\textsc{Move}}}\vee \!\overline{%
\text{\textsc{Acceptance}}}\right)
\end{array}
\right. $}
\end{equation*}
is equivalent to \fbox{$\left\langle p^{\ast }\right\rangle \!A\vee Z$} for $%
\fbox{$p=\ $\textsc{Next}, $Z=\overline{\text{\textsc{Acc}}}\vee \overline{%
\text{\textsc{Start}}}\in \mathcal{L}_{00}$}$ and

\begin{equation*}
\fbox{$\left. 
\begin{array}{c}
A=F_{0}\vee \left( F_{1}\wedge \left[ p\right] G_{1}\right) \vee \left(
F_{2}\wedge \left[ p\right] G_{2}\right) \vee \underset{\alpha \in R}{%
\bigvee }\left( F_{\alpha }\wedge \left[ p\right] G_{\alpha }\right) \vee
\left( F_{3}\wedge \left\langle p\right\rangle G_{3}\right) \\ 
\vee \left( F_{4}\wedge \left\langle p\right\rangle G_{4}\right) \vee 
\underset{\beta \in T}{\bigvee }\left( F_{\beta }\wedge \left\langle
p\right\rangle G_{\beta }\right) \vee \underset{\gamma \in S}{\bigvee }%
\left( F_{\gamma }\wedge \left\langle p\right\rangle G_{\gamma }\right) \in 
\mathrm{BDNF}
\end{array}
\right. $}
\end{equation*}

where:

$R=\left\{ \alpha =\left( i,a,q,\left( p,b,d\right) \right) \in \left[ n+1%
\right] \times \Gamma \times Q\times \Delta \left( q,a\right) \right\} $,

$T=\left\{ \beta =\left( i,a\right) \in \left[ n+1\right] \times \Gamma
\right\} $,

$S=\left\{ \gamma =\left( i,a,q\right) \in \left[ n+1\right] \times \Gamma
\times Q\right\} $,

$F_{0}=\underset{0\leq i\leq n+1}{\bigvee }\underset{a\in \Gamma }{\bigwedge 
}\!\left( \!\overline{\text{\textsc{Symbol}}}_{i}^{a}\!\vee \underset{a\neq
b\in \Gamma }{\bigvee }\text{\textsc{Symbol}}_{i}^{b}\!\right) \!\vee 
\overline{\text{\textsc{Symbol}}}_{0}^{\vdash }\vee \,$\textsc{Symbol}$%
_{n+1}^{\dashv }$

$\vee \underset{0\leq i\leq n+1}{\bigwedge }\underset{q\in Q}{\bigwedge }%
\overline{\text{\textsc{State}}}_{i}^{q}\vee \underset{0\leq i\leq n+1}{%
\bigvee }\underset{q\in Q\cup \left\{ \ell ,r\right\} }{\bigwedge }\left( 
\overline{\text{\textsc{State}}}_{i}^{q}\vee \underset{q\neq p\in Q\cup
\left\{ l,r\right\} }{\bigvee }\text{\textsc{State}}_{i}^{p}\right) \vee $

$\underset{0\leq i\leq n}{\bigvee }\underset{q\in Q\cup \left\{ \ell
\right\} }{\bigvee }\!\left( \overline{\text{\textsc{State}}}_{i}^{q}\wedge 
\text{\textsc{State}}_{i+1}^{\ell }\right) \vee \underset{0\leq i\leq n+1}{%
\bigvee }\underset{q\in Q\cup \left\{ r\right\} }{\bigvee }\!\left( 
\overline{\text{\textsc{State}}}_{i}^{q}\wedge \text{\textsc{State}}%
_{i-1}^{r}\right) $,

\smallskip $F_{1}=\underset{0\leq i\leq n+1}{\bigwedge }\underset{q\in E}{%
\bigwedge }\overline{\text{\textsc{State}}}_{i}^{q}\wedge \ $\textsc{Acc}, $%
G_{1}=\overline{\text{\textsc{Acc}}}$,

$F_{2}=\underset{0\leq i\leq n+1}{\bigwedge }\underset{q\in U}{\bigwedge }%
\overline{\text{\textsc{State}}}_{i}^{q}\wedge \overline{\text{\textsc{Acc}}}
$, $G_{2}=\ $\textsc{Acc},

$F_{3}=$\textsc{\thinspace }$\underset{0\leq i\leq n+1}{\bigwedge }\underset{%
q\in E}{\bigwedge }\overline{\text{\textsc{State}}}_{i}^{q}\wedge \ 
\overline{\text{\textsc{Acc}}}$, $G_{3}=\ $\textsc{Acc},

$F_{4}=\underset{0\leq i\leq n+1}{\bigwedge }\underset{q\in U}{\bigwedge }%
\overline{\text{\textsc{State}}}_{i}^{q}\wedge $\textsc{Acc}, $G_{4}=%
\overline{\text{\textsc{Acc}}}$,

$F_{\alpha }=\ $\textsc{Symbol}$_{i}^{a}\wedge \overline{\text{\textsc{State}%
}}_{i}^{q}$, $G_{\alpha }=\overline{\text{\textsc{Symbol}}}_{i}^{b}\vee 
\overline{\text{\textsc{State}}}_{i+d}^{p}$,

$F_{\beta }=\ $\textsc{State}$_{i}^{\ell }\wedge \ $\textsc{State}$%
_{i}^{r}\wedge \ $\textsc{Symbol}$_{i}^{a}$, $G_{\beta }=\overline{\text{%
\textsc{Symbol}}}_{i}^{a}$,

$F_{\gamma }=\ $\textsc{Symbol}$_{i}^{a}\wedge \overline{\text{\textsc{State}%
}}_{i}^{q}$, $G_{\gamma }=\underset{\left( p,b,d\right) \in \Delta \left(
q,a\right) }{\bigvee }\left( \overline{\text{\textsc{Symbol}}}_{i}^{b}\vee 
\overline{\text{\textsc{State}}}_{i+d}^{p}\right) $.$\ $

Note that $\left| \left\langle p^{\ast }\right\rangle \!A\vee Z\right| $ is
at most quadratic in $\left| \text{\textsc{Accepts}}_{M,x}\right| $.

------------------------------------------------------------------------------------------

\end{document}